\newtheorem{theorem}{Theorem}[section]
\newtheorem{corollary}[theorem]{Corollary}
\newtheorem{lemma}{Lemma}[section]
\theoremstyle{remark}
\newtheorem{remark}[theorem]{Remark}
\def\[#1\]{\begin{align}#1\end{align}}		
\def\(#1\){\begin{align*}#1\end{align*}} 	
\DeclareMathOperator*{\argmax}{arg\,max} 
\DeclareMathOperator*{\argmin}{arg\,min} 
\newcommand{\phip}{\vec\phi'}
\newcommand{\phipp}{\vec\phi''}
\newcommand{\phippp}{ \vec\phi^{\prime\prime\prime}}
\crefname{equation}{Eq.}{Eqs.}
\crefname{lemma}{Lemma}{Lemmas}
\crefname{prop}{Proposition}{Propositions}
\crefname{corollary}{Corollary}{Corollaries}
\crefname{theorem}{Theorem}{Theorems}
\crefname{alg}{Algorithm}{Algorithms}
\newcommand{\R}{\mathbb{R}}
\newcommand{\E}{\mathbb{E}}
\newcommand{\pr}{\mathbb{P}}
\newcommand{\convP}{\overset{P}{\to}}
\newcommand{\iid}{\textrm{i.i.d.}\xspace}
\newcommand{\dist}{\sim}
\newcommand{\distiid}{\overset{\textrm{\tiny\iid}}{\dist}}
\newcommand{\distind}{\overset{\textrm{\tiny\textrm{indep}}}{\dist}}
\newcommand{\inv}{^{-1}}
\newcommand{\diag}{\mathrm{diag}}
\newcommand{\trace}{\mathrm{tr}}
\newcommand{\methodname}{LR-GLM\xspace}
\newcommand{\methodnamelap}{LR-Laplace\xspace}
\icmltitlerunning{\methodname: fast GLMs with low-rank approximations}
\newcommand{\abovecapvspace}{\vspace{-0.1in}}
\newcommand{\belowcapvspace}{\vspace{-0.1in}}
\begin{document}

\twocolumn[
\icmltitle{\methodname: High-Dimensional Bayesian Inference Using Low-Rank Data Approximations}



\icmlsetsymbol{equal}{*}

\begin{icmlauthorlist}
\icmlauthor{Brian L.~Trippe}{MIT}
\icmlauthor{Jonathan H.~Huggins}{Harvard}
\icmlauthor{Raj Agrawal}{MIT}
\icmlauthor{Tamara Broderick}{MIT}
\end{icmlauthorlist}

\icmlaffiliation{MIT}{Computer Science and Artificial Intelligence Laboratory, Massachusetts Institute of Technology, Cambridge, MA}
\icmlaffiliation{Harvard}{Department of Biostatistics, Harvard, Cambridge, MA}

\icmlcorrespondingauthor{Brian L.~Trippe}{btrippe@mit.edu}

\icmlkeywords{Machine Learning, ICML}

\vskip 0.3in
]



\printAffiliationsAndNotice{}  

\begin{abstract}

Due to the ease of modern data collection, applied statisticians often have access to a large set of covariates that they wish to relate to some observed outcome. 
Generalized linear models (GLMs) offer a particularly interpretable framework for such an analysis. 
In these high-dimensional problems, the number of covariates is often large relative to the number of observations, so we face non-trivial inferential uncertainty; 
a Bayesian approach allows coherent quantification of this uncertainty. 
Unfortunately, existing methods for Bayesian inference in GLMs require running times roughly cubic in parameter dimension, 
and so are limited to settings with at most tens of thousand parameters. 
We propose to reduce time and memory costs with a low-rank approximation of the data in an approach we call \methodname.
When used with the Laplace approximation or Markov chain Monte Carlo, \methodname provides a full Bayesian posterior approximation 
and admits running times reduced by a full factor of the parameter dimension. 
We rigorously establish the quality of our approximation and show how the choice of rank allows a tunable 
computational--statistical trade-off. 
Experiments support our theory and demonstrate the efficacy of \methodname on real large-scale datasets.

\end{abstract}

\section{Introduction}\label{sec:intro} 

Scientists, engineers, and social scientists are often interested in characterizing the relationship between an outcome and a set of covariates, rather than purely optimizing predictive accuracy.  
For example, a biologist may wish to understand the effect of natural genetic variation on the presence of a disease or a medical practitioner may wish to understand the effect of a patient's history
on their future health.  
In these applications and countless others, the relative ease of modern data collection methods often yields particularly large sets of covariates for data analysts to study.
While these rich data should ultimately aid understanding, they pose a number of practical challenges for data analysis.
One challenge is how to discover interpretable relationships between the covariates and the outcome. 
Generalized linear models (GLMs) are widely used in part because they provide such interpretability -- as well as the flexibility to accommodate a variety of different 
outcome types (including binary, count, and heavy-tailed responses). 
A second challenge is that, unless the number of data points is substantially larger than the number of covariates, there is likely to be non-trivial uncertainty about 
these relationships. 

A Bayesian approach to GLM inference provides the desired coherent uncertainty quantification as well as favorable calibration properties \citep[Theorem 1]{Dawid1982}.
Bayesian methods additionally provide the ability to improve inference by incorporating expert information and sharing power across experiments.
Using Bayesian GLMs leads to computational challenges, however. 
Even when the Bayesian posterior can be computed exactly, conjugate inference costs $O(N^2 D)$ in the case of $D \gg N$. And most models are sufficiently complex as to require expensive approximations.

In this work, we propose to reduce the effective dimensionality of the feature set as a pre-processing step to speed up Bayesian inference, while still performing inference in the original parameter space;
in particular, we show that low-rank descriptions of the data permit fast Markov chain Monte Carlo (MCMC) samplers and Laplace approximations of the Bayesian posterior for the full feature set.
We motivate our proposal with a conjugate linear regression analysis in the case where the data are exactly low-rank.
When the data are merely approximately low-rank, our proposal is an approximation.
Through both theory and experiments, we demonstrate that low-rank data approximations provide a number of properties that are desirable in an efficient posterior approximation method:
(1) \emph{soundness:} 
our approximations admit error bounds directly on the quantities that practitioners report as well as practical interpretations of those bounds;
(2) \emph{tunability:} the choice of the rank of the approximation defines a tunable trade-off between the computational demands of inference and statistical precision; and
(3) \emph{conservativeness:}  our approximation never reports less uncertainty than the exact posterior, where uncertainty is quantified via either posterior variance or information entropy.
Together, these properties allow a practitioner to choose how much information to extract from the data on the basis of computational resources while being able to confidently trust the conclusions of their analysis.


\section{Bayesian inference in GLMs}\label{sec:background}

Suppose we have $N$ data points $\{(x_n, y_n)\}_{n=1}^{N}$. 
We collect our covariates, where $x_n$ has dimension $D$, in the design matrix $X \in \R^{N \times D}$ and our responses in the column vector $Y \in \R^N$. 
Let $\beta \in \R^{D}$ be an unknown parameter characterizing the relationship between the covariates and the response for each data point.
In particular, we take $\beta$ to parameterize a GLM likelihood $p(Y \mid X, \beta) = p(Y \mid X \beta)$. 
That is, $\beta_d$ describes the effect size of the $d$th covariate 
(e.g., the influence of a non-reference allele on an individual's height in a genomic association study). 
Completing our Bayesian model specification, we assume a prior $p(\beta)$, which describes our knowledge of $\beta$ before seeing data. 
Bayes' theorem gives the Bayesian posterior  $p(\beta \mid Y, X) = p(\beta) p(Y \mid X \beta) / \int p(\beta') p(Y \mid X \beta') d\beta'$,
which captures the updated state of our knowledge after observing data. 
We often summarize the $\beta$ posterior via its mean and covariance.
In all but the simplest settings, though, computing these posterior summaries is analytically intractable, and these quantities must be approximated.

\begin{table}[t]
    \caption{Time complexities of naive inference and \methodname with a rank $M$ approximation when $D \ge N$.}\label{table:time_complexities}
\vspace{-0.15in}
\begin{center}
\begin{small}
\begin{sc}
\begin{tabular}{lcccr}
\toprule
Method & Naive & LR-GLM & speedup \\
\midrule
    Laplace              & $O(N^2 D)$ & $O(NDM)$ & ${N}/{M}$ \\
    MCMC (iter.) & $O(ND)$ & $O(NM + DM)$ & ${N}/{M}$ \\
\bottomrule
\end{tabular}
\end{sc}
\end{small}
\end{center}
\vspace{-0.25in}
\end{table}

\textbf{Related work.}
In the setting of large $D$ and large $N$, existing Bayesian inference methods for GLMs may lead to unfavorable trade-offs between accuracy and computation; see \Cref{sec:related_work} for further discussion. 
While Markov chain Monte Carlo (MCMC) can approximate Bayesian GLM posteriors arbitrarily well given enough time, standard methods can be slow, with $O(DN)$ time per likelihood evaluation.
Moreover, in practice, mixing time may scale poorly with dimension and sample size; algorithms thus require many iterations and hence many likelihood evaluations.
Subsampling MCMC methods can speed up inference, but they are effective only with tall data~\citep[$D \ll N$;][]{bardenet2017markov}. 

An alternative to MCMC is to use a deterministic approximation such as the Laplace approximation \citep[Chap. 4.4]{Bishop2006}, integrated nested Laplace 
approximation \citep{rue2009approximate}, variational Bayes \citep[VB;][]{blei2017variational}, or an alternative likelihood
approximation \citep{Huggins2017a,campbell2017automated,huggins2016coresets}.
However these methods are computationally efficient only when $D \ll N$ (and in some cases also when $N \ll D$). 
For example, the Laplace approximation requires inverting the Hessian, which uses $O(\min(N,D)ND)$ time (\Cref{sec:fast_inversions}). 
Improving computational tractability by, for example, using a mean field approximation with VB or a factorized Laplace approximation can produce
substantial bias and uncertainty underestimation \citep{mackay2003information,turner2011two}.

A number of papers have explored using random projections and low-rank approximations in both Bayesian~\citep{lee2013bayesian,spantini2015optimal,guhaniyogi2015bayesian,geppert2017random}
and non-Bayesian~\citep{zhang2014random,wang2017sketching} settings. 
The Bayesian approaches have a variety of limitations. E.g., \citet{lee2013bayesian,geppert2017random,spantini2015optimal} give results only for certain conjugate Gaussian models. And 
\citet{guhaniyogi2015bayesian} provide asymptotic guarantees for prediction but do not address parameter estimation.

See \Cref{sec:results} for a demonstration of the empirical disadvantages of mean field VB, factored Laplace, and random projections in posterior inference.


\section{\methodname}\label{sec:proposal}

\begin{figure*}[!ht]
    \centering
    {\includegraphics[width=0.98\linewidth]{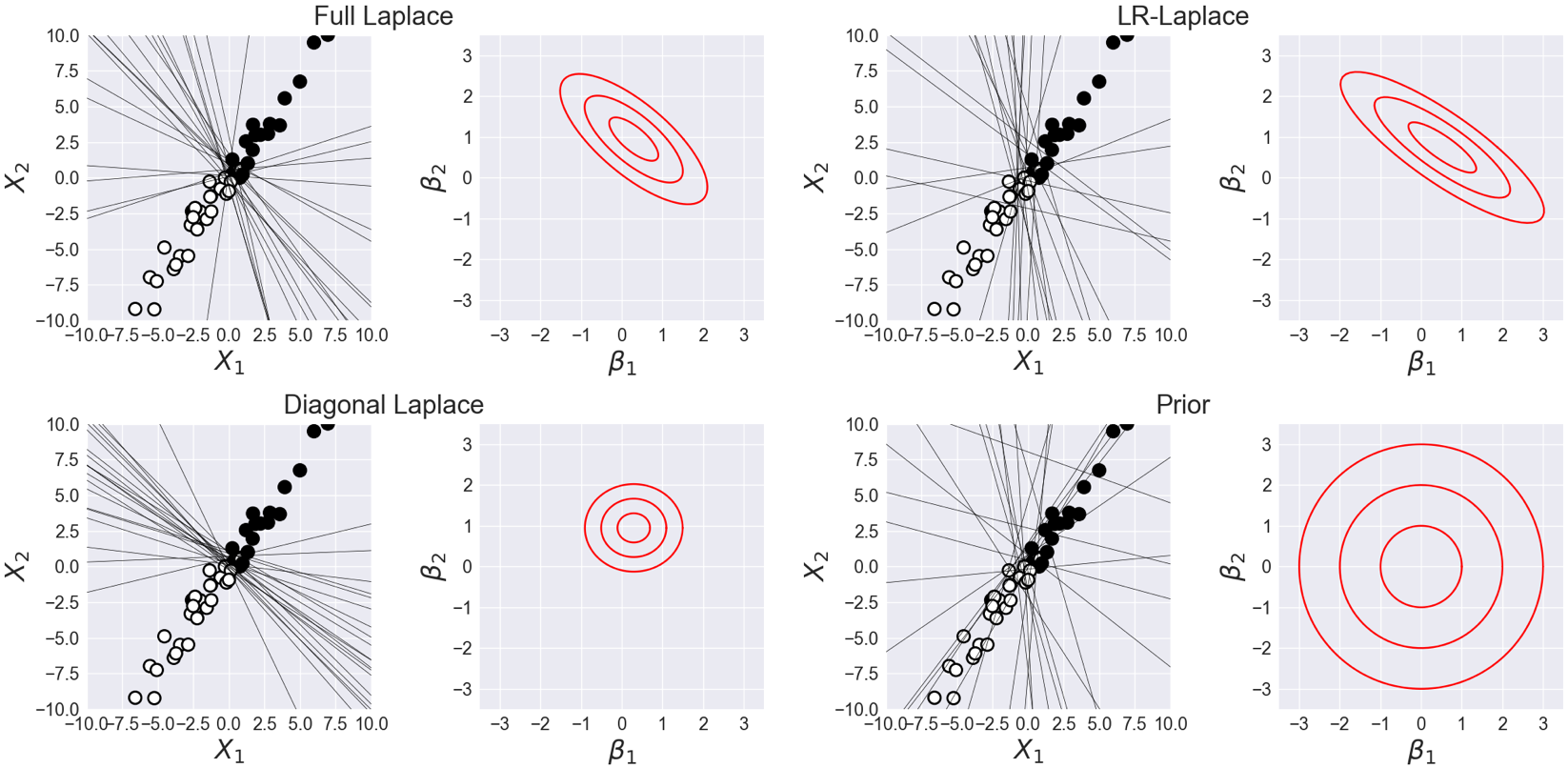}}
    \abovecapvspace
    {\caption{\methodnamelap with a rank-$1$ data approximation closely matches the Bayesian posterior of a toy logistic regression model.
In each pair of plots, the left panel depicts the same 2-dimensional dataset with points in two classes (black and white dots) and decision boundaries (black lines) separating the two classes, which are sampled from the given posterior approximation (see title for each pair).
In the right panel, the red contours represent the marginal posterior approximation of the parameter $\beta$ (a bias parameter is integrated out).}\label{fig:comparison_toy_problem}}
    \belowcapvspace
\end{figure*}

The intuition for our low-rank GLM (\methodname) approach is as follows. Supervised learning problems in high-dimensional settings often exhibit strongly correlated covariates \citep{udell2019nice}. In these cases, the data may provide little information about the parameter along certain directions of parameter space.
This observation suggests the following procedure: first identify a relatively lower-dimensional subspace within which the data most directly inform the posterior, and then perform the data-dependent computations of posterior inference (only) within this subspace, at lower computational expense.
In the context of GLMs with Gaussian priors, the singular value decomposition (SVD) of the design matrix $X$ provides a natural and effective mechanism for identifying a subspace.
We will see that this perspective gives rise to simple, efficient, and accurate approximate inference procedures.
In models with non-Gaussian priors the approximation enables more efficient inference by facilitating faster likelihood evaluations.

Formally, the first step of \methodname is to choose an integer $M$ such that $0 < M < D$. For any real design matrix $X$, its SVD exists and may be written as
$$ 
X^\top  = U\diag(\mathbf{\lambda}) V^\top  + \bar U \diag(\mathbf{\bar \lambda}) \bar V^\top,
$$
where $U\in \R^{D \times M}, \bar U \in \R^{D \times (D-M)}, V\in \R^{N \times M}$, and $\bar V \in \R^{N \times (D-M)}$ are matrices of orthonormal rows, and $\mathbf{\lambda} \in \R^{M}$ and $\mathbf{\bar \lambda} \in \R^{D-M}$ are vectors of non-increasing singular values $\lambda_1 \geq \cdots \geq \lambda_M \geq \bar \lambda_1 \geq \cdots \geq \bar \lambda_{D-M}\geq 0$.
We replace $X$ with the low-rank approximation $XUU^\top $. Note that the resulting posterior approximation $\tilde p(\beta\mid X, Y)$ is still a distribution over the full $D$-dimensional $\beta$ vector: 
\begin{align}\label{eqn:approximation}
\tilde p(\beta\mid X, Y) :=  \frac{p(\beta) p(Y \mid XUU^\top  \beta)}{\int p(\beta') p(Y \mid XUU^\top  \beta') d\beta'}
\end{align}

In this way, we cast low-rank data approximations for approximate Bayesian inference as a likelihood approximation.
This perspective facilitates our analysis of posterior approximation quality and provides the flexibility either to use the likelihood approximation in an otherwise exact MCMC algorithm or to make additional fast approximations such as the Laplace approximation.

We let \emph{\methodnamelap} denote the combination of \methodname and the Laplace approximation.
\Cref{fig:comparison_toy_problem} illustrates \methodnamelap on a toy problem and compares it to full Laplace, the prior, and diagonal Laplace.
Diagonal Laplace refers to a factorized Laplace approximation in which the Hessian of the log posterior is approximated with only its diagonal.
 While this example captures some of the essence of our proposed approach, we emphasize that our focus in this paper is on problems that are high-dimensional.

\section{Low-rank data approximations for conjugate Gaussian regression}\label{sec:conjugate_case}

We now consider the quality of approximate Bayesian inference using our \methodname approach in the case of conjugate Gaussian regression. 
We start by assuming that the data is exactly low rank since it most cleanly illustrates the computational gains from \methodname. 
We then move on to the case of conjugate regression with approximately low-rank data and rigorously characterize the quality of our approximation via interpretable error bounds. 
We consider non-conjugate GLMs in \Cref{sec:non_conjugate}. We defer all proofs to the Appendix.

\subsection{Conjugate regression with exactly low-rank data}\label{sec:conj_low_rank}

Classic linear regression fits into our GLM likelihood framework with $p(Y | X, \beta) = \mathcal{N}(Y | X \beta, (\tau I_{N})\inv)$, where $\tau > 0$ is the precision and $I_{N}$ is the identity matrix of size $N$. For the conjugate prior $p(\beta) = \mathcal{N}(\beta | 0, \Sigma_{\beta})$, we can write the posterior in closed form: 
$
p(\beta | Y, X) = \mathcal{N}(\beta | \mu_N, \Sigma_N),
$
where 
$
\Sigma_N := ( \Sigma_{\beta}\inv + \tau X^{T} X)\inv \text{ and } \mu_N := \tau \Sigma_N X^\top Y.
$

While conjugacy avoids the cost of approximating Bayesian inference, it does not avoid the often prohibitive $O(ND^{2}+D^{3})$ cost of calculating $\Sigma_N$ (which requires computing and then inverting $\Sigma_N\inv$) and the $O(D^2)$ memory demands of storing it.
In the $N\ll D$ setting, these costs can be mitigated by using the Woodbury formula to obtain $\mu_N$ and $\Sigma_N$ in $O(N^2D)$ time with $O(ND)$ memory (\Cref{sec:fast_inversions}).
But this alternative becomes computationally prohibitive as well when both $N$ and $D$ are large (e.g., $D\approx N >{20,000}$).

Now suppose that $X$ is rank $M \ll  \min(D, N)$ and can therefore be written as $X = X U U^{T}$ exactly, 
where $U \in \R^{D \times M}$ denotes the top $M$ right singular vectors of $X$.
Then, if $\Sigma_{\beta} = \sigma^{2}_{\beta} I_{D}$ and $1_M$ is the ones vector of length $M$, we can write (see \Cref{sec:low_rank_blr_proof} for details)
\begin{align}
\hspace{-0.5cm}
\begin{split}
    \Sigma_N = \sigma_\beta^2 &\left\{
		I - U\diag \left(
			\frac{\tau \lambda \odot \lambda}{\sigma_\beta^{-2} 1_M + \tau \lambda \odot \lambda}
			\right)
		U^\top  \right\} \\
	\quad \mathrm{and} \quad
        \mu_N &= U\diag \Bigg( \frac{\tau{\lambda}}{  \sigma_\beta^{-2} 1_M + \tau \lambda \odot \lambda}\Bigg) V^\top Y, \label{eqn:low_rank_blr}
\end{split}  
\end{align}
where multiplication ($\odot$) and division in the diag input are component-wise across the vector $\lambda$. \Cref{eqn:low_rank_blr} provides a more computationally efficient route to inference. 
The singular vectors in $U$ may be obtained in $O(ND \log M)$ time via a randomized SVD \citep{Halko2009} or in $O(NDM)$ time using more standard 
deterministic methods~\cite{press2007numerical}. 
The bottleneck step is finding $\lambda$ via $\diag(\lambda \odot \lambda)=U^\top X^\top XU$, which can be computed in $O(NDM)$ time. 
As for storage, this approach requires keeping only $U$, $\lambda$, and $V^\top Y$, which takes just $O(MD)$ space.
In sum, utilizing low-rank structure via \Cref{eqn:low_rank_blr} provides an order $\min(N,D)/M$-fold improvement in both time and memory over more naive inference.

\subsection{Conjugate regression with low-rank approximations}

While the case with exactly low-rank data is illustrative, real data are rarely exactly low rank.
So, more generally, \methodname will yield an approximation $\mathcal{N}(\beta | \tilde{\mu}_N, \tilde{\Sigma}_{N})$ to the posterior $\mathcal{N}(\beta | \mu_N, \Sigma_N)$, rather than the exact posterior as in \Cref{sec:conj_low_rank}.
We next provide upper bounds on the error from our approximation. Since practitioners typically report posterior means and covariances, we focus on how well \methodname approximates these functionals. 

\begin{theorem} \label{thm:bayes_lin_reg_approx_quality}
For conjugate Bayesian linear regression, the \methodname approximation \Cref{eqn:approximation} satisfies
\begin{align}\label{eqn:blr_mean_error}
 \| \tilde \mu_N- \mu_N \|_2 
 	&\le  \frac{ \bar \lambda_1 \big( \bar \lambda_1 \| \bar U^\top  \tilde \mu_N \|_2 +  \|\bar V^\top  Y\|_2\big)}{\| \tau\Sigma_\beta \|_2\inv+ \bar \lambda_{D-M}^2} \\
\mathrm{and} \quad \Sigma_N\inv - \tilde \Sigma_N\inv
	&=  \tau (X^\top X-UU^\top  X^\top XUU^\top ).
\end{align}
In particular, $\| \Sigma_N\inv - \tilde \Sigma_N\inv \|_2 = \tau \bar \lambda_1^2$. 
\end{theorem}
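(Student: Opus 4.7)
The plan is to handle the three displayed claims in order, dispatching the covariance identity and its operator-norm bound first via direct algebra and SVD inspection, and then proving the mean bound through an SVD residual computation combined with a smallest-eigenvalue estimate via Weyl's inequality. From the conjugate formulas $\Sigma_N=(\Sigma_\beta^{-1}+\tau X^\top X)^{-1}$ and $\mu_N=\tau\Sigma_N X^\top Y$, the definition \eqref{eqn:approximation} gives $\tilde\Sigma_N^{-1}=\Sigma_\beta^{-1}+\tau UU^\top X^\top XUU^\top$ and $\tilde\mu_N=\tau\tilde\Sigma_N UU^\top X^\top Y$, and subtracting the two precisions immediately yields the second displayed equation. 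For the operator-norm claim, I would substitute the SVD $X^\top=U\diag(\lambda)V^\top+\bar U\diag(\bar\lambda)\bar V^\top$ and use the orthogonality relations $U^\top U=I$, $\bar U^\top U=0$, $V^\top V=I$, $V^\top\bar V=0$, $\bar V^\top\bar V=I$ to obtain $X^\top X=U\diag(\lambda\odot\lambda)U^\top+\bar U\diag(\bar\lambda\odot\bar\lambda)\bar U^\top$ and $UU^\top X^\top XUU^\top=U\diag(\lambda\odot\lambda)U^\top$, so that the difference is $\bar U\diag(\bar\lambda\odot\bar\lambda)\bar U^\top$, which is PSD with operator norm $\bar\lambda_1^2$ because $\bar U$ has orthonormal columns.

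For the mean bound, which is the main obstacle, the idea is to rewrite $\Sigma_N^{-1}(\mu_N-\tilde\mu_N)$ in a form that exposes only the discarded subspace $\bar U$. Combining the two normal equations $\Sigma_N^{-1}\mu_N=\tau X^\top Y$ and $\tilde\Sigma_N^{-1}\tilde\mu_N=\tau UU^\top X^\top Y$ with the precision difference just computed, I get $\Sigma_N^{-1}(\mu_N-\tilde\mu_N)=\tau(X^\top Y-UU^\top X^\top Y)-\tau\bar U\diag(\bar\lambda\odot\bar\lambda)\bar U^\top\tilde\mu_N$. A second application of the SVD gives $X^\top Y-UU^\top X^\top Y=\bar U\diag(\bar\lambda)\bar V^\top Y$, so the right-hand side factors as $\tau\bar U\diag(\bar\lambda)[\bar V^\top Y-\diag(\bar\lambda)\bar U^\top\tilde\mu_N]$. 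Taking $\ell_2$ norms, applying the triangle inequality, and using $\|\bar U\diag(\bar\lambda)v\|_2\le\bar\lambda_1\|v\|_2$, I obtain $\|\mu_N-\tilde\mu_N\|_2\le\tau\|\Sigma_N\|_2\bar\lambda_1(\|\bar V^\top Y\|_2+\bar\lambda_1\|\bar U^\top\tilde\mu_N\|_2)$.

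The remaining and most delicate step is to bound $\|\Sigma_N\|_2=1/\lambda_{\min}(\Sigma_\beta^{-1}+\tau X^\top X)$. I would apply Weyl's inequality for sums of symmetric matrices, $\lambda_{\min}(\Sigma_\beta^{-1}+\tau X^\top X)\ge\lambda_{\min}(\Sigma_\beta^{-1})+\tau\lambda_{\min}(X^\top X)$, together with $\lambda_{\min}(X^\top X)\ge\bar\lambda_{D-M}^2$ read off from the SVD decomposition of $X^\top X$ above, to obtain $\|\Sigma_N\|_2\le[\tau(\|\tau\Sigma_\beta\|_2^{-1}+\bar\lambda_{D-M}^2)]^{-1}$. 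Substituting into the previous display yields precisely the bound \eqref{eqn:blr_mean_error}; it remains valid in the rank-deficient case $\bar\lambda_{D-M}=0$, where the denominator collapses to $\|\tau\Sigma_\beta\|_2^{-1}$ and the bound degrades gracefully to one controlled by the prior covariance alone.
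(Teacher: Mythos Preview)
Your proof is correct and follows essentially the same approach as the paper: compute the precision difference via the SVD, bound the residual $\Sigma_N^{-1}(\mu_N-\tilde\mu_N)$ (which is exactly the gradient of the exact log posterior at $\tilde\mu_N$), and divide by the smallest eigenvalue of $\Sigma_N^{-1}$ via Weyl's inequality. The only presentational difference is that the paper packages the last step as a general strong-convexity lemma (so that the same machinery can be reused verbatim for the non-conjugate GLM bound in \Cref{thm:glm_mean_bound}), whereas you argue directly with $\|\Sigma_N\|_2$; in the Gaussian case these are identical.
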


The major driver of the approximation error of the posterior mean and covariance is $\bar \lambda_1=\|X-XUU^\top\|_{2}$, the largest truncated singular value of $X$. 
This result accords with the intuition that if the data are ``approximately low-rank'' then \methodname should perform well. 

The following corollary shows that the posterior mean estimate is not, in general, consistent for the true parameter.
But it does exhibit reasonable asymptotic behavior. 
In particular, $\tilde \mu_N$ is consistent within the span of $U$ and converges to the \emph{a priori} most probable vector with this characteristic (see the toy example in \Cref{fig:toy_example}).
\begin{corollary} \label{cor:not_consistent}
Suppose $x_{n} \distiid p_{*}$, for some distribution $p^*$, and $y_{n} \mid x_{n} \distind  \mathcal{N}(x_{n}^{\top}\mu_{*}, \tau^{-1})$, for some $\mu_* \in \R^D$. Assume $\E_{p_*}[x_n x_n^\top]$ is nonsingular. Let the columns of $U_*\in \R^{D \times M}$ be the top eigenvectors of $\E_{p_*}[x_n x_n^T]$.
Then $\tilde \mu_N$ converges weakly to the maximum a priori vector $\tilde \mu$ satisfying $U_{*}^\top  \tilde \mu = U_{*}^\top  \mu_{*}$.
\end{corollary}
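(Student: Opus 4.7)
The plan is to reduce the corollary to the strong law of large numbers combined with continuity of spectral projectors. Applying the conjugate formula from \Cref{sec:conj_low_rank} with $X$ replaced by $XUU^\top$ and simplifying gives $\tilde \mu_N = C_N \cdot \tfrac{1}{N}X^\top Y$, where $C_N := U(\tfrac{1}{N\tau\sigma_\beta^2}I_M + \tfrac{\Lambda}{N})^{-1} U^\top$, $U \in \R^{D\times M}$ collects the top $M$ right singular vectors of $X$, and $\Lambda = \diag(\lambda\odot\lambda)$ holds the corresponding squared singular values. This form is convenient because each data-dependent factor is an empirical average to which the LLN applies, and because $C_N$ depends on $U$ only through the projector $UU^\top$ and the eigenvalues $\Lambda/N$.

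Under the iid assumption the strong law gives $\tfrac{1}{N}X^\top X \to \Sigma_*$ a.s., and, writing $Y = X\mu_* + \epsilon$ with $\epsilon \sim \mathcal{N}(0,\tau^{-1}I_N)$ independent of $X$, a component-wise application of the LLN to the iid mean-zero sequence $x_n\epsilon_n$ yields $\tfrac{1}{N}X^\top \epsilon \to 0$ a.s., so $\tfrac{1}{N}X^\top Y \to \Sigma_*\mu_*$ a.s. The statement's description of $U_*$ as the top $M$ eigenvectors of $\Sigma_*$ implicitly assumes an eigen-gap between the $M$th and $(M{+}1)$st eigenvalues; under that gap, the top-$M$ spectral projector is continuous at $\Sigma_*$ (e.g., by Davis--Kahan), so $UU^\top \to U_*U_*^\top$ a.s.\ and $\tfrac{\Lambda}{N} \to D_* := U_*^\top \Sigma_* U_*$ a.s. To pass these limits through $C_N$, note that $C_N$ is the Moore--Penrose pseudoinverse of $UA_N U^\top = \tfrac{1}{N\tau\sigma_\beta^2}UU^\top + \tfrac{1}{N}(UU^\top)X^\top X(UU^\top)$, which tends a.s.\ to the rank-$M$ matrix $U_*D_* U_*^\top$; since both matrices have rank exactly $M$, continuity of the pseudoinverse on rank-$M$ matrices gives $C_N \to U_*D_*^{-1}U_*^\top$ a.s. Combining these limits and using $U_*^\top \Sigma_* = D_* U_*^\top$ yields $\tilde \mu_N \to U_* D_*^{-1} U_*^\top \Sigma_* \mu_* = U_* U_*^\top \mu_*$ a.s., and a.s.\ convergence to a constant implies weak convergence.

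It remains to verify the MAP characterization: under the prior $\mathcal{N}(0,\sigma_\beta^2 I_D)$, minimizing $\|\beta\|^2$ subject to $U_*^\top \beta = U_*^\top \mu_*$ is a strictly convex problem whose unique solution (via Lagrange multipliers) is $\beta = U_* U_*^\top \mu_*$, matching the limit. The principal subtlety throughout is the non-uniqueness of the individual columns of $U$ and $U_*$: only the projectors $UU^\top$ and $U_*U_*^\top$ are well defined, even up to sign conventions. The argument is clean precisely because it is phrased in terms of $UU^\top$ and $U\Lambda U^\top$, both invariant under orthogonal rotations within eigenspaces, so one invokes Davis--Kahan / Kato-style perturbation bounds for the \emph{spectral projector} rather than chasing eigenvector sign or basis ambiguities.
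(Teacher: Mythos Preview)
Your argument is correct but only for the isotropic prior $\Sigma_\beta=\sigma_\beta^2 I_D$, whereas the corollary (and the paper's proof in \Cref{sec:proof_not_consistent}) is stated and proved for a general Gaussian prior $\mathcal{N}(0,\Sigma_\beta)$. In the general case the limit is $\Sigma_\beta U_*(U_*^\top \Sigma_\beta U_*)^{-1}U_*^\top\mu_*$, not $U_*U_*^\top\mu_*$; the paper obtains this via the matrix identity $(R^{-1}+W^\top Q W)^{-1}W^\top Q = RW^\top(WRW^\top+Q^{-1})^{-1}$ applied to $\tilde\mu_N$, and then invokes \Cref{lemm:min-norm} to identify the limit as the $\Sigma_\beta^{-1}$-norm minimizer subject to $U_*^\top\tilde\mu=U_*^\top\mu_*$. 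Your factorization $\tilde\mu_N=C_N\cdot\tfrac{1}{N}X^\top Y$ with $C_N=U(\tfrac{1}{N\tau\sigma_\beta^2}I_M+\tfrac{\Lambda}{N})^{-1}U^\top$ relies on the simplification $\tilde\Sigma_N U = U(\sigma_\beta^{-2}I_M+\tau\Lambda)^{-1}$, which fails for non-isotropic $\Sigma_\beta$, so the gap is not merely cosmetic.

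That said, within the isotropic setting your route has two genuine advantages over the paper's. First, by phrasing everything in terms of the projector $UU^\top$ and passing to the limit via continuity of the Moore--Penrose pseudoinverse on fixed-rank matrices, you avoid the paper's informal step ``$U_N\convP U$,'' which is ill-posed without fixing signs and bases in repeated eigenspaces; your use of Davis--Kahan for the projector is the cleaner statement. Second, you obtain almost-sure convergence (and hence weak convergence) rather than convergence in probability, at no extra cost. If you want to recover the full statement, you can keep the projector-based strategy but replace $C_N$ by the general expression $\Sigma_\beta U_N\big(U_N^\top\Sigma_\beta U_N+\tau^{-1}(\Lambda_N/N)^{-1}/N\big)^{-1}$ acting on $U_N^\top\mu_*+\textrm{noise}$, mirroring the paper's fourth displayed line, and then argue continuity in $U_NU_N^\top$ and $\Lambda_N/N$; the MAP identification for general $\Sigma_\beta$ would then require the $\Sigma_\beta^{-1}$-norm projection argument of \Cref{lemm:min-norm} rather than your Euclidean Lagrange-multiplier step.
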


In the special case that $\Sigma_\beta$ is diagonal this result implies that $\tilde \mu_N \stackrel{p}{\rightarrow} U_{*}U_{*}^\top \mu_{*}$ (\Cref{sec:proof_not_consistent,sec:full-dim-beta-laplace-proof}).  Thus \cref{cor:not_consistent} reflects the intuition that we are not learning anything about the relation between response and covariates in the data directions that we truncate away with our approach.
If the response has little dependence on these directions, $\bar U_{*}\bar U_{*}^\top \mu_{*} = \lim_{N \to \infty} \tilde \mu_{N} - \mu_{*}$ will be small and the error in our approximation will be low (\cref{sec:proof_not_consistent}).
If the response depends heavily on these directions, our error will be higher.
This challenge is ubiquitous in dealing with projections of high-dimensional data.
Indeed, 
we often see explicit assumptions encoding the notion that high-variance directions in $X$ are also highly predictive of the response \citep[see, e.g.,][Theorem 2]{zhang2014random}.

\begin{algorithm*}[!ht]
    \caption{\methodnamelap for Bayesian inference in GLMs with low-rank data approximations and zero-mean prior -- with computation costs. See \Cref{sec:lr_laplace_general} for the general algorithm.}\label{alg:fast_laplace}
\begin{algorithmic}[1]
    \InitThreeCols

    \Phase {{\bfseries Input:} prior $p(\beta)=\mathcal{N}(\mathbf{0}, \Sigma_\beta)$, data $X \in \R^{N,D}$, rank $M\ll D$, GLM mapping $\phi$ with $\phipp$ (see \Cref{eqn:mapping_fcn,sec:fast_laplace_approximations})
    }
    \ThreeHeads{Pseudo-Code}{Time Complexity}{Memory Complexity}
    \Phase{Data preprocessing --- $M$-Truncated SVD }
    \LeftMidRight{\State $U, \diag(\mathbf{\lambda}), V := \operatorname{truncated-SVD}(X^T, M)$}{$O(NDM)$}{$O(NM+ DM)$}
    \vspace{.2cm}
    \Phase{Optimize in projected space and find approximate MAP estimate}
    \LeftMidRight{\State $\gamma_* := \argmax_{\gamma \in \R^M} \sum_{i=1}^N \phi(y_i, x_i U \gamma)-\frac{1}{2} \gamma^\top U^\top\Sigma_\beta U \gamma$}{$O(NM+DM^2)$}{$O(N+M^2)$} \label{line:gamma}
    \LeftMidRight{\State $\hat \mu = U \gamma_* + \bar U \bar U^\top  \Sigma_\beta U(U^\top\Sigma_\beta U)^{-1} \gamma_*$ }{$O(DM)$}{$O(DM)$}\label{line:hat_mu}
    \vspace{.2cm}
    \Phase{Compute approximate posterior covariance}
    \LeftMidRight{\State $W\inv := U^\top  \Sigma_\beta U - (U^\top  X^\top  \diag( \phipp(Y, XUU^\top\hat \mu))XU)\inv$}{$O(NM^2+DM)$}{$O(NM)$} \label{line:W}
    \LeftMidRight{\State $\hat \Sigma := \Sigma_\beta - \Sigma_\beta U W U^\top  \Sigma_\beta$}{0 (see footnote\footnotemark)}{$O(DM)$} \label{line:hat_sigma}
    \vspace{.2cm}
    \Phase{Compute variances and covariances of parameters}
    \LeftMidRight{\State $\mathrm{Var}_{\hat p}(\beta_i)=e_i^\top  \hat \Sigma e_i$}{$O(M^2)$}{$O(DM)$} \label{line:var}
    \LeftMidRight{\State $\mathrm{Cov}_{\hat p}(\beta_i, \beta_j)=e_i^\top  \hat \Sigma e_j$}{$O(M^2)$}{$O(DM)$} \label{line:covar}
  \end{algorithmic}
\end{algorithm*}
\footnotetext{This manipulation is purely symbolic. See \Cref{sec:fl_complexity_proof} for details.}

Our next corollary captures that \methodname never underestimates posterior uncertainty (the \emph{conservativeness} property).
\begin{corollary}\label{cor:conservative}
\methodname approximate posterior uncertainty in any linear combination of parameters is no less than the exact posterior uncertainty. Equivalently, $\tilde \Sigma_N - \Sigma_N$ is positive semi-definite.
\end{corollary}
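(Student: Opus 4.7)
The plan is to deduce the semidefinite ordering of covariances from the semidefinite ordering of their inverses, using the explicit identity for $\Sigma_N^{-1}-\tilde\Sigma_N^{-1}$ already established in Theorem 4.1. Concretely, I would like to show that $\Sigma_N^{-1}-\tilde\Sigma_N^{-1}\succeq 0$ and then invoke antitonicity of matrix inversion on the positive-definite cone: if $A\succeq B\succ 0$ then $B^{-1}\succeq A^{-1}$. Since the prior covariance $\Sigma_\beta$ is assumed positive definite and $\tau>0$, both $\Sigma_N^{-1}$ and $\tilde\Sigma_N^{-1}$ are positive definite, so this inversion step is justified.

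The computation in the middle is the one concrete piece. From Theorem 4.1,
\[
\Sigma_N^{-1}-\tilde\Sigma_N^{-1} \;=\; \tau\bigl(X^\top X - UU^\top X^\top X UU^\top\bigr).
\]
Plugging in the SVD decomposition $X^\top = U\diag(\lambda)V^\top + \bar U\diag(\bar\lambda)\bar V^\top$, and using $U^\top U=I_M$, $U^\top\bar U=0$, and $V^\top V=\bar V^\top\bar V=I$ (together with $V^\top\bar V=0$), one computes
\(
X^\top X = U\diag(\lambda\odot\lambda)U^\top + \bar U\diag(\bar\lambda\odot\bar\lambda)\bar U^\top,
\)
and $UU^\top X^\top X UU^\top = U\diag(\lambda\odot\lambda)U^\top$. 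Subtracting yields
\[
\Sigma_N^{-1}-\tilde\Sigma_N^{-1} \;=\; \tau\,\bar U\diag(\bar\lambda\odot\bar\lambda)\bar U^\top \;\succeq\; 0,
\]
since $\tau>0$ and each $\bar\lambda_i\ge 0$.

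Applying antitonicity of inversion to $\Sigma_N^{-1}\succeq \tilde\Sigma_N^{-1}$ then gives $\tilde\Sigma_N\succeq \Sigma_N$, i.e., $\tilde\Sigma_N-\Sigma_N$ is positive semidefinite. The equivalence with the variance claim for arbitrary linear combinations is immediate: for any $c\in\R^D$, $\mathrm{Var}_{\tilde p}(c^\top\beta)-\mathrm{Var}_p(c^\top\beta) = c^\top(\tilde\Sigma_N-\Sigma_N)c \ge 0$.

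There is no real obstacle here; the only thing to be slightly careful about is the hypothesis that lets us invert the semidefinite inequality, namely that both $\Sigma_N$ and $\tilde\Sigma_N$ are positive definite (so that $A\succeq B$ iff $B^{-1}\succeq A^{-1}$ on this cone). This is inherited from $\Sigma_\beta \succ 0$ and $\tau\,X^\top X,\ \tau\,UU^\top X^\top X UU^\top \succeq 0$, which together ensure $\Sigma_N^{-1},\tilde\Sigma_N^{-1}\succ 0$. Everything else is a direct SVD bookkeeping step on top of Theorem 4.1.
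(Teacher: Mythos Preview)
Your proposal is correct and follows essentially the same approach as the paper: use Theorem~4.1 to show $\Sigma_N^{-1}-\tilde\Sigma_N^{-1}=\tau\,\bar U\diag(\bar\lambda\odot\bar\lambda)\bar U^\top\succeq 0$, then pass from the Loewner ordering of precisions to the reverse ordering of covariances. Your invocation of the antitonicity of matrix inversion on the positive-definite cone is in fact cleaner than the paper's own argument, which routes the inversion step through the identity $(\Sigma_N^{-1}-\tilde\Sigma_N^{-1})^{-1}=\tilde\Sigma_N(\tilde\Sigma_N-\Sigma_N)^{-1}\Sigma_N$ and some claims about products of positive-definite matrices that would need more care (e.g., the difference of precisions is generally singular).
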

See \Cref{fig:comparison_toy_problem} for an illustration of this result.
From an approximation perspective, overestimating uncertainty can be seen as preferable to underestimation as it leads to more conservative decision-making.
An alternative perspective is that we actually engender additional uncertainty simply by making an approximation, with more uncertainty for coarser approximations, and we should express that in reporting our inferences.
This behavior stands in sharp contrast to alternative fast approximate inference methods, such as diagonal Laplace approximations (\Cref{sec:bad_marginals}) and variational Bayes \citep{mackay2003information}, which can dramatically underestimate uncertainty.
We further characterize the conservativeness of \methodname in \Cref{cor:information_gain}, which shows that the \methodname posterior never has lower entropy than the exact posterior and quantifies the bits of information lost due to approximation.

\section{Non-conjugate GLMs with approximately low-rank data} \label{sec:non_conjugate}
While the conjugate linear setting facilitates intuition and theory, GLMs are a larger and more broadly useful class of models for which efficient and reliable Bayesian inference is of significant practical concern.  
Assuming conditional independence of the observations given the covariates and parameter, the posterior for a GLM likelihood can be written
\[\label{eqn:mapping_fcn}
\log p(\beta\mid X, Y) = \log p(\beta) + \sum_{n=1}^N \phi(y_n, x_n^\top \beta) + Z
\]
for some real-valued mapping function $\phi$ and log normalizing constant $Z$.
For priors and mapping functions that do not form a conjugate pair, accessing posterior functionals of interest is analytically intractable and requires posterior approximation.
One possibility is to use a Monte Carlo method such as MCMC, which has theoretical guarantees asymptotic in running time but is relatively slow in practice.
The usual alternative is a deterministic approximation such as VB or Laplace. These approximations are typically faster but do not become arbitrarily accurate in the limit of infinite computation.
We next show how \methodname can be applied to facilitate faster MCMC samplers and Laplace approximations for Bayesian GLMs.
We also characterize the additional error introduced to Laplace approximations by low-rank data approximations.

\subsection{\methodname for fast Laplace approximations}\label{sec:fast_laplace_approximations}

The Laplace approximation refers to a Gaussian approximation obtained via a second-order Taylor approximation of the log density. In the Bayesian setting, the Laplace approximation $\bar p(\beta\mid X, Y)$ is typically formed at the maximum a posteriori (MAP) parameter:
$
        \bar p(\beta\mid X, Y) := \mathcal{N}(\beta \mid  \bar \mu, \bar \Sigma),
$
where $\bar\mu := {\textstyle\argmax_\beta} \log p(\beta\mid X, Y)$ and 
$\bar \Sigma\inv := -\nabla_\beta^2 \log p(\beta\mid X, Y)|_{\beta=\bar \mu}$.
When computing and analyzing Laplace approximations for GLMs, we will often refer to vectorized first, second, and third derivatives $\phip, \phipp, \phippp \in \R^{N}$ of the mapping function $\phi$. For $Y, A \in \mathbb{R}^N$, we define
$
\phip(Y, A)_n := \frac{\partial}{\partial a} \phi(Y_n, a)|_{a=A_n}.
$
The higher-order derivative definitions are analogous, with the derivative order of $\frac{\partial}{\partial a}$ increased commensurately.

Laplace approximations are typically much faster than MCMC for moderate/large $N$ and small $D$, but they become expensive or intractable for large $D$. In particular, they require inverting a $D \times D$ Hessian matrix, which is in general an $O(D^3)$ time operation, and storing the resulting covariance matrix, which requires $O(D^2)$ memory.\footnote{Notably, as in the conjugate setting, an alternative matrix inversion using the Woodbury identity reduces this cost when $N<D$ to $O(N^2D)$ time and $O(ND)$ memory (\Cref{sec:fast_inversions}).}

As in the conjugate case, \methodname permits a faster and more memory-efficient route to inference. Here, we say that the \emph{\methodnamelap approximation}, $\hat p(\beta\mid X, Y) = \mathcal{N}(\beta \mid  \hat \mu, \hat \Sigma)$, denotes the Laplace approximation to the \methodname approximate posterior.
The special case of \methodnamelap with zero-mean prior is given in \Cref{alg:fast_laplace} as it allows us to easily analyze time and memory complexity.  For the more general \methodnamelap algorithm, see \Cref{sec:lr_laplace_general}.

\begin{theorem}\label{thm:laplace_time}
In a GLM with a zero-mean, structured-Gaussian prior\footnote{For example (banded) diagonal or diagonal plus low-rank, such that matrix vector multiplies may be computed in $O(D)$ time.} 
and a log-concave likelihood,\footnote{This property is standard for common GLMs such as logistic and Poisson regression.}
the rank-$M$ \methodnamelap approximation may be computed via \Cref{alg:fast_laplace} in $O(NDM)$ time with $O(DM+NM)$ memory.
Furthermore, any posterior covariance entry can be computed in $O(M^2)$ time.
\end{theorem}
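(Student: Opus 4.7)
The plan is a line-by-line verification of Algorithm 1: establish that each step correctly computes its tabulated quantity, and that the listed per-step time and memory costs hold; summing over lines then gives the overall $O(NDM)$ time and $O(DM + NM)$ memory bounds, while the per-entry $O(M^2)$ posterior covariance cost is read off lines 7--8 directly.

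Correctness hinges on a change of coordinates. Write $\beta = U\gamma + \bar U \bar\gamma$, where $[U,\bar U]$ is an orthogonal $D\times D$ matrix, so the LR-GLM likelihood depends on $\beta$ only through $\gamma = U^\top\beta$. The log prior is quadratic in $(\gamma,\bar\gamma)$, so its optimum over $\bar\gamma$ at fixed $\gamma$ is closed-form; applying a block-inverse identity to $[U,\bar U]^\top \Sigma_\beta [U,\bar U]$ recasts the minimizer as $\bar\gamma^\ast(\gamma) = \bar U^\top \Sigma_\beta U (U^\top \Sigma_\beta U)^{-1}\gamma$. Substituting into $\beta = U\gamma + \bar U\bar\gamma^\ast(\gamma)$ and using $\bar U\bar U^\top = I - UU^\top$ to avoid instantiating $\bar U$ reproduces line 3's formula for $\hat\mu$ given $\gamma_\ast$, and the profile log-prior induced by the same substitution yields the $M$-dimensional concave optimization in line 2. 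Log-concavity of $\phi$ makes the profile objective concave, so a Newton solver finds $\gamma_\ast$ in $O(1)$ iterations for any fixed tolerance. For the covariance, the Hessian of the approximate log posterior at $\hat\mu$ is $\Sigma_\beta^{-1} + U H U^\top$ with $H := -U^\top X^\top \diag(\phi''(Y, XUU^\top\hat\mu)) XU$, positive semidefinite by log-concavity; the Woodbury identity then produces exactly the $W$ of line 5 and the factored form of $\hat\Sigma$ in line 6.

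Complexity accounting then proceeds routinely. Line 1's truncated SVD is $O(NDM)$ time and $O(NM+DM)$ memory. The product $XU = V\diag(\lambda)$ is free from the SVD; the structured-Gaussian prior assumption makes any matrix-vector product against $\Sigma_\beta$ cost $O(D)$, so $U^\top\Sigma_\beta U$ is built in $O(DM^2)$, giving line 2 at $O(NM + DM^2)$ and, via the closed-form above, line 3 at $O(DM)$. Line 5's bottleneck is the quadratic form $(XU)^\top\diag(\phi'')(XU)$ at $O(NM^2)$, plus $O(DM + NM)$ to evaluate $\phi''$ at $XUU^\top\hat\mu$. Line 6 is purely symbolic --- we never form the $D\times D$ matrix $\hat\Sigma$, keeping only the cached $\Sigma_\beta U$ (size $D\times M$) and $W \in \R^{M\times M}$. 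For any index pair, $\mathrm{Cov}_{\hat p}(\beta_i,\beta_j) = e_i^\top\Sigma_\beta e_j - (e_i^\top\Sigma_\beta U)\,W\,(U^\top\Sigma_\beta e_j)$ reads only rows of cached $\Sigma_\beta U$ and runs an $M\times M$ quadratic form, yielding $O(M^2)$. Summing gives $O(NDM)$ time, dominated by line 1, and $O(NM+DM)$ memory throughout.

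The main obstacle is the block-inverse/Schur-complement manipulation that justifies lines 2 and 3: everything else is essentially bookkeeping once those two lines are pinned down. The structured-Gaussian prior assumption is doing real work here --- it ensures all contact with the prior ($\Sigma_\beta U$, $U^\top\Sigma_\beta U$, and single entries $e_i^\top \Sigma_\beta e_j$) stays within the stated budgets without ever inverting or fully materializing $\Sigma_\beta$.
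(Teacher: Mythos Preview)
Your proposal is correct and follows essentially the same approach as the paper: a line-by-line verification of \Cref{alg:fast_laplace}, establishing correctness of $\hat\mu$ via a change of coordinates in $[U,\bar U]$ (the paper phrases this as a marginal/conditional decomposition in Lemma~\ref{lemm:full-dim-beta-laplace}, you as a profile/Schur-complement argument, but these coincide for Gaussian priors), correctness of $\hat\Sigma$ via Woodbury, and then per-line complexity accounting. The only minor looseness is your claim that Newton converges in $O(1)$ iterations; the paper instead cites explicit finite-sum optimization results to justify the $O(NM)$ cost of line~2, but this does not affect the overall $O(NDM)$ bound.
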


\Cref{alg:fast_laplace} consists of three phases: (1) computation of the $M$-truncated SVD of $X^\top$; (2) MAP optimization to find $\hat \mu$; and (3) estimation of $\hat \Sigma$.
In the second phase we are able to efficiently compute $\hat \mu$ by first solving a lower-dimensional optimization for the quantity $\gamma_*\in \R^M$ (\cref{line:gamma}), from which $\hat \mu$ is available analytically.
Notably, in the common case that $p(\beta)$ is isotropic Gaussian, the expression for $\hat \mu$ reduces to $U\gamma_*$ and the full time complexity of MAP estimation is $O(NM+DM)$.
Though computing the covariance for each pair of parameters and storing $\hat \Sigma$ explicitly would of course require a potentially unacceptable $O(D^2)$ storage, 
the output of \Cref{alg:fast_laplace} is smaller and enables arbitrary parameter variances and covariances to be computed in $O(M^2)$ time.  See \Cref{sec:fl_complexity_proof} for additional details.

\subsection{Accuracy of the \methodnamelap approximation}
We now consider the quality of the \methodnamelap approximate posterior relative to the usual Laplace approximation. 
 Our first result concerns the difference of the posterior means

\begin{theorem}[Non-asymptotic]\label{thm:glm_mean_bound}
In a generalized linear model with an $\alpha$--strongly log concave posterior, the exact and approximate MAP values, $\hat \mu = \argmax_\beta \tilde p(\beta\mid X, Y)$ and $\bar \mu = \argmax_\beta p(\beta\mid X, Y)$, satisfy
$$
\| \hat \mu  - \bar \mu \|_2 \le \frac{\bar \lambda_1 \big( \|\phip(Y, X\hat \mu)\|_2 + \lambda_1 \| \bar U^\top\hat \mu\|_2 \| \phipp(Y, A)\|_\infty\big)}{\alpha}
$$
for some vector $A \in \mathbb{R}^N$ such that $A_n \in [x_n^\top UU^\top \hat \mu, x_n^\top \hat \mu]$.
\end{theorem}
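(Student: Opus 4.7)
The plan is to bound $\|\hat\mu - \bar\mu\|_2$ using the standard consequence of $\alpha$-strong log-concavity: if $f(\beta) := -\log p(\beta\mid X, Y)$ is $\alpha$-strongly convex with minimizer $\bar\mu$, then $\|\beta - \bar\mu\|_2 \le \frac{1}{\alpha}\|\nabla f(\beta)\|_2$ for any $\beta$. Evaluating this at $\beta = \hat\mu$ reduces the theorem to controlling the gradient of the \emph{exact} log-posterior at the \methodname MAP.

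First I would write the exact-posterior gradient at $\hat\mu$ as
$\nabla_\beta \log p(\hat\mu\mid X, Y) = \nabla_\beta \log p(\hat\mu) + X^\top \phip(Y, X\hat\mu)$,
and use the first-order optimality of $\hat\mu$ for the \methodname posterior, namely $\nabla_\beta\log p(\hat\mu) + UU^\top X^\top \phip(Y, XUU^\top \hat\mu) = 0$, to eliminate the prior term. Then I would exploit the SVD decomposition $X^\top = UU^\top X^\top + \bar U\bar U^\top X^\top$ (which follows from $X^\top = U\diag(\lambda)V^\top + \bar U\diag(\bar\lambda)\bar V^\top$ and orthonormality) to get
\(
\nabla_\beta \log p(\hat\mu\mid X, Y) = UU^\top X^\top\bigl(\phip(Y, X\hat\mu) - \phip(Y, XUU^\top \hat\mu)\bigr) + \bar U\bar U^\top X^\top \phip(Y, X\hat\mu).
\)

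For the second summand, since $\bar U\bar U^\top X^\top = \bar U\diag(\bar\lambda)\bar V^\top$ has operator norm $\bar\lambda_1$, the bound $\|\bar U\bar U^\top X^\top \phip(Y, X\hat\mu)\|_2 \le \bar\lambda_1 \|\phip(Y, X\hat\mu)\|_2$ is immediate. For the first summand, I would apply the mean value theorem componentwise: for each $n$, there exists $A_n \in [x_n^\top UU^\top\hat\mu, x_n^\top \hat\mu]$ such that $\phip(y_n, x_n^\top\hat\mu) - \phip(y_n, x_n^\top UU^\top\hat\mu) = \phipp(y_n, A_n)\, x_n^\top \bar U\bar U^\top \hat\mu$, giving the vector identity $\phip(Y, X\hat\mu) - \phip(Y, XUU^\top\hat\mu) = \diag(\phipp(Y, A))\, X\bar U\bar U^\top \hat\mu$. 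Then I would bound the operator norms: $\|UU^\top X^\top\|_2 = \lambda_1$, $\|X\bar U\|_2 = \|\bar V\diag(\bar\lambda)\|_2 = \bar\lambda_1$, and $\|\diag(\phipp(Y, A))\|_2 = \|\phipp(Y, A)\|_\infty$, yielding $\lambda_1 \bar\lambda_1 \|\phipp(Y, A)\|_\infty \|\bar U^\top \hat\mu\|_2$ for this term.

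Summing the two bounds and dividing by $\alpha$ gives exactly the stated inequality. I do not foresee a major obstacle: the bookkeeping with the SVD block decomposition is the only subtlety, and the componentwise MVT handles the $\phip$ difference cleanly (producing precisely the intermediate vector $A$ the statement refers to). One thing to double-check is that strong log-concavity is legitimately used only on the \emph{exact} posterior, which is the hypothesis of the theorem, so no assumption on the approximate posterior's convexity is needed; the approximate posterior enters only through the first-order condition defining $\hat\mu$.
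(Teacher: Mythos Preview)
Your proposal is correct and follows essentially the same route as the paper: the paper likewise bounds $\|\nabla_\beta\log p(\hat\mu\mid X,Y)\|_2$ (framed there as the difference of the exact and approximate log-posterior gradients, which coincides with $\nabla_\beta\log p(\hat\mu\mid X,Y)$ since the approximate gradient vanishes at $\hat\mu$), uses the identical SVD splitting $X^\top = UU^\top X^\top + \bar U\bar U^\top X^\top$, applies the same componentwise Taylor/MVT step to produce the vector $A$, and then invokes the $\alpha$-strong-convexity inequality (stated there as a separate lemma) to divide by $\alpha$. The bookkeeping and bounds on each piece are identical to yours.
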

This bound reveals several characteristics of the regimes in which \methodnamelap performs well.  As in conjugate regression, we see that the bound tightens to $0$ as the rank of the approximation increases to capture all of the variance in the covariates and $\bar \lambda_1 \rightarrow 0$. 
\begin{remark}\label{rem:common_glm_derivatives}
For many common GLMs, $\|\phip \|_2$, $\|\phipp \|_\infty$, and $\| \phippp \|_\infty$ are well controlled; see \Cref{sec:log_reg}. $\| \phippp \|_\infty$ appears in an upcoming corollary.
\end{remark}

\begin{remark}
The $\alpha$--strong log concavity of the posterior is satisfied for any strongly log concave prior (e.g., a Gaussian, in which case we have $\alpha \ge \|\Sigma_\beta\|_2\inv$) 
and $\phi(y,\cdot)$ is concave for all $y$.  In this common case, \Cref{thm:glm_mean_bound} provides a computable upper bound on the posterior mean error.
\end{remark}

\begin{remark}
 In contrast to the conjugate case (\Cref{cor:not_consistent}), general LR-GLM parameter estimates are not necessarily consistent within the span of the projection. That is, $U^\top \hat \mu_N$  may not converge to $U^\top \beta$ (see \Cref{sec:glm_not_consistent}).  
\end{remark}

We next consider the distance between our approximation and target posterior under a Wasserstein metric \citep{villani2008optimal}. Let $\Gamma(\hat p, \bar p)$ be the set of all couplings of distributions $\hat p$ and $\bar p$, i.e. joint distributions $\gamma(\cdot, \cdot)$ satisfying  $\hat p(\beta) =\int \gamma(\beta, \beta^\prime) d\beta^\prime$ and $\bar p(\beta) =\int \gamma(\beta^\prime, \beta) d\beta^\prime$ for all $\beta$. Then the $2$-Wasserstein distance between $\hat p$ and $\bar p$ is defined
\[
    W_2(\hat p, \bar p) = \inf_{\gamma \in \Gamma(\hat p, \bar p)} \E_{\gamma } [\|\hat \beta-\bar \beta\|^2_2]^{\frac{1}{2}}.
\]
Wasserstein bounds provide tight control of many functionals of interest, such as means, variances, and standard deviations \cite{huggins2018nonasymptotic}. 
For example, if $\xi_{i} \dist q_{i}$ for any distribution $q_{i}~(i=1,2)$, then $|\E[\xi_{1}]-\E[\xi_{2}]| \le W_2(q_{1}, q_{2})$ and 
$|\mathrm{Var}[\xi_{1}]^\frac{1}{2} - \mathrm{Var}[\xi_{2}]^\frac{1}{2}| \le 2 W_2(q_{1},q_{2})$.

We provide a finite-sample upper bound on the 2-Wasserstein distance between the Laplace and \methodnamelap approximations.
In particular, the 2-Wasserstein will decrease to $0$ as the rank of the \methodnamelap approximation increases since the largest truncated singular value $\bar \lambda_1$ will approach zero.  
\begin{corollary}\label{cor:new_glm_w2_bound}
Assume the prior $p(\beta)$ is Gaussian with covariance $\Sigma_\beta$ and the mapping function $\phi(y, a)$ has bounded 2nd and 3rd derivatives with respect to $a$. Take $A$ and $\alpha$ as in \Cref{thm:glm_mean_bound}.
Then $\bar p(\beta)$ and $\hat p(\beta)$ satisfy
\[
    W_2(\hat p, \bar p) \le &\sqrt{2} \bar \lambda_1 \| \bar \Sigma \|_2 
    \Big\{
    c\big[\|\Sigma_\beta\inv\|_2 + (\lambda_1 + \bar \lambda_1)^2 \|\phipp\|_\infty\big] \nonumber \\
    &+(\lambda_1^2 r +  (\bar \lambda_1 +2 \lambda_1)\|\phipp\|_\infty \sqrt{\trace(\hat \Sigma)}
    \Big\},
\]
    where $c := \big(\|\phip(Y, X\hat \mu)\|_2 + \lambda_1 \| \bar U^\top\hat \mu\|_2 \| \phipp(Y, A)\|_\infty\big) / \alpha$  and $r:= \| U^\top \hat \mu\|_\infty \|\phippp\|_\infty + \lambda_1 c\|\phippp\|_\infty$.
\end{corollary}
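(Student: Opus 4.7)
The plan is to exploit that both $\hat p$ and $\bar p$ are Gaussian and bound the $2$-Wasserstein distance via the synchronous coupling $(\hat\beta, \bar\beta) = (\hat\mu + \hat\Sigma^{1/2} Z,\ \bar\mu + \bar\Sigma^{1/2} Z)$ with $Z \sim \mathcal{N}(0, I_D)$. Computing $\E\|\hat\beta - \bar\beta\|_2^2$ gives $\|\hat\mu - \bar\mu\|_2^2 + \|\hat\Sigma^{1/2} - \bar\Sigma^{1/2}\|_F^2$, so it suffices to control these two pieces; the outer $\sqrt{2}$ in the stated bound comes from $\sqrt{a^2+b^2}\le \sqrt{2}\max(a,b)$ after re-expressing both pieces as common multiples of $\bar\lambda_1\|\bar\Sigma\|_2$.

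The mean term is immediate from \Cref{thm:glm_mean_bound}, which gives $\|\hat\mu - \bar\mu\|_2 \le c\,\bar\lambda_1$. For the covariance term I would write both precisions in closed form, namely $\bar\Sigma\inv = \Sigma_\beta\inv - X^\top \diag(\phipp(Y, X\bar\mu)) X$ and $\hat\Sigma\inv = \Sigma_\beta\inv - UU^\top X^\top \diag(\phipp(Y, XUU^\top \hat\mu)) XUU^\top$, and bound $\|\hat\Sigma\inv - \bar\Sigma\inv\|_2$ by telescoping through two intermediate Hessians. The first swap replaces each copy of $X$ flanking $\diag(\phipp)$ by $XUU^\top$; using $\|X-XUU^\top\|_2=\bar\lambda_1$ and $\|X\|_2=\lambda_1$ produces the $(\lambda_1+\bar\lambda_1)^2\|\phipp\|_\infty$ structure. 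The second swap changes the argument of $\phipp$ from $X\bar\mu$ to $XUU^\top \hat\mu$ and is handled by a mean-value step $|\phipp(\cdot, X\bar\mu)-\phipp(\cdot, XUU^\top \hat\mu)| \le \|\phippp\|_\infty |x_n^\top(\bar\mu - UU^\top \hat\mu)|$. Decomposing $\bar\mu - UU^\top\hat\mu = (\bar\mu-\hat\mu) + \bar U\bar U^\top\hat\mu$, applying \Cref{thm:glm_mean_bound} to the first summand (giving the $\lambda_1 c \|\phippp\|_\infty$ contribution to $r$) and controlling the second via $\|U^\top \hat\mu\|_\infty$, yields the $\lambda_1^2 r$ piece after the outer $\lambda_1$ from the flanking $X$ factor is included.

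To pass from a precision-difference bound to a bound on $\|\hat\Sigma^{1/2}-\bar\Sigma^{1/2}\|_F$, I would use the identity $\hat\Sigma - \bar\Sigma = \hat\Sigma(\bar\Sigma\inv - \hat\Sigma\inv)\bar\Sigma$, apply $\|ABC\|_F \le \|A\|_F\|B\|_2\|C\|_2$ with the Frobenius position on $\hat\Sigma$, and use $\|\hat\Sigma\|_F \le \sqrt{\|\hat\Sigma\|_2 \trace(\hat\Sigma)}$ to recover the $\sqrt{\trace(\hat\Sigma)}$ factor, with $\|\bar\Sigma\|_2$ coming out as the advertised outer factor. A standard matrix square-root inequality then converts a bound on $\|\hat\Sigma - \bar\Sigma\|_F$ to one on $\|\hat\Sigma^{1/2}-\bar\Sigma^{1/2}\|_F$, completing the passage to $W_2$. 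The main obstacle is the bookkeeping in the covariance step: the two simultaneous swaps each produce several cross terms, and careful grouping is needed to land on the compact form $c[\|\Sigma_\beta\inv\|_2 + (\lambda_1+\bar\lambda_1)^2\|\phipp\|_\infty] + \lambda_1^2 r + (\bar\lambda_1+2\lambda_1)\|\phipp\|_\infty\sqrt{\trace(\hat\Sigma)}$ while extracting the outer $\bar\lambda_1\|\bar\Sigma\|_2$ factor cleanly rather than as a product of separate operator norms.
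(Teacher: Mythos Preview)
Your approach via the synchronous Gaussian coupling is a legitimate route to \emph{a} $W_2$ bound, but it is genuinely different from the paper's argument and will not land on the bound as stated. The paper does not couple the two Gaussians at all: it bounds the $(2,\hat p)$-Fisher distance
\[
d_{2,\hat p}(\hat p,\bar p)^2=\E_{\hat p}\big\|\nabla_\beta\log\hat p(\beta)-\nabla_\beta\log\bar p(\beta)\big\|_2^2
\]
and then invokes the strong log-concavity of $\bar p$ (with constant $\|\bar\Sigma\|_2^{-1}$) to get $W_2\le\|\bar\Sigma\|_2\,d_{2,\hat p}$ via \citet{huggins2018nonasymptotic}/\citet{bolley2012convergence}. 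Concretely, the paper shows the pointwise estimate
\(
\|\nabla_\beta\log\hat p(\beta)-\nabla_\beta\log\bar p(\beta)\|_2\le \bar\lambda_1\big[A+B\|\beta-\hat\mu\|_2\big],
\)
with $A=c[\|\Sigma_\beta^{-1}\|_2+(\lambda_1+\bar\lambda_1)^2\|\phipp\|_\infty]$ and $B=\lambda_1^2 r+(\bar\lambda_1+2\lambda_1)\|\phipp\|_\infty$. Then $(a+b)^2\le 2(a^2+b^2)$ (this is where the $\sqrt 2$ actually comes from, not $\sqrt{a^2+b^2}\le\sqrt2\max(a,b)$) and $\E_{\hat p}\|\beta-\hat\mu\|_2^2=\trace(\hat\Sigma)$ immediately give the corollary. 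The single factor $\|\bar\Sigma\|_2$ and the single factor $\sqrt{\trace(\hat\Sigma)}$ fall out of this structure for free; no matrix square roots are ever taken.

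Your route, by contrast, has two conversion steps that each leave residue the paper's bound does not contain. Passing from $\|\hat\Sigma^{-1}-\bar\Sigma^{-1}\|_2$ to $\|\hat\Sigma-\bar\Sigma\|_F$ via $\hat\Sigma(\bar\Sigma^{-1}-\hat\Sigma^{-1})\bar\Sigma$ and $\|\hat\Sigma\|_F\le\sqrt{\|\hat\Sigma\|_2\,\trace(\hat\Sigma)}$ injects an extra $\sqrt{\|\hat\Sigma\|_2}$, and any ``standard matrix square-root inequality'' (e.g.\ Ando--van Hemmen) that takes $\|\hat\Sigma-\bar\Sigma\|_F$ to $\|\hat\Sigma^{1/2}-\bar\Sigma^{1/2}\|_F$ injects a further $(\lambda_{\min}(\hat\Sigma)^{1/2}+\lambda_{\min}(\bar\Sigma)^{1/2})^{-1}$. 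These combine to a condition-number-like factor that does not cancel and is absent from the stated result. Separately, your mean term is simply $c\bar\lambda_1$; to reproduce the displayed form you would have to \emph{weaken} it by the factor $\|\bar\Sigma\|_2[\|\Sigma_\beta^{-1}\|_2+(\lambda_1+\bar\lambda_1)^2\|\phipp\|_\infty]\ge\|\bar\Sigma\|_2\|\bar\Sigma^{-1}\|_2\ge 1$, which is artificial. So your plan proves a qualitatively similar inequality with the right $\bar\lambda_1$ scaling, but the ``bookkeeping'' obstacle you flag is not merely bookkeeping: the coupling route cannot produce exactly these constants, whereas the Fisher-distance route does so in one stroke.
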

When combined with \citet[Prop. 6.1]{huggins2018nonasymptotic}, this result guarantees closeness in 2-Wasserstein of \methodnamelap to the exact posterior.

We conclude with a result showing that the error due to the \methodname approximation cannot grow without bound as the sample size increases. 
\begin{theorem}[Asymptotic]\label{thm:asymptotic_glm_mean_bound_concise}
Under mild regularity conditions, the error in the posterior means, $\| \hat \mu_n - \bar \mu_n\|_2$, converges as $n \rightarrow \infty$, and the limit is finite almost surely.
\end{theorem}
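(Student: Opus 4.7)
The strategy is to show, under mild regularity, that both sequences $\bar \mu_n$ and $\hat \mu_n$ converge almost surely to finite, deterministic limits; the conclusion then follows from the continuous mapping theorem applied to $(\hat \mu_n,\bar \mu_n) \mapsto \|\hat \mu_n - \bar \mu_n\|_2$. Since $\beta$ lives in a fixed $D$-dimensional space, finiteness of each individual limit will be inherited by the difference. Note that the non-asymptotic bound of \Cref{thm:glm_mean_bound} cannot be invoked directly because $\bar\lambda_1$, $\lambda_1$, and $\|\phip(Y,X\hat\mu)\|_2$ generically grow with $n$, so the direct bound is not uniformly finite in $n$.

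The plan for $\bar \mu_n$ is standard: under a strongly log-concave prior (e.g.\ Gaussian) and a concave log-likelihood, the posterior is $\alpha$--strongly log-concave with $\alpha \geq \|\Sigma_\beta\|_2^{-1}$, so $\bar \mu_n$ is well-defined as the unique MAP. Assuming i.i.d.\ data $(x_n,y_n) \distiid p_*$ with enough integrability on $\phi$ and its first derivative, the normalized objective $\tfrac{1}{n}\log p(\beta\mid X,Y)$ converges uniformly on compacts to $F(\beta) := \E_{p_*}[\phi(y,x^\top\beta)]$ by a standard ULLN argument, and its maximizer is finite under identifiability. Classical M-estimator consistency (e.g.\ \citet[Ch.~5]{Van2000}) then gives $\bar \mu_n \convas \beta^*$, where $\beta^*$ is the population maximizer.

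The plan for $\hat \mu_n$ is more delicate because the LR-GLM objective depends on the random subspace $U_n$, which itself changes with $n$. First I would invoke the SLLN to get $n^{-1}X^\top X \convas \Sigma_* := \E_{p_*}[xx^\top]$, and then, assuming a spectral gap between the $M$th and $(M{+}1)$th eigenvalues of $\Sigma_*$, apply Davis--Kahan to get $U_nU_n^\top \convas U_*U_*^\top$ almost surely, where $U_*$ collects the top $M$ eigenvectors of $\Sigma_*$. Next I would rewrite the LR-GLM normalized objective as
\[
G_n(\beta) := \tfrac{1}{n}\sum_{i=1}^n \phi(y_i, x_i^\top U_nU_n^\top \beta) + \tfrac{1}{n}\log p(\beta),
\]
and show it converges uniformly on compacts to $G(\beta) := \E_{p_*}[\phi(y, x^\top U_*U_*^\top \beta)]$ almost surely. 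This requires a joint argument over $(U,\beta)$ using equicontinuity of $\phi$ in its second argument and the Davis--Kahan bound above. An M-estimator argument then yields $\hat \mu_n \convas \hat \mu^*$ for the unique maximizer $\hat \mu^*$ of $G$ plus the prior; uniqueness follows from the strong log-concavity of the prior (which identifies $\hat\mu^*$ on $\mathrm{range}(\bar U_*)$) combined with identifiability of the likelihood on $\mathrm{range}(U_*)$.

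The main obstacle is Step 3: propagating the subspace convergence $U_n U_n^\top \to U_* U_*^\top$ through the non-linearity $\phi$ to obtain a uniform, rather than merely pointwise, law-of-large-numbers for $G_n$. The cleanest route is to bound $|G_n(\beta) - G(\beta)|$ by splitting into (i) the error from replacing $U_n U_n^\top$ with $U_* U_*^\top$ (controlled by Lipschitzness of $\phi$ and the operator-norm convergence of the projections) and (ii) the residual empirical-process error with a fixed projection (handled by a classical ULLN). Provided $\phi(y,\cdot)$ is Lipschitz on compacts uniformly in $y$ (or has dominated derivative), both parts vanish a.s.\ uniformly over compact sets of $\beta$, which combined with a coercivity argument from the Gaussian prior confines $\hat \mu_n$ to a compact set eventually. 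The two consistency statements then yield $\|\hat \mu_n - \bar \mu_n\|_2 \convas \|\hat \mu^* - \beta^*\|_2 < \infty$, as claimed.
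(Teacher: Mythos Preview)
Your overall plan matches the paper's: show $\bar\mu_n$ and $\hat\mu_n$ each converge almost surely to finite limits, then conclude by continuous mapping. Both you and the paper use Davis--Kahan (under an eigengap assumption) to get $U_nU_n^\top\convas U_*U_*^\top$, and both use a Glivenko--Cantelli/ULLN argument for the likelihood part. For $\bar\mu_n$ the paper invokes Doob's consistency theorem rather than M-estimator consistency; either is fine here.

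The substantive difference is in how $\hat\mu_n$ is handled, and your direct $D$-dimensional argmax-continuity argument has a gap. Your normalized objective $G_n(\beta)=\tfrac1n\sum_i\phi(y_i,x_i^\top U_nU_n^\top\beta)+\tfrac1n\log p(\beta)$ converges to $G(\beta)=\E_{p_*}[\phi(y,x^\top U_*U_*^\top\beta)]$ with the prior term vanishing. But $G$ is constant along $\mathrm{range}(\bar U_*)$, so $\argmax_\beta G$ is an entire $(D-M)$-dimensional affine subspace. Standard M-estimator consistency requires a well-separated unique maximizer of the limit objective; you cannot simply assert that $\hat\mu_n$ converges to ``the maximizer of $G$ plus the prior'' when the prior has been scaled away in the limit. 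Handling a vanishing penalty as a tie-breaker over a non-unique argmax set needs an additional, non-standard argument that you have not supplied.

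The paper circumvents exactly this issue by decomposing $\beta$ into $U_*^\top\beta$ and $\bar U_*^\top\beta$. In the $M$-dimensional projected problem the limiting objective $\gamma\mapsto\E[\phi(y,x^\top U_*\gamma)]$ \emph{is} strictly concave with a unique maximizer $\gamma^*$, so argmax continuity gives $U_*^\top\hat\mu_n\convas\gamma^*$ cleanly. Then, because the approximate likelihood depends on $\beta$ only through $U_n^\top\beta$, the orthogonal component $\bar U_*^\top\hat\mu_n$ is determined (for each $n$) by maximizing the prior conditional on $U_n^\top\beta$, and its convergence follows from continuity and the convergence of $U_n^\top\hat\mu_n$. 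Your plan can be repaired by adopting this decomposition; without it, the ``M-estimator argument then yields $\hat\mu_n\convas\hat\mu^*$'' step does not go through as stated.
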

For the formal statement see \Cref{thm:asymptotic_glm_mean_bound} in \Cref{sec:glm_bounded_asymptotic_error}.

\subsection{LR-MCMC for faster MCMC in GLMs}\label{sec:fast_mcmc}
\methodnamelap is inappropriate when the posterior is poorly approximated by a Gaussian.
This may be the case, for example, when the posterior is multi-modal, a common characteristic of GLMs with sparse priors.
To remedy this limitation of \methodnamelap, we introduce \emph{LR-MCMC}, a wrapper around the Metropolis--Hastings algorithm using the \methodname approximation.
For a GLM, each full likelihood and gradient computation takes $O(ND)$ time but only $O(NM+DM)$ time with the \methodname approximation, resulting in the same $\mathrm{min}(N, D)/M$-fold speedup obtained by \methodnamelap.
See \Cref{sec:lr_mcmc} for further details on LR-MCMC.

\section{Experiments}\label{sec:results}

We empirically evaluated \methodname on real and synthetic datasets.  For synthetic data experiments, we considered logistic regression with covariates of dimension $D=250$ and $D=500$.  In each replicate, we generated the latent parameter from an isotropic Gaussian prior, $\beta \sim  \mathcal{N}(0, I_D)$, correlated covariates from a multivariate Gaussian, and responses from the logistic regression likelihood (see \Cref{sec:experimental_details} for details). 
We compared to the standard Laplace approximation, the diagonal Laplace approximation, the Laplace approximation with a low-rank data approximation obtained via random projections rather than the SVD (``Random-Laplace''), and mean-field automatic differentiation variational inference in \texttt{Stan} (ADVI-MF).\footnote{We also tested ADVI using a full rank Gaussian approximation but found it to provide near uniformly worse performance compared to ADVI-MF. So we exclude full-rank ADVI from the presented results.}  

\begin{figure*}[!ht]
    \centering
    {\includegraphics[width=1.\linewidth]{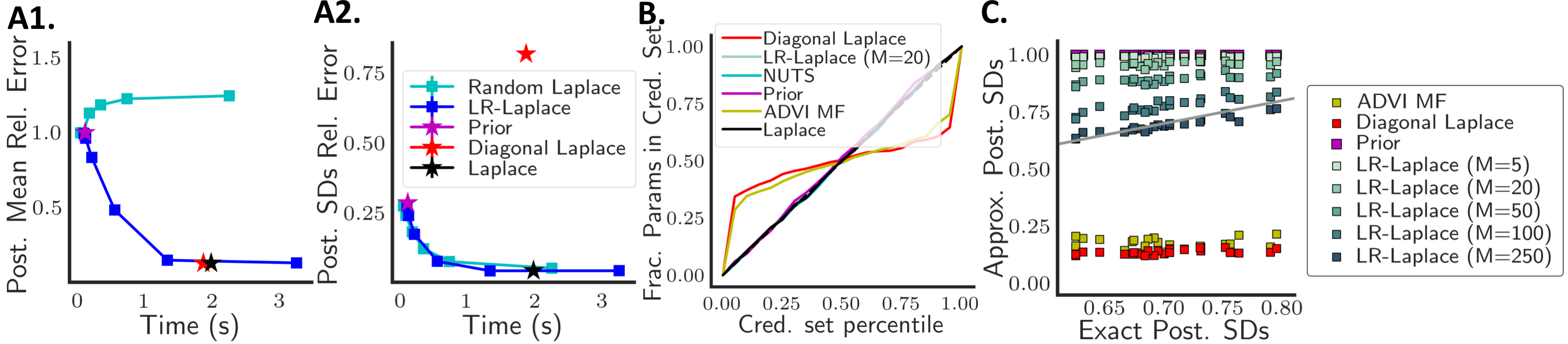}}
\abovecapvspace
    {\caption{\emph{Left}: Error of the approximate posterior (A1.) mean and (A2.) variances relative to ground truth (running NUTS with \texttt{Stan}).  Lower and further left is better.  \emph{Right} (B.): Credible set calibration across all parameters and repeated experiments. (C.): Approximate posterior standard deviations for a subset of parameters.  The grey line reflects zero error.}\label{fig:comp_trade_off_laplace}}
    \belowcapvspace
\end{figure*}

\textbf{Computational--statistical trade-offs.}
\Cref{fig:comp_trade_off_laplace}A shows empirically the 
tunable computational--statistical trade-off offered by varying $M$ in our low-rank data approximation. This plot
depicts the error in posterior mean and variance estimates relative to results from the No-U-Turn Sampler (NUTS) in \texttt{Stan} \citep{hoffman2014no,carpenter2017stan}, which we treat as ground truth. As expected, \methodnamelap with larger $M$ takes longer to run but yields lower errors.  Random-Laplace was usually faster but provided a poor posterior approximation.  Interestingly, the error of the Random-Laplace approximate posterior mean actually increased with the dimension of the projection. We conjecture this behavior may be due to Random-Laplace prioritizing covariate directions that are correlated with directions where the parameter, $\beta$, is large.

We also consider predictive performance via the classification error rate and the average negative log likelihood. In particular, we generated a \emph{test} dataset with covariates drawn from the same distribution as the observed dataset and an \emph{out-of-sample} dataset with covariates drawn from a different distribution (see \Cref{sec:experimental_details}). The computation time vs.\ performance trade-offs, presented in \Cref{fig:pred_error} on the test and out-of-sample datasets, mirror the results for approximating the posterior mean and variances. In this evaluation, correctly accounting for posterior uncertainty appears less important for in-sample prediction. But in the {out-of-sample} case, we see a dramatic difference in negative log likelihood. Notably, ADVI-MF and diagonal Laplace exhibit much worse performance.  These results support the utility of correctly estimating Bayesian uncertainty when making out-of-sample predictions.

\textbf{Conservativeness.}
A benefit of \methodname is that the posterior approximation never underestimates the posterior uncertainty (see \Cref{cor:conservative}). \Cref{fig:comp_trade_off_laplace}C illustrates this property for \methodnamelap applied to logistic regression. When \methodnamelap misestimates posterior variances, it always overestimates. Also, when \methodnamelap misestimates means (\Cref{fig:means_and_variances_laplace}), the estimates shrink closer to the prior mean, zero in this case. These results suggest that \methodname interpolates between the exact posterior and the prior.  Notably, this property is not true of all methods. The diagonal Laplace approximation, by contrast, dramatically underestimates posterior marginal variances (see \Cref{sec:bad_marginals}).

\textbf{Reliability and calibration.}
Bayesian methods enjoy desirable calibration properties under correct model specification. But since \methodnamelap serves as a likelihood approximation, it does not retain this theoretical guarantee.  
Therefore, we assessed its calibration properties empirically by examining the credible sets of both parameters and predictions. We found that the parameter credible sets of \methodnamelap are extremely well calibrated for all values of $M$ between 20 and 400 (\Cref{fig:comp_trade_off_laplace}B and \Cref{fig:credible_set_calibration}). The prediction credible sets were well calibrated for all but the smallest value of $M$ tested ($M=20$); in the $M=20$ case, \methodnamelap yielded under-confident predictions (\Cref{fig:pred_calibration}).
The good calibration of \methodnamelap stood in sharp contrast to the diagonal Laplace approximation and ADVI-MF. Random-Laplace also provided inferior calibration (\Cref{fig:credible_set_calibration,fig:pred_calibration}).

\textbf{\methodname with MCMC and non-Gaussian priors.}
In \Cref{sec:fast_mcmc} we argued that \methodname speeds up MCMC for GLMs by decreasing the cost of likelihood and gradient evaluations in black-box MCMC routines. We first examined LR-MCMC with NUTS using \texttt{Stan} on the same synthetic datasets as we did for \methodnamelap. In \Cref{fig:means_and_variances_HMC,fig:comp_trade_off_hmc}, we see a similar conservativeness and computational--statistical trade-off as for \methodnamelap, and superior performance relative to alternative methods. 

We expect MCMC to yield high-quality posterior approximations across a wider range of models than Laplace approximations.
For example, for multimodal posteriors and other posteriors that deviate substantially from Gaussianity. 
We next demonstrate that LR-MCMC is useful in these more general cases.
In high-dimensional settings, practitioners are often interested in identifying a sparse subset of parameters that significantly influence responses. 
This belief may be incorporated in a Bayesian setting through a sparsity-inducing prior such as the spike and slab prior or the horseshoe \cite{george1993variable,carvalho2009handling}. 
However, posteriors in these cases may be multimodal, and scalable Bayesian inference with such priors is a challenging, active area of research \cite{guan2011bayesian,yang2016computational,johndrow2017scalable}.
To demonstrate the applicability of low-rank data approximations to this setting, we ran NUTS using \texttt{Stan} on a logistic regression model with a regularized horseshoe prior \citep{carvalho2009handling,piironen2017sparsity}.  In \Cref{fig:horseshoe}, we see an attractive trade-off between computational investment and approximation error. For example, we obtained relative mean and standard deviation errors of only about $10^{-2}$ while reducing computation time by a factor of three.

We also applied LR-MCMC to linear regression with the regularized horseshoe prior on a dataset with very correlated covariates and $D=6{,}238$.  However, this sampler exhibited severe mixing problems, both with and without the approximation, as diagnosed by large $\hat R$ values in \texttt{pyStan}.  These issues reflect the innate challenges of high-dimensional Bayesian inference with the horseshoe prior and correlated covariates.

\textbf{Scalability to large-scale real datasets.} 
Finally, we explored the applicability of \methodnamelap to two real, large-scale logistic regression tasks (\Cref{fig:real_data_results}).
The first is the UCI Farm-Ads dataset, which consists of $N=$ {4,143} online advertisements for animal-related topics together with binary labels indicating whether the content provider approved of the ad; there are $D=$ {54,877} bag-of-words features per ad \citep{Dua:2017}.
As with the synthetic datasets, we evaluated the error in the approximations of posterior means and variances.
As a baseline to evaluate this error, we use the usual Laplace approximation because the computational demands of MCMC preclude the possibility of using it as a baseline.

As a second real dataset we evaluated our approach on the Reuters RCV1 text categorization test collection \cite{amini2009learning,chang2011libsvm}.
RCV1 consists of $D =$ {47,236} bag-of-words features for $N=$ {20,241} English documents grouped into two different categories.
We were unable to compare to the full Laplace approximation due to the high-dimensionality, so we used \methodnamelap with $M=$ {20,000} as a baseline.
For both datasets, we find that as we increase the rank of the data approximation, we incur longer running times but reduced errors in posterior means and variances.
Laplace and Diagonal Laplace do not provide the same computation--accuracy trade-off. 

\begin{figure}[!ht]
    \centering
    {\includegraphics[width=1.0\linewidth]{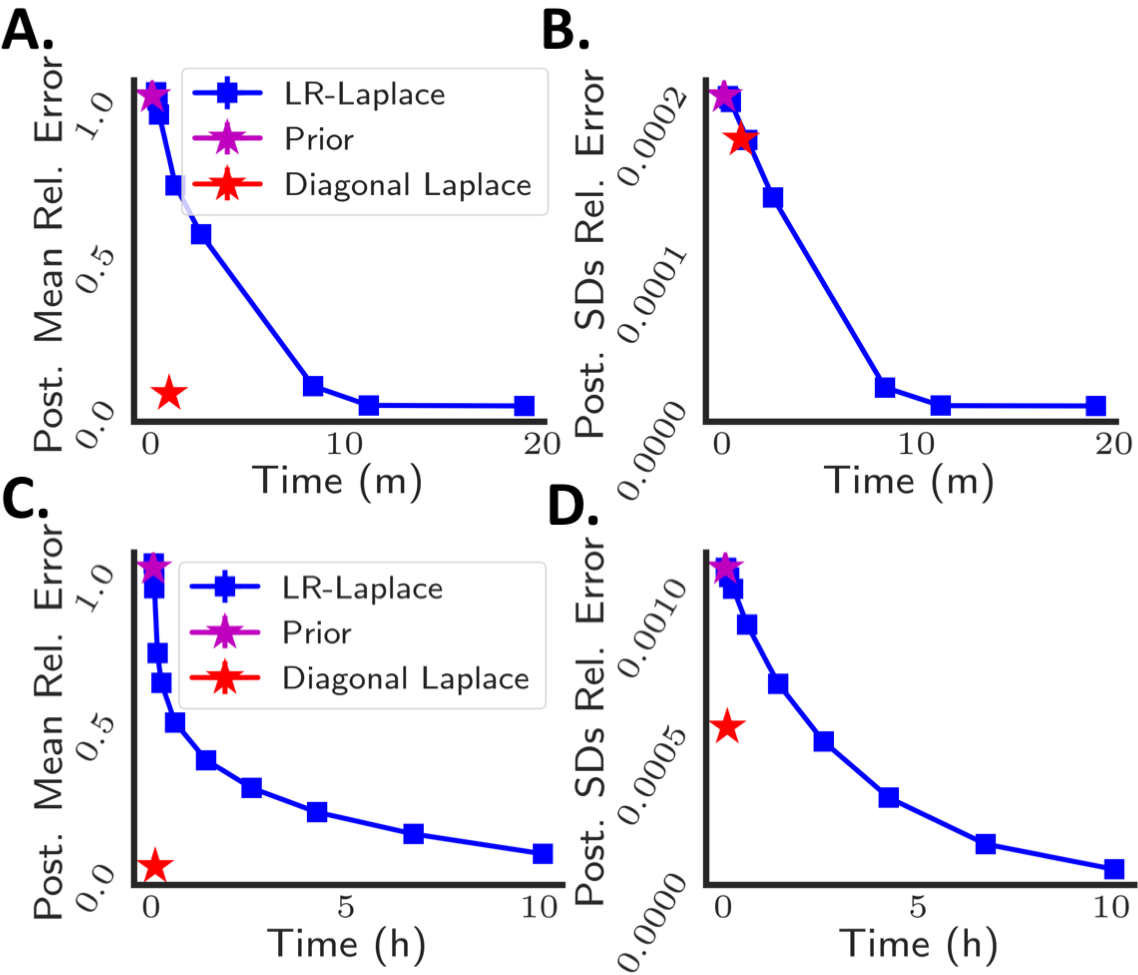}}
\abovecapvspace
    {\caption{\methodnamelap approximation quality on Farm-Ads (top) and RCV-1 (bottom) datasets with varying $M$.
    (A.) Farm-Ads error in the posterior mean and (B.) Farm-Ads error in posterior variances (C.) RCV-1 error in posterior mean and (D.) RCV-1 error in posterior variances.}\label{fig:real_data_results}}
    \belowcapvspace
\end{figure}

\textbf{Choosing $M$.}
Applying \methodname requires choosing the rank $M$ of the low rank approximation.  
As we have shown, this choice characterizes a computational--statistical trade-off whereby larger $M$ leads to linearly larger computational demands, but increases the precision of the approximation.
As a practical rule of thumb, we recommend setting $M$ to be as large as is allowable for the given application without the resulting inference becoming too slow.
For our experiments with \methodnamelap, this limit was $M \approx 20{,}000$. 
For LR-MCMC, the largest manageable choice of $M$ will be problem dependent but will typically be much smaller than {20,000}. 


\section*{Acknowledgments}
This research is supported in part by an NSF CAREER Award, an ARO YIP Award, a Google Faculty Research Award, a Sloan Research Fellowship, and ONR.  BLT is supported by NSF GRFP.


\bibliography{./references}
\bibliographystyle{icml2019}

\opt{arxiv}{\newpage\onecolumn\appendix 
\numberwithin{equation}{section}
\numberwithin{figure}{section}

\section{Additional Experimental Details and Empirical Results}

\subsection{Experimental Details}\label{sec:experimental_details}
For all experiments we sampled $\beta$ from an isotropic Gaussian prior with unit variance.
For all synthetic data results we first generated a design matrix by sampling from a zero-mean Gaussian with diagonal covariance $\Sigma$ with each $\Sigma_{i,i} = 5*1.
05^{-i}$.
We then used a scikit-learn \citep{Pedregosa} implementation of a randomized SVD algorithm due to \citet{Halko2009}, computed from two iterations (i.e., passes through $X$).

To assess robustness, in all experiments we used three or more replicate experiments, defined by independently generated synthetic datasets or train/test splits as well as re-rerunning the randomized truncated SVD.

The performance of the Diagonal Laplace approximation is dependent upon the shape the exact posterior at $\beta^\mathrm{MAP}$. In particular, using a dataset with axis aligned covariance structure gives Diagonal Laplace an unrealistic advantage given that in most real applications we do not believe that low-rank structure will be axis aligned.
As such, for all synthetic data experiments presented, we randomly generated a basis of orthonormal vectors and used this basis to rotate our the design matrix.
This rotation preserves the spectral decay of the data but eliminates the axis alignment of the synthetic data.

In all experiments we consider $N=2{,}500$ training examples.
We obtained results on ``Out of Sample Data'' (in \Cref{fig:pred_error,fig:pred_calibration}) by sampling $X$ from an alternative distribution over covariates.
Specifically, we generated these out-of-sample covariates in the manner described above, but with a different random rotation matrix.

We found MAP estimation using L-BFBS-B to be the most efficient of several available options in the scipy optimize library, and used this method in all MAP estimation and Laplace approximation experiments.

For all Bayesian predictions, we use the probit approximation to the logistic function to enable fast approximation \citep[Chap.\ 4.5]{Bishop2006}.

\subsection{Additional Figures}
In \Cref{fig:pred_error} we present results on prediction performance, in term of classification error, as well as negative log likelihood, reported for ``Training'', ``Test'', and ``Out of Sample Data''.  In \Cref{fig:means_and_variances_laplace} we report the error of LR-Laplace and Random-Laplace relative to NUTS for estimation of posterior means and variances.  We see here that the estimates exhibit behavior increasingly similar to that of the prior as the rank of the approximation, $M$, decreases.  Next, \Cref{fig:means_and_variances_HMC} depicts the same error trends for LR-MCMC using NUTS in \texttt{Stan}.  We report calibration performance of the approximations of interest for credible sets of parameters (\Cref{fig:credible_set_calibration}) as well as for prediction (\Cref{fig:pred_calibration}).

We additionally include results analogous to those in the main text for Laplace approximations using low-rank data approximations to perform faster MCMC using NUTS with \texttt{Stan} \citep{carpenter2017stan}, in \Cref{fig:comp_trade_off_hmc}.  Finally, we also here provide the relative error of posterior mean and standard deviation estimation for logistic regression with a regularized horseshoe prior using the LR-MCMC approximation in \Cref{fig:horseshoe}.  This experiment uses \texttt{Stan} for inference as well.
\begin{figure}[!ht]
    \centering
    {\includegraphics[width=0.90\linewidth]{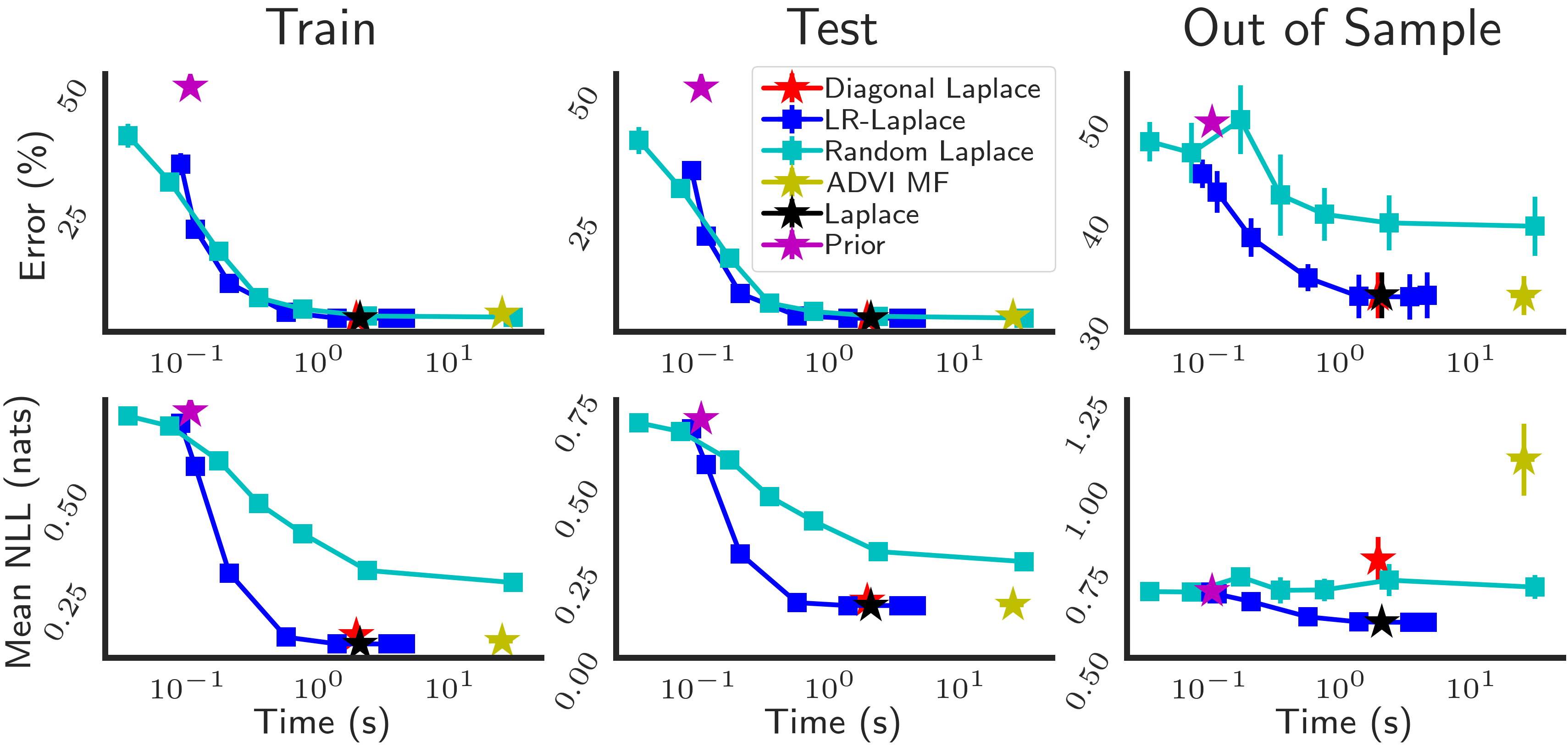}}
\abovecapvspace
    {\caption{Predictive performance of posterior approximations in Bayesian logistic regression in terms of (Top) classification error and (Bottom) average negative log likelihood (NLL) of responses under approximate posterior predictive distributions on (Left) \emph{train}, (Center) \emph{test} and  (Right) \emph{out of sample} datasets.
    Lower is better.}\label{fig:pred_error}}
    \belowcapvspace
\end{figure}

\begin{figure*}[!ht]
    \centering
    \includegraphics[width=0.8\linewidth]{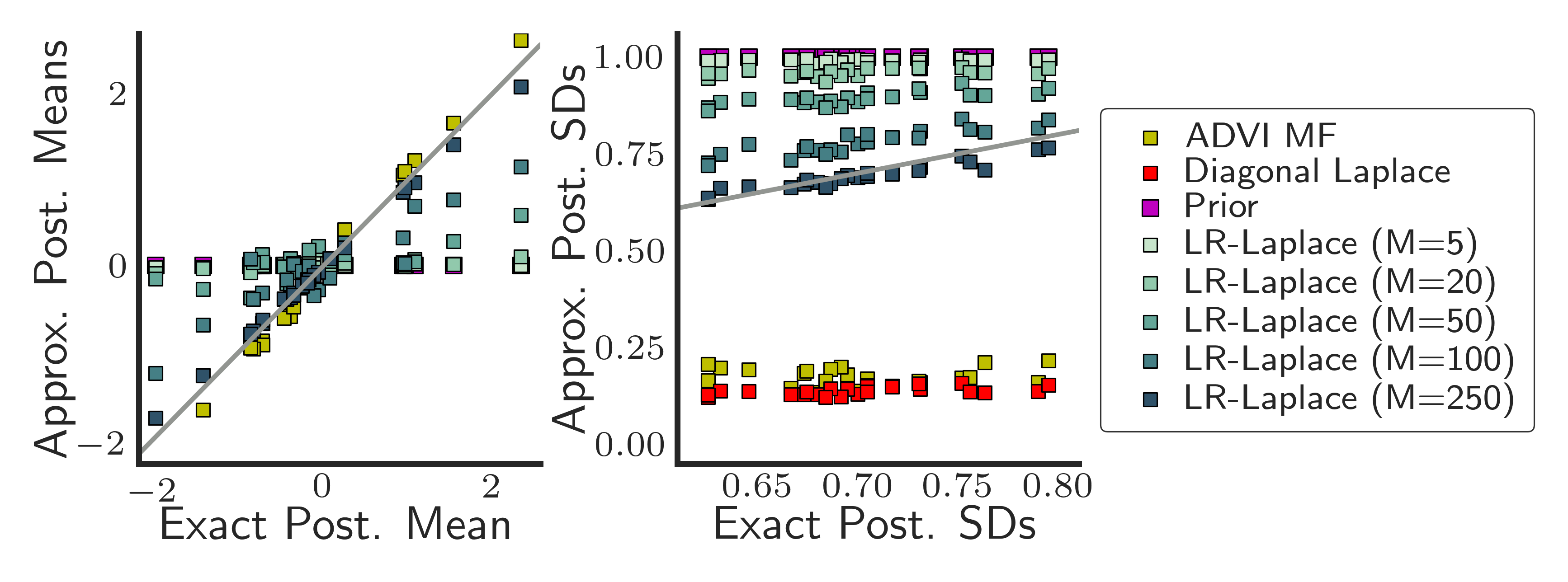}
    \caption{Approximate posterior mean and standard deviation across a parameter subset as $M$ varies. Horizontal axis represents ground truth from running NUTS using \texttt{Stan} without the \methodname approximation.  $D=250$.}\label{fig:means_and_variances_laplace}
\end{figure*}

\begin{figure}[htbp]
\centering
 {\includegraphics[width=.8\linewidth]{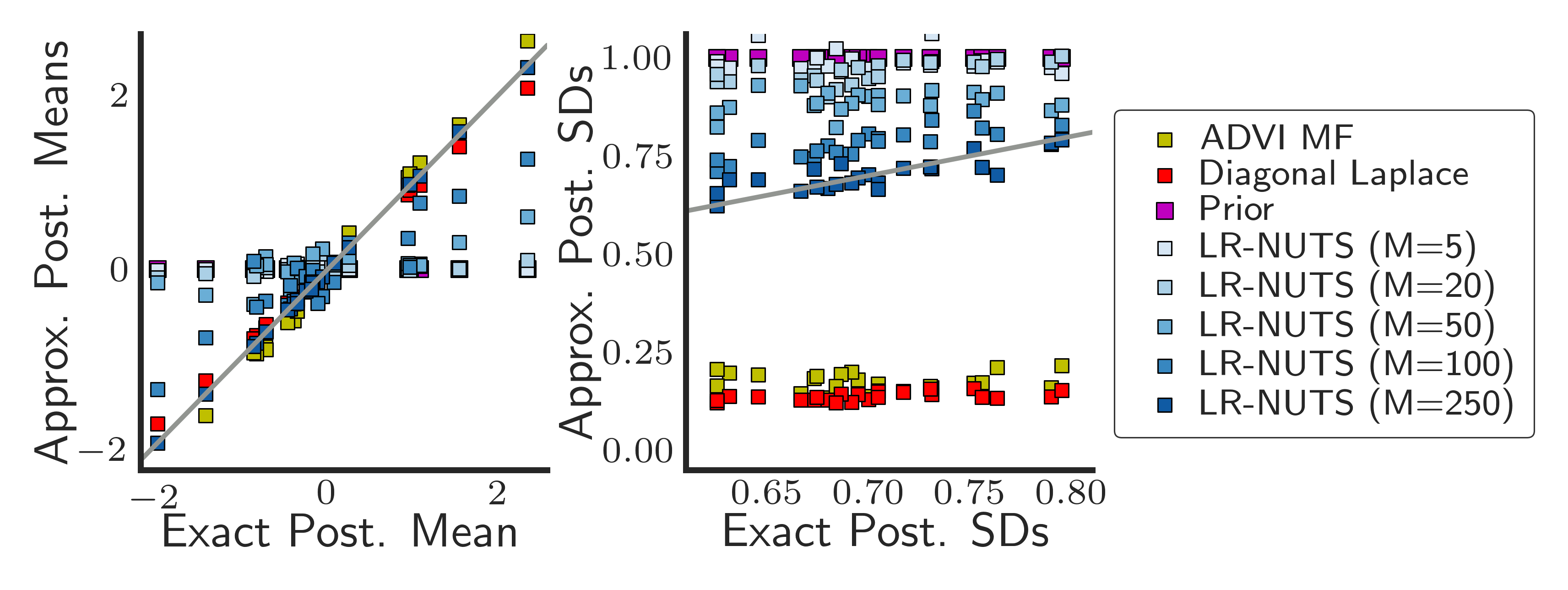}}
    {\caption{This figure is analogous to \Cref{fig:means_and_variances_laplace} but examines the trade-off between computation and accuracy of LR-MCMC using NUTS in \texttt{Stan}. $D=250$.}\label{fig:means_and_variances_HMC}}
\end{figure}

\begin{figure}[!ht]
    \centering
    {\includegraphics[width=0.5\linewidth]{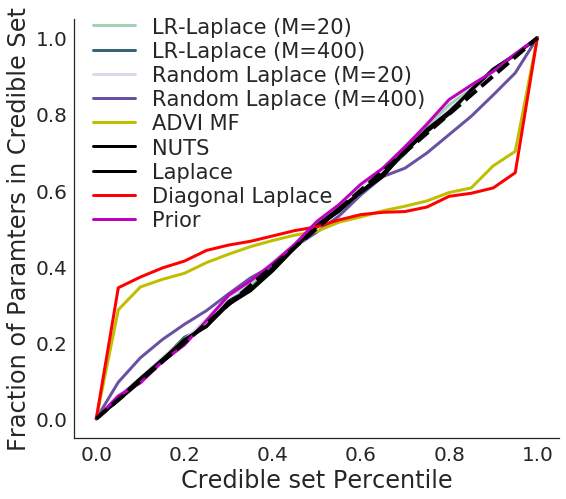}}
\abovecapvspace
    {\caption{Credible set calibration. The fraction of parameters in the credible sets defined by different lower tail intervals as a function of the approximate posterior probability of parameters taking values in that interval. The black dotted line (on the diagonal) reflects perfect calibration.}\label{fig:credible_set_calibration}}
    \belowcapvspace
\end{figure}

\begin{figure*}[!ht]
  \centering
  {\includegraphics[width=0.75\linewidth]{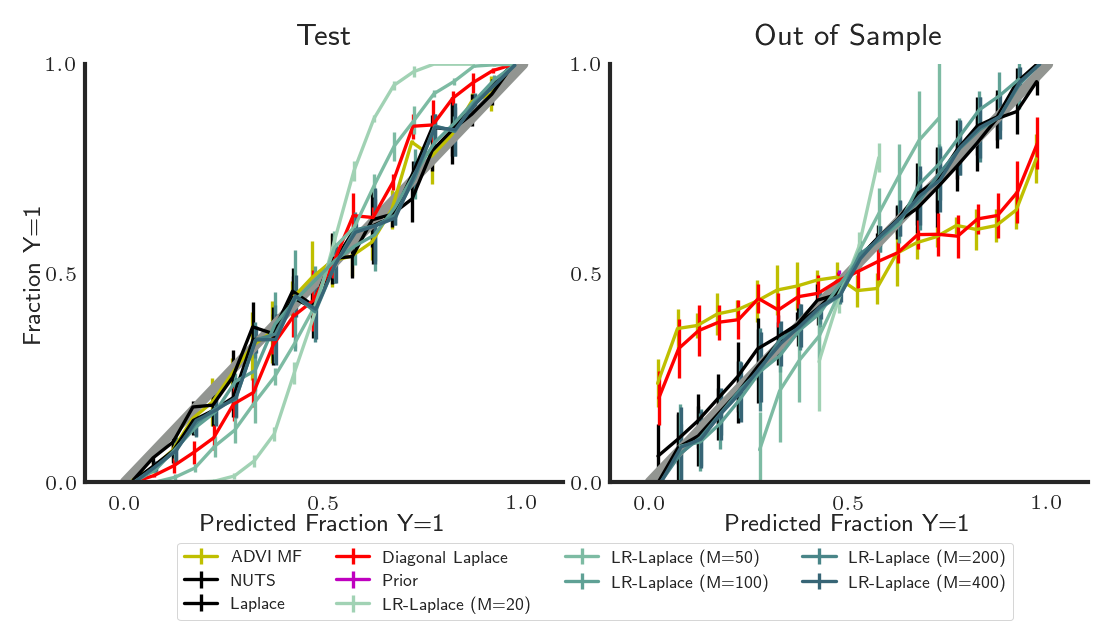}}
  {\includegraphics[width=0.75\linewidth]{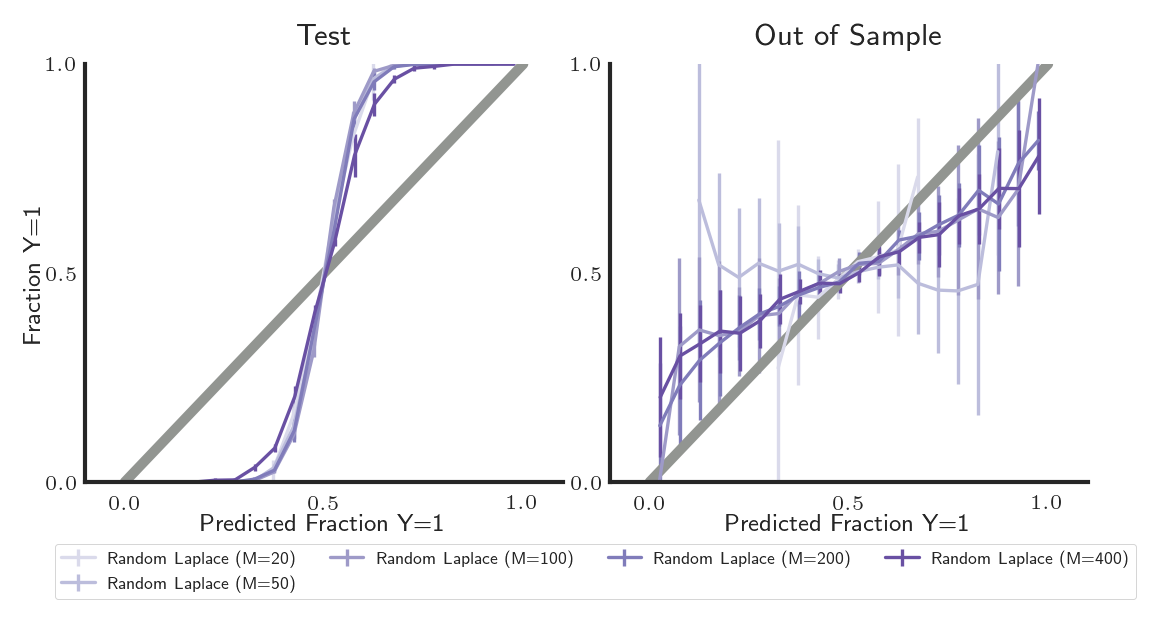}}
  {\caption{Prediction calibration.}\label{fig:pred_calibration}}
\end{figure*}

\begin{figure}[htbp]
\centering
 {\includegraphics[width=.75\linewidth]{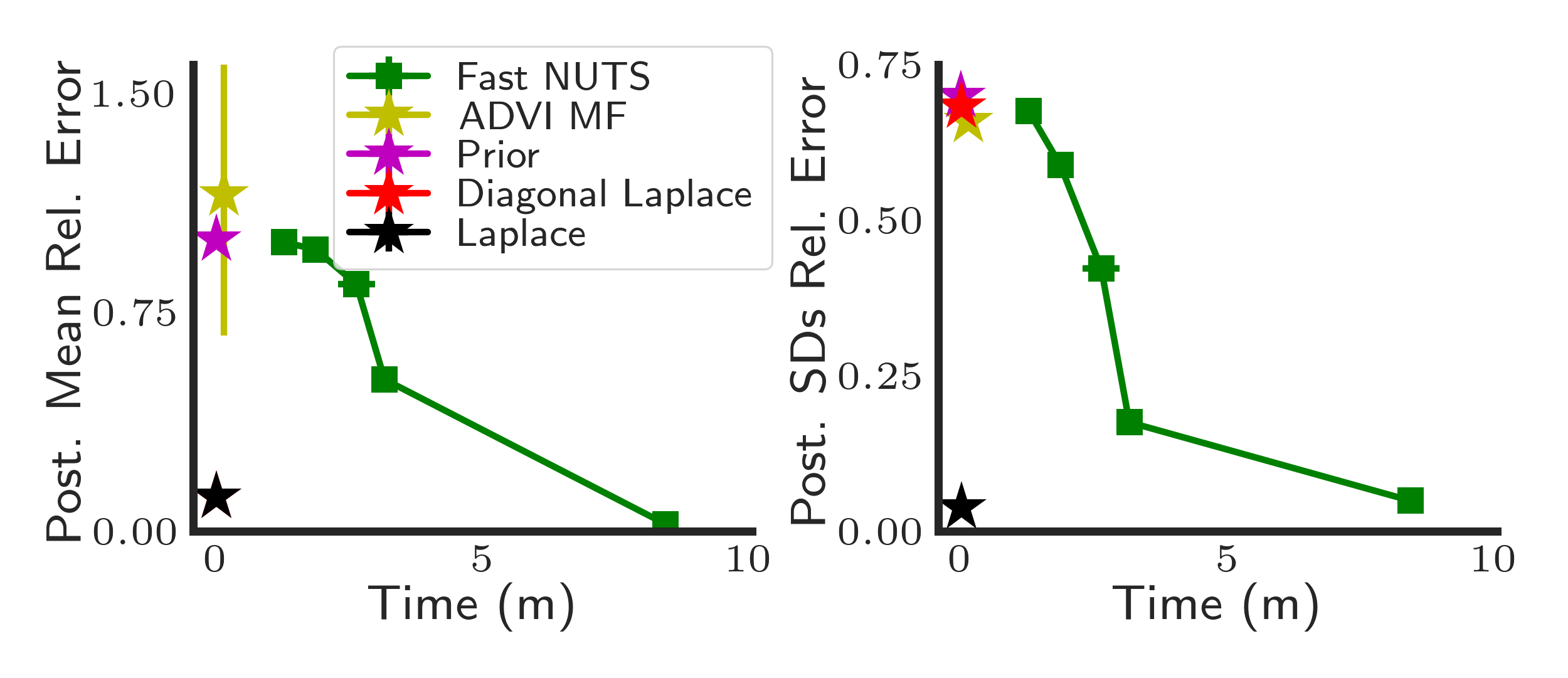}}
    {\caption{This figure is analogous to \Cref{fig:comp_trade_off_laplace}A but assesses LR-MCMC using NUTS in \texttt{Stan} rather than \methodnamelap. $D=250$.}
\label{fig:comp_trade_off_hmc}}
\end{figure}

\begin{figure}[!ht]
    \centering

    {\includegraphics[width=0.4\linewidth]{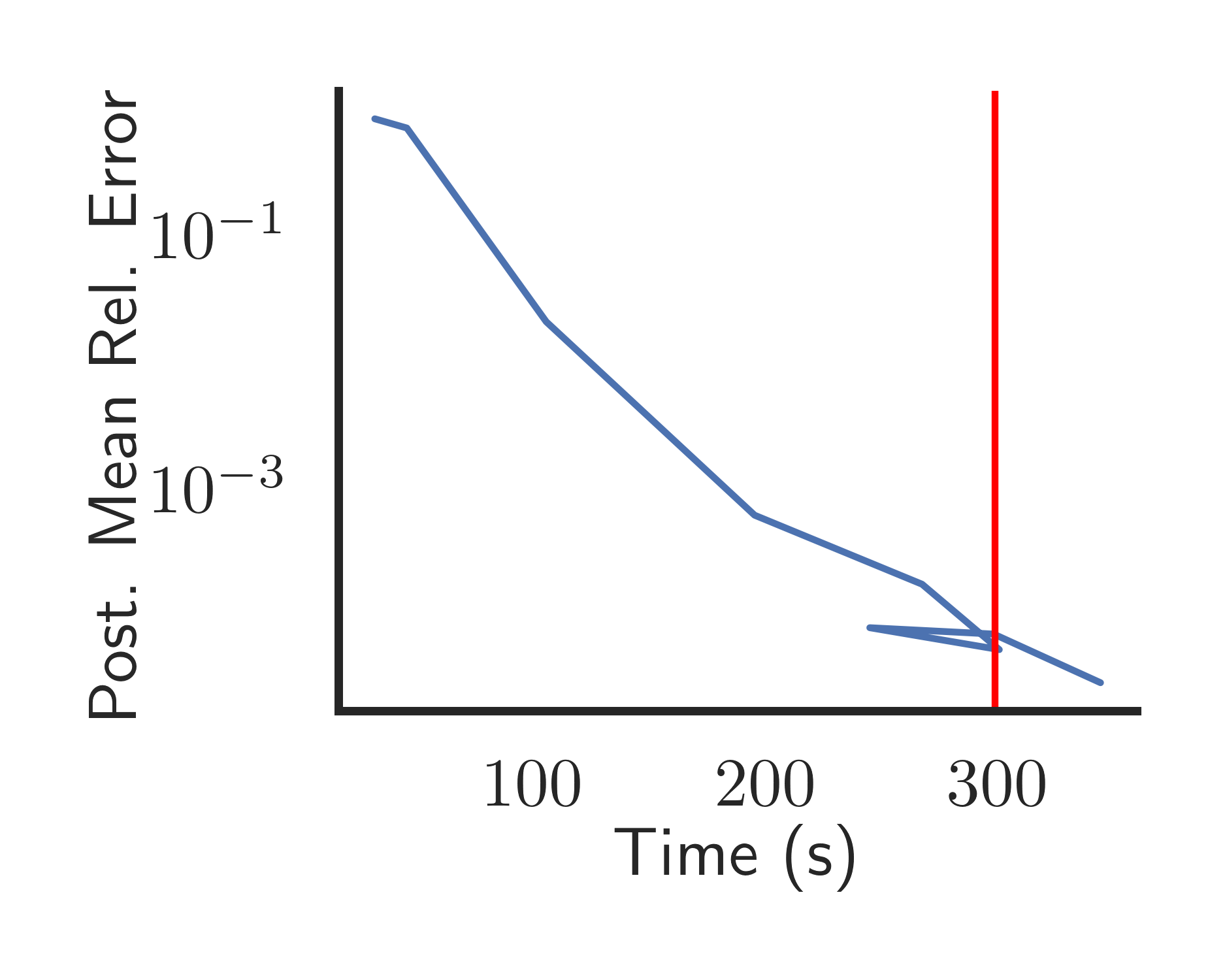}}{\includegraphics[width=0.4\linewidth]{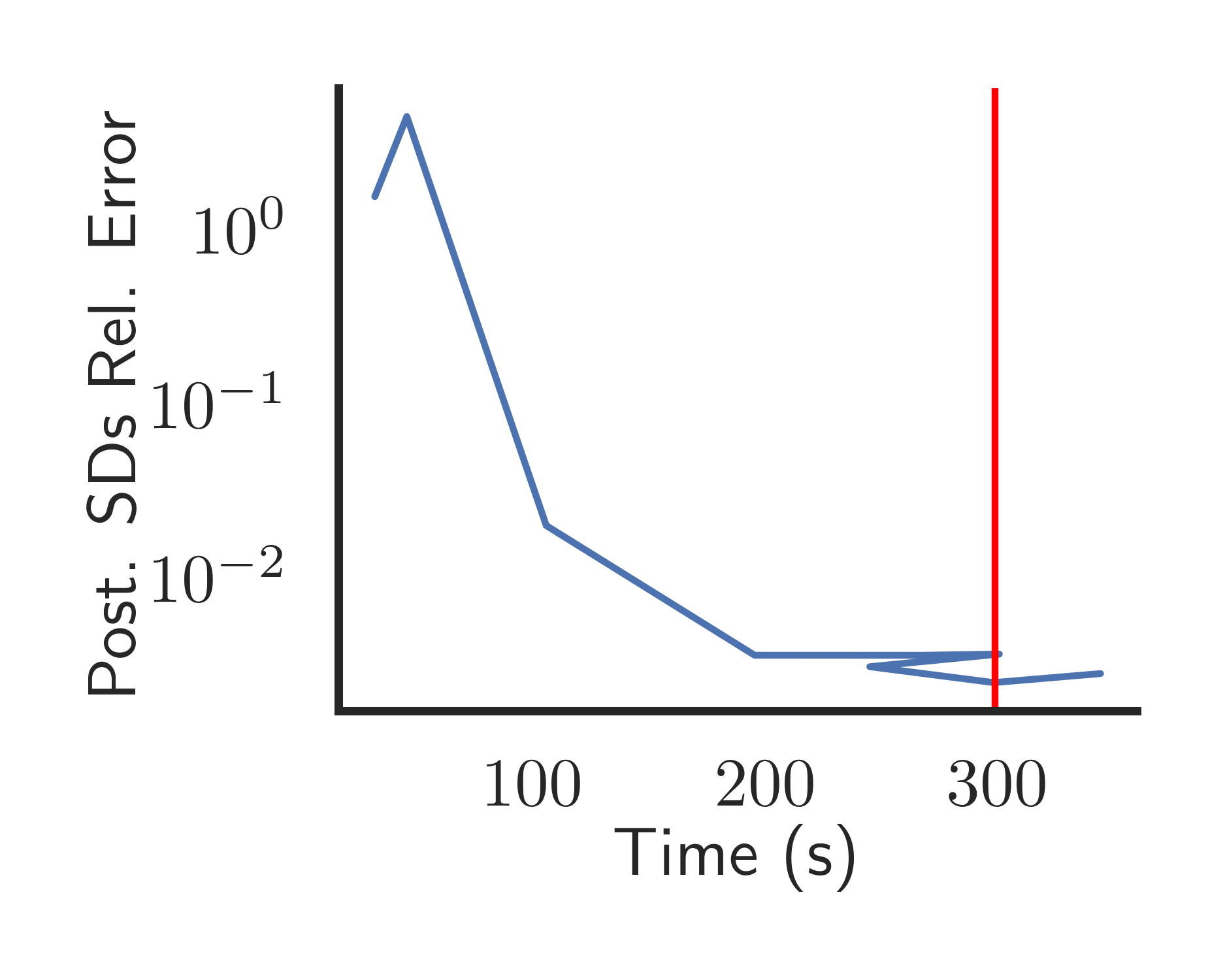}}
\abovecapvspace
    {\caption{Bayesian logistic regression with a regularized Horseshoe prior using NUTS in \texttt{Stan}. The red vertical line indicates the runtime of inference with \texttt{Stan} using the exact likelihood.}\label{fig:horseshoe}}
    \belowcapvspace
\end{figure}

\subsubsection{Horseshoe logistic regression experiment}
For the logistic regression experiment using a regularized horseshoe prior we used $N=1{,}000$ data points of dimension $D={200}$.
We used ten non-zero effects, each of size 10.  
Our implementation of the regularized horseshoe and inference in \texttt{Stan} closely followed M.\ Betancourt's  ``Bayes Sparse Regression'' case study.\footnote{\url{https://betanalpha.github.io/assets/case_studies/bayes_sparse_regression.html}}  We generated covariates as described in the previous section.

\subsection{Stan Model Code}\label{sec:stan_code}

First we show \texttt{Stan} code for Bayesian logistic regression. 
\begin{verbatim}
data {
 int<lower=1> N; // # of data
 int<lower=1> D; // # of covariates
 matrix[N, D] X; // Design matrix
 int<lower=0> y[N]; // labels
 real<lower=0> sigma;
}
parameters {
 vector[D] beta;
}
model {
 beta ~ normal(0, sigma);
 y ~ bernoulli_logit(X * beta);
}
\end{verbatim}

Second, we show \texttt{Stan} code for logistic regression with our low-rank approximation. 
\begin{verbatim}
data {
 int<lower=1> N; // # of data
 int<lower=1> D; // # of covariates
 int<lower=1> M; // Projected dimension
 matrix[D, M] U; // Projection matrix
 matrix[N, M] barX; // Projected design matrix
 int<lower=0> y[N]; // labels
 real<lower=0> sigma;
}
parameters {
 vector[D] beta;
}

transformed parameters {
 vector[M] bar_beta = U' * beta;
}
model {
 beta ~ normal(0, sigma);
 y ~ bernoulli_logit(barX * bar_beta);
}
\end{verbatim}


\section{Related Work on Scalable Bayesian Inference}\label{sec:related_work}
Developing scalable approximate Bayesian inference for models with many parameters (large $D$) and many data points (large $N$) has been active area of research for decades, and researchers have developed a large variety of methods applicable to GLMs.
Historically, Markov chain Monte Carlo (MCMC) methods based on the Metropolis-Hastings algorithm \cite{metropolis1953equation,hastings1970monte} have been dominant.
However MCMC is computationally expensive on large-scale problems in which both $D$ and $N$ are very large.
In particular, each likelihood evaluation requires $O(DN)$ time, due to the matrix vector product $X\beta$. Further, estimating posterior covariances uniformly well requires $O(\log D)$ samples \citep{cai2010optimal}. Therefore, the total cost of collecting those samples is $O(ND\log D)$ time in the case of perfect, independent Monte Carlo samples.
In practice, though, mixing times may also have unfavorable scaling with dimensionality and sample size; these issues can lead to even worse scaling in $N$ and $D$.
Several lines of research have explored the use of subsampling methods to reduce the dependence on $N$. 
But these methods either lose the asymptotic guarantees of exact MCMC or fail to provide faster inference in practice due to poor mixing behavior \citep{bardenet2017markov}.

Other work has pursued deterministic approximations to the Bayesian posterior.
Some of the most widely used of these approximations include (1) the Laplace approximation, which is a Gaussian approximation of the posterior defined locally at the posterior mode, (2) extensions of the Laplace approximation such as the integrated nested Laplace approximation (INLA) \citep{rue2009approximate}, and (3) variational Bayes; see, e.g., \citep[Chap.\ 10]{Bishop2006} and \citep{blei2017variational}.
However, these approaches also scale poorly with dimension in general.
The Laplace approximation requires computing and inverting the Hessian of the log posterior which demand $O(ND^2)$ and $O(D^3)$ time respectively, in order to compute approximate posterior means and variances.
In the $N \ll D$ setting, this cost can be reduced to $O(N^2 D)$ time (\Cref{sec:fast_inversions}). However, in large-$N$ settings of interest, the $O(N^2 D)$ cost can be prohibitive as well.
The cost of inference is further compounded when we give a fully Bayesian treatment to model hyperparameters as well as parameters; e.g., INLA requires this heavy computation for each nested approximation.
In the face of difficulties posed by high dimensionality, practitioners frequently turn to factorized (or ``mean-field'') approximations.
In the case of VB, the mean-field approach can yield biased approximations that underestimate uncertainty \cite{mackay2003information,turner2011two}. Likewise, factorized Laplace approximations, which approximate the Hessian with only its diagonal elements, similarly underestimate uncertainty (\Cref{sec:bad_marginals}).

Some more recent work has approached scalable approximate inference in generalized linear models with theoretical guarantees on quality in the large-$N$ regime by using likelihood approximations that are cheap to evaluate  \cite{Huggins2017a,campbell2017automated,pmlr-v80-campbell18a,huggins2016coresets}. But these methods fail to scale well to the large-$D$ case.

More closely related to the present work, \citet{geppert2017random} and \citet{lee2013bayesian} focus on conjugate Bayesian regression, respectively using random projections and principle component analysis to define low-rank descriptions of the design.
\citet{lee2013bayesian} restrict their consideration to the exactly low-rank case and primarily discuss the asymptotic consistency of the resulting posterior mean without discussing computational considerations.
\citet{spantini2015optimal} use conjugate Bayesian regression as stepping-off point to derive a point estimator for Bayesian inverse problems.
\citet{guhaniyogi2015bayesian} use random projections for Bayesian GLMs but focus on predictive performance rather than parameter estimation.
Outside the Bayesian context, \citet{zhang2014random}, \citet{wang2017sketching}, and many others have analyzed random projections for regression and classification using, for example, an M-estimation framework.

\section{Fast matrix inversions in the $N \ll D$ setting}\label{sec:fast_inversions}

In this section we focus on Gaussian conjugate linear regression with $N \ll D$. In this case, we can detail formulas for more efficient computation of the posterior mean and covariance.
We start from the standard expressions for the posterior mean $\mu_N$ and covariance $\Sigma_N$ when the prior is mean zero with covariance $\Sigma_{\beta}$; see \Cref{sec:proposal} and \Cref{sec:conj_low_rank} for further notation and setup of the model. These expressions are:
\[
\Sigma_N\inv &=\Sigma_\beta\inv + \tau X^\top X \label{eq:post-cov} \\
\mu_N &= \tau \Sigma_N X^\top Y. \label{eq:post-mean}
\]
Using these formulas naively in the $D \gg N$ setting is computationally expensive due to the $O(D^3)$ time cost of matrix inversion and $O(D^2)$ storage cost.

Using the Woodbury matrix identity, $(A\inv+UCV)\inv= A - AU(C\inv +VAU)\inv VA$, allows us to write $\Sigma_N = (\Sigma_\beta\inv + X^\top (\tau I_N) X)\inv$ as
\[
  \Sigma_N = \Sigma_\beta - \Sigma_\beta X^\top(\tau\inv I_N + X\Sigma_\beta X^\top)\inv X\Sigma_\beta. \label{eq:woodbury_cov}
\]
Computing $\Sigma_N$ via \Cref{eq:woodbury_cov} requires only $O(DN^2)$ cost for the matrix multiplications and an $O(N^3)$ cost for the matrix inversion. The posterior mean $\mu_N$ may then be computed in $O(ND)$ time by multiplying through by $X^\top Y$. These time costs can be significant reductions over the naive $O(D^3)$ cost when $N \ll D$.

\textbf{Fast inversions for the Laplace approximation to the GLM posterior}

We here show that the same approach described above may be used for the Laplace approximation in the context of Bayesian GLMs.  We say that we have a GLM likelihood if we can write
$$
	p( Y \mid \beta, X ) = \sum_{n=1}^N \phi(y_n, x_n^\top \beta)
$$
for some \emph{mapping function} $\phi: \R \times \R \rightarrow \R$. The Bayesian posterior then becomes
\[
    \log p(\beta \mid X, Y) = \log p(\beta) + \sum_{n=1}^N \phi(y_n, x_n^\top \beta) + Z,
\]
where $Z$ is a typically-intractable log normalizing constant.

Due to the analytic intractability of posterior inference in many common GLMs, approximations are necessary; the Laplace approximation is a particularly widely used approximation and takes the form
\[
    \bar p (\beta) =\mathcal{N}(\beta \mid \bar \mu, \bar \Sigma),
\]
where $\bar \mu := \argmax_\beta \log p (\beta \mid X, Y)$ and $\bar \Sigma := \left( - \nabla_\beta^2 \log p(\beta | X, Y)|_{\beta =\bar \mu} \right)\inv$.  However, as in the conjugate case, computing this matrix inverse naively can be expensive in the high-dimensional setting, and we are motivated to consider more computationally efficient routes to evaluate it.  In settings when $N \ll D$ and when we have a Gaussian prior $p(\beta)=\mathcal{N}(\beta \mid \mu_\beta, \Sigma_\beta)$, we may take an approach similar to our approach in the conjugate case. We first note
\[
    \nabla_\beta^2 \log p(\beta \mid X, Y)|_{\beta=\bar \mu} = -\Sigma_\beta\inv + X^\top \diag(\phipp(Y, X\bar \mu)) X,
\]
where $\phipp(Y,A)$ is a vector in $\R^N$ defined such that for any $n$ in $1,2, \dots, N$, $\phipp(Y, A)_n:=\frac{d^2}{da^2}\phi(y_i, a)|_{a=A_n}$.  Applying the same trick to this expression as before, we obtain
\[\label{eqn:woodbury_for_laplace}
  \bar \Sigma_N =\left( -\nabla_\beta^2 \log p(\beta \mid X, Y)|_{\beta=\bar \mu}\right)\inv = \Sigma_\beta - \Sigma_\beta X^\top \big(\diag[-\phipp(Y,X\bar \mu )]\inv + X\Sigma_\beta X^\top  \big)\inv X \Sigma_\beta,
\]
which again can yield computational gains.

It is worth noting however that this route is more computationally efficient only when the prior covariance matrix is structured in some way that allows for fast matrix-vector and matrix-matrix multiplications. This will be the case, for example, if $\Sigma_\beta$ is diagonal, block-diagonal, banded diagonal, or diagonal plus a low-rank matrix.

\section{Conjugate Gaussian regression with exactly low rank design}

\subsection{Derivation of \Cref{eqn:low_rank_blr}}\label{sec:low_rank_blr_proof}

Here we consider the setting of conjugate Bayesian linear regression, with $X$ exactly low rank and $\Sigma_\beta = \sigma_\beta^2 I_D$, as detailed in \Cref{sec:conj_low_rank}.
We now derive the expressions (\Cref{eqn:low_rank_blr}) for the mean and covariance of the Gaussian posterior for $\beta$ in this case. We suppose $X=V\mathrm{diag}(\lambda)U^\top $ for $U, V$ matrices of orthonormal rows and $\lambda$ a vector. The preceding equation for $X$ will capture low rank structure when $U \in \mathbb{R}^{D \times M}$ for some $M$ with $M \ll \min(D,N)$.

For the covariance, we start from \Cref{eq:post-cov}. Then we can rewrite $\Sigma_N$ as follows.
\begin{align*}
  \Sigma_N &= \big(\sigma_\beta^{-2}I_D + \tau X^\top X\big)\inv \\
      &= \big(\sigma_\beta^{-2}I_D + \tau U\diag(\lambda)V^\top V \diag(\lambda)U^\top \big)\inv \\
      &= \big(\sigma_\beta^{-2}I_D + U\diag(\tau \lambda \odot \lambda)U^\top \big)\inv \\
      & \textrm{where $\odot$ denotes component-wise multiplication, in this case across the components of the vector $\lambda$} \\
      &= \sigma_\beta^2 I- \sigma_\beta^2 U (\mathrm{diag}(\tau \lambda \odot \lambda)\inv + \sigma_\beta^2 I_M)\inv U^\top 
\sigma_\beta^2 \\
      & \textrm{by the Woodbury matrix identity and $U^\top U = I_M$} \\
    &= \sigma_\beta^2 I- \sigma_\beta^2 U
\mathrm{diag}\left\{ \left( \frac{1}{\tau \lambda \odot \lambda} +
\sigma_\beta^2 1_M \right)\inv \sigma_\beta^2 \right\} U^\top \\
      & \textrm{where division within the diag input is component-wise and $1_M$ is the all-ones vector of length $M$} \\
      &= \sigma_\beta^2 \left(
      		I_D - U \mathrm{diag}\left\{
      			\frac{\tau \lambda \odot \lambda }{\sigma_\beta^{-2} 1_M + \tau \lambda \odot \lambda }
		\right\} U^\top
		\right).
\end{align*}

Starting from \cref{eq:post-mean}, we can rewrite the posterior mean as follows.
\begin{align*}
  \mu_N &= \tau \Sigma_N X^\top Y\\
  	&= \tau \sigma_\beta^2 \left(
      		I_D - U \mathrm{diag}\left\{
      			\frac{\tau \lambda \odot \lambda }{\sigma_\beta^{-2} 1_M + \tau \lambda \odot \lambda }
		\right\} U^\top
		\right) U \diag(\lambda) V^\top Y
	\\
	& \textrm{from the derivation above and substituting for $X$} \\
	&= \tau \sigma_\beta^2 \left(
      		U - U \mathrm{diag}\left\{
      			\frac{\tau \lambda \odot \lambda }{\sigma_\beta^{-2} 1_M + \tau \lambda \odot \lambda }
		\right\}
		\right) \diag(\lambda) V^\top Y \\
	& \textrm{since $U^\top U = I_M$} \\
	&= \tau \sigma_\beta^2 U \left(
      		I_M - \mathrm{diag}\left\{
      			\frac{\tau \lambda \odot \lambda }{\sigma_\beta^{-2} 1_M + \tau \lambda \odot \lambda }
		\right\}
		\right) \diag(\lambda) V^\top Y \\
	&=  U \mathrm{diag}\left\{
      			\frac{\tau \lambda }{\sigma_\beta^{-2} 1_M + \tau \lambda \odot \lambda }
		\right\} V^\top Y.
\end{align*}

\section{Proofs and further results for conjugate Bayesian linear regression with low-rank data approximations}

\subsection{Proof of \Cref{thm:bayes_lin_reg_approx_quality} }\label{proof:bayes_lin_reg_approx_quality}
Recall that for conjugate Gaussian Bayesian linear regression, the exact posterior is $p(\beta \mid X, Y)=\mathcal{N}(\beta \mid \mu_N, \Sigma_N)$, where $\mu_{N}$
and $\Sigma_{N}$ are given in \cref{eq:post-mean,eq:post-cov}. 

Using an orthonormal projection $U$ yields a Gaussian approximate posterior
$\tilde p(\beta \mid X, Y)=\mathcal{N}(\beta \mid \tilde \mu_N, \tilde \Sigma_N)$. Recall from \Cref{sec:proposal} that we obtain this approximate posterior by replacing $X$ with $X U U^\top$. Thus, we can find $\tilde \mu_N$ and $\tilde \Sigma_N$ by consulting \Cref{eq:post-mean,eq:post-cov}:
\begin{align}\label{eqn:mean_and_cov_general}
\tilde \Sigma_N\inv &= \Sigma_\beta\inv + \tau UU^\top X^\top XUU^\top \\
\tilde\mu_N &= \tilde \tau \Sigma_N UU^\top X^\top Y.
\end{align}

\textbf{Upper bound on the posterior mean approximation error}

We will obtain our upper bound on the error of the approximate posterior mean relative to the exact posterior mean by upper bounding the norm of the difference between the gradient of the log posterior with respect to $\beta$ at the approximate posterior mean, $\tilde \mu_N$, and the exact posterior mean, $\mu_N$.
Together with the strong convexity of the negative log posterior, this bound will allow us to arrive at the desired upper bound on $\|\mu_N - \tilde \mu_N\|_2$.

First, we bound the norm of the gradient difference. To that end, the gradients of the exact log likelihood and the approximate log likelihood are given by
\begin{align*}
  \nabla_\beta \log p(Y \mid X, \beta) &= \nabla_\beta\left[ - \frac{\tau}{2}(X\beta - Y)^\top (X\beta - Y) \right] \\
                   &= -\tau (X^\top X\beta - X^\top Y)
\end{align*}
and
\begin{align*}
  \nabla_\beta \log \tilde p(Y \mid X, \beta) &= \nabla_\beta\left[ - \frac{\tau}{2}(XUU^\top \beta - Y)^\top (XUU^\top \beta - Y) \right] \\
                      &= -\tau (UU^\top X^\top XUU^\top \beta - UU^\top X^\top Y).
\end{align*}
We can thus upper bound the norm of the difference between the two log posteriors as follows.
\begin{align}
  \nonumber
  \lefteqn{ \left\| \nabla_\beta \log \tilde p(\beta \mid X, Y) - \nabla_\beta \log p(\beta \mid X, Y) \right\|_2 } \\
  \nonumber
   & = \left\| \nabla_\beta \log \tilde p(Y \mid X, \beta) - \nabla_\beta \log p(Y \mid X, \beta) \right\|_2 \\
   \nonumber
   &\textrm{since the prior is the same in both the exact and approximate model } \\
   \nonumber
   &\quad \textrm{and since the normalizing constant has no $\beta$ dependence} \\
   \nonumber
   &= \left\|
   	-\tau \left(
   			UU^\top X^\top XUU^\top \beta - UU^\top X^\top Y
		\right) + \tau\left(
			X^\top X\beta - X^\top Y
		\right)
	\right\|_2 \\
\nonumber
   &= \tau \left\|
   		\left(
			X^\top X - UU^\top X^\top XUU^\top
		\right)\beta + UU^\top X^\top Y - X^\top Y
	\right\|_2 \\
	\nonumber
   &= \tau \left\| \bar U\bar U^\top X^\top X\bar U\bar U^\top \beta - \bar U\bar U^\top X^\top Y \right\|_2 \\
   \nonumber
   & \textrm{where $\bar U$ (above) as well as $\bar{\lambda}$ and $\bar{V}$ (below) are defined in \Cref{sec:proposal}} \\
   \nonumber
   &= \tau \left\| \bar U \diag(\bar{\lambda} \odot \bar{\lambda}) \bar U^\top \beta - \bar U \diag(\bar{\lambda}) \bar V^\top Y \right\|_2 \\
   \nonumber
    &\le \tau \left( \left\| \bar U \diag(\bar{\lambda} \odot \bar{\lambda}) \bar U^\top \beta \|_2 + \| \bar U \diag(\bar{\lambda}) \bar V^\top Y \right\|_2 \right)\\
   \nonumber
   & \textrm{by the triangle inequality} \\
   \nonumber
    &= \tau \left( \left\| \diag(\bar{\lambda} \odot \bar{\lambda}) \bar U^\top \beta \|_2 + \| \diag(\bar{\lambda}) \bar V^\top Y \right\|_2 \right) \\
   \nonumber
   &\textrm{since $\| v \|_2^2 = v^\top v$ for a vector $v$ and $U^\top U = I_M$} \\
    \nonumber
    &\le \tau \left( \left\| \diag(\bar{\lambda} \odot \bar{\lambda})\right\|_{\mathrm{op}} \left\| \bar U^\top \beta \|_2 + \| \diag(\bar{\lambda}) \|_{\mathrm{op}} \| \bar V^\top Y \right\|_2 \right) \\
   \nonumber
   &\textrm{by definition of the operator norm in this space} \\
    \label{eq:grad-upper-bound}
    &= \tau \left( \bar\lambda_1^2 \| \bar U^\top \beta \|_2 + \bar\lambda_1 \| \bar V^\top Y\|_2 \right) \\
    \nonumber
\end{align}

Second, we need a result that will let us use the strong convexity of the negative log posterior.
We prove the following result in \Cref{sec:strong-convexity-proof}.
\begin{lemma}\label{lemm:strong-convexity}
    Let $f, g$ be twice differentiable functions mapping $\R^D \rightarrow \R$ and attaining minima at $\beta_f = \argmin_\beta f(\beta)$ and $\beta_g= \argmin_\beta g(\beta)$, respectively.  Additionally, assume that $f$ is $\alpha$--strongly convex for some $\alpha >0$ on the set $\{t \beta_f + (1-t)\beta_g | t \in [0, 1] \}$ and that $\| \nabla_\beta f(\beta_g) - \nabla_\beta g(\beta_g)\|_2= \|\nabla_\beta f(\beta_g)\|_2 \le c$. Then
\[
\| \beta_f - \beta_g \|_2 \le \frac{c}{\alpha}.
\]
\end{lemma}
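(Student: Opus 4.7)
The plan is to invoke the first-order optimality conditions for both minimizers, use the $\alpha$-strong convexity of $f$ on the segment joining $\beta_f$ and $\beta_g$ to obtain a quadratic lower bound on the inner product $\langle \nabla f(\beta_g), \beta_g - \beta_f\rangle$, and then apply Cauchy--Schwarz together with the gradient-difference hypothesis to extract the desired norm bound. Concretely, since $\beta_f$ minimizes $f$ and $\beta_g$ minimizes $g$, we have $\nabla f(\beta_f) = 0$ and $\nabla g(\beta_g) = 0$. In particular, $\nabla f(\beta_g) - \nabla g(\beta_g) = \nabla f(\beta_g)$, so the hypothesis reduces to $\|\nabla f(\beta_g)\|_2 \le c$.

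Next, I would establish the monotonicity inequality
\[
\langle \nabla f(\beta_g) - \nabla f(\beta_f),\, \beta_g - \beta_f \rangle \ge \alpha \,\|\beta_g - \beta_f\|_2^2
\]
from the fact that $f$ is $\alpha$-strongly convex on the segment $S := \{t\beta_f + (1-t)\beta_g : t\in[0,1]\}$. The cleanest route is to define $h(t) := \langle \nabla f(\beta_f + t(\beta_g - \beta_f)),\, \beta_g - \beta_f\rangle$, note that $h(1) - h(0) = \int_0^1 (\beta_g - \beta_f)^\top \nabla^2 f(\beta_f + t(\beta_g - \beta_f))(\beta_g - \beta_f)\,dt$, and bound the integrand below by $\alpha\|\beta_g - \beta_f\|_2^2$ using strong convexity on $S$. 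Substituting $\nabla f(\beta_f) = 0$ gives $\langle \nabla f(\beta_g), \beta_g - \beta_f\rangle \ge \alpha\|\beta_g - \beta_f\|_2^2$.

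Finally, Cauchy--Schwarz on the left-hand side yields $\|\nabla f(\beta_g)\|_2\,\|\beta_g - \beta_f\|_2 \ge \alpha \|\beta_g - \beta_f\|_2^2$, and dividing by $\|\beta_g - \beta_f\|_2$ (handling the trivial case separately) combined with the hypothesis $\|\nabla f(\beta_g)\|_2 \le c$ gives $\|\beta_f - \beta_g\|_2 \le c/\alpha$.

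The only mild subtlety, which I would flag but not belabor, is that strong convexity is assumed only on the segment $S$ rather than globally. This is, however, exactly what the integration argument above needs: the Hessian bound $\nabla^2 f \succeq \alpha I$ is evaluated only at points of the form $\beta_f + t(\beta_g - \beta_f) \in S$, so the localized hypothesis suffices. No global convexity or smoothness of $g$ is required beyond the existence of its minimizer $\beta_g$, which is what makes the lemma applicable in the GLM setting of Theorem~\ref{thm:glm_mean_bound}.
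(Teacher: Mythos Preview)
Your proposal is correct and follows essentially the same approach as the paper: both integrate the Hessian of $f$ along the segment $S$, use the $\alpha$-strong convexity bound there, and invoke $\nabla f(\beta_f)=0$ to finish. The only cosmetic difference is that you phrase the key step via the monotonicity inner product and Cauchy--Schwarz, whereas the paper works directly with norms and the (reverse) triangle inequality; the underlying argument is the same.
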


To use the preceding result, we need a lower bound on the strong convexity constant of the negative log posterior; we now calculate such a bound.
We have that $\mu_N$ and $\tilde \mu_N$ are the maximum a posteriori values of $\beta$ under $p(\beta|X,Y,\alpha)$ and $\tilde p(\beta|X,Y,\alpha)$, respectively; equivalently they minimize the respective negative log of these distributions.
For a matrix $A$, let $\lambda_{\min}(A)$ denote its minimum eigenvalue. 
The Hessian of the negative log posterior with respect to $\beta$ is precisely $\Sigma_\beta\inv+ \tau X^\top X$ everywhere. So the negative log posterior is $\alpha$--strongly convex, where
\begin{equation}
\label{eq:strong_convexity_lower_bound}
\alpha = \lambda_{\min}(\Sigma_\beta\inv + \tau X^\top X) \ge \lambda_{\min}(\Sigma_\beta\inv) + \tau\lambda_{\min}(X^\top X) = \|\Sigma_\beta\|_2\inv +\tau \bar \lambda_{D-M}^2.
\end{equation}
In the first part of the final equality above, we use that the spectral norm of a matrix inverse is equal to the reciprocal of the minimum eigenvalue of the matrix.

Now we have an upper bound on the norm of the difference in gradients of the negative log posteriors (the same as for the log posteriors, in \Cref{eq:grad-upper-bound})
and a lower bound on the strong convexity constant from \Cref{eq:strong_convexity_lower_bound}.
So we can apply these together with \Cref{lemm:strong-convexity} to find
\begin{align*}
    \| \mu_N - \tilde \mu_N \|_2 
    &\le \frac{\tau \big(\bar\lambda_1^2 \| \bar U^\top \tilde \mu_N \|_2 + \bar\lambda_1 \|\bar V^\top Y\|_2\big)}{\alpha} \\
    &\textrm{by \Cref{lemm:strong-convexity} taking $\log p(\beta|X,Y)$ and $\log \tilde p(\beta | X, Y)$ }\\
    &\textrm{as $f$ and $g$ respectively, with $c$ given by \Cref{eq:grad-upper-bound}} \\
    &\le \frac{ \tau \big( \bar \lambda_1^2 \|\bar U^\top \tilde \mu_N\|_2 +  \bar \lambda_1 \|\bar V^\top  Y\|_2\big)}{\|\Sigma_\beta\|_2\inv + \tau \bar \lambda_{D-M}^2}\\
    &\textrm{by \Cref{eq:strong_convexity_lower_bound}} \\
    &= \frac{ \bar \lambda_1  \big( \bar \lambda_1 \|\bar U^\top \tilde \mu_N\|_2 +  \|\bar V^\top  Y\|_2\big)}{\|\tau \Sigma_\beta\|_2\inv + \bar \lambda_{D-M}^2}.
\end{align*}

Notably, in the common special case that $\Sigma_\beta$ is diagonal, as we saw in \Cref{sec:conj_low_rank}, $\tilde \mu_N$ will be in the span of $U$, and we will have that $\|\bar U^\top \tilde \mu_N\|_2=0$.

\textbf{Error in Posterior Precision}

The error in the precision matrices for the approximate and exact posteriors in linear regression are particularly straightforward since they do not depend on the responses, $Y$. In particular, we have
\begin{align}
  \Sigma_N\inv - \tilde\Sigma_N\inv &= (\Sigma_\beta\inv + \tau X^\top X) - (\Sigma_\beta\inv + \tau UU^\top X^\top XUU^\top ) \\
                 &= \tau X^\top X - \tau UU^\top X^\top XUU^\top \\
                 &= \tau \bar U \bar U^\top X^\top X\bar U\bar U^\top \\
                 &= \tau \bar U \diag(\bar\lambda \odot \bar\lambda) \bar U^\top.
\end{align}
Thus, since it is equal to the maximum eigenvalue, the spectral norm of the error in the precisions is precisely $\| \Sigma_N\inv - \tilde\Sigma_N\inv\|_2 = \tau \bar\lambda_1^2$.

\begin{figure}[!t]
\centering
 \includegraphics[width=.8\linewidth]{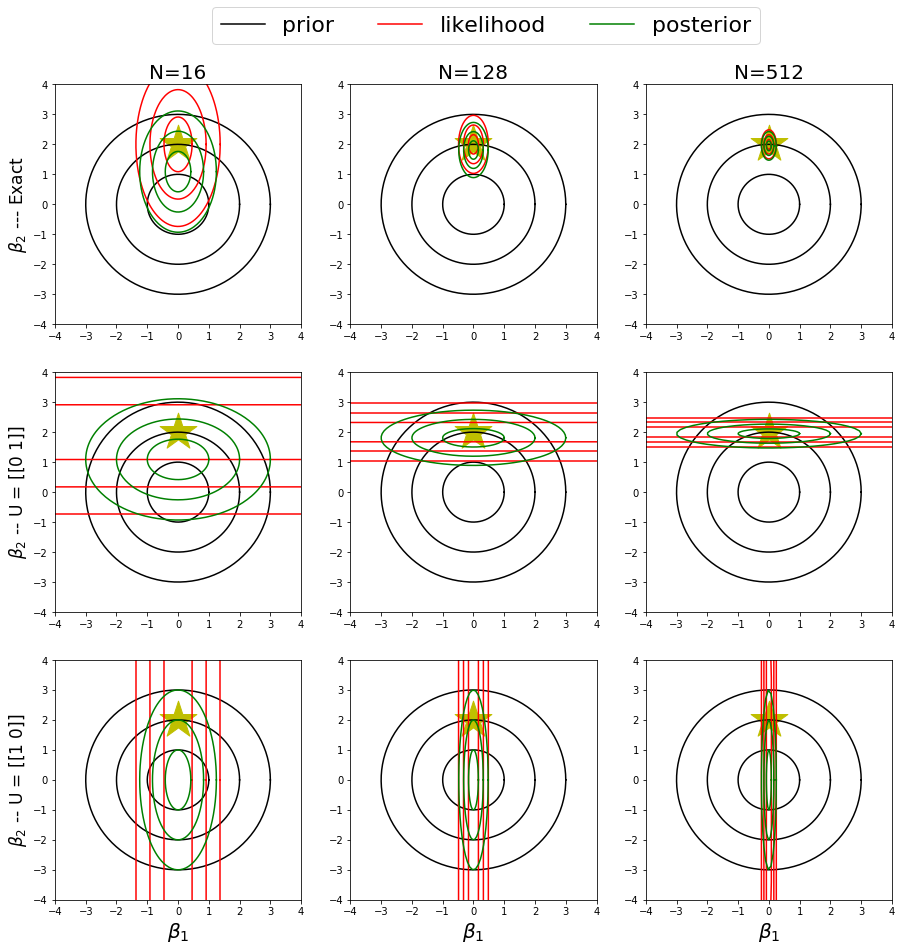}
  \caption{Example of posterior approximations with different projections (characterized by $U$) for increasing sample sizes.
    Each plot shows the contours of three densities: the prior, likelihood, and posterior (or approximations thereof).
    The top row shows the exact posterior.
    The middle row shows the approximations found by using the best rank-1 approximation to $X$.
    The bottom row shows the approximations found using the orthogonal rank-1 approximation.
    The star is at the parameter value used to generate simulated data for these plots. \label{fig:toy_example}}
\end{figure}

\subsection{Proof of \cref{lemm:strong-convexity}} \label{sec:strong-convexity-proof}

By the fundamental theorem of calculus, we may write
    $$\nabla_\beta  f(\beta) = \nabla_\beta f(\beta_g) +   \int_{t=0}^{1} (\beta - \beta_g)^\top \nabla_{\beta}^2 f(t \beta + (1-t) \beta_g )dt.$$

Considering the norm of $\nabla_\beta f(\beta)$ and applying the triangle inequality provides that for any $\beta$ in $\{t\beta_f + (1-t)\beta_g \mid t \in [0,1] \}$,
\begin{align}
    \| \nabla_\beta  f(\beta)\|_2 & \ge \left \|\int_{t=0}^{1} (\beta - \beta_g)^\top \nabla_{\beta}^2 f(t \beta + (1-t) \beta_g )dt\right\|_2 -  \|\nabla_\beta f(\beta_g) \|_2  \\
    & \ge \|\beta-\beta_g\|_2 \left\| \int_{t=0}^{1}  \nabla_{\beta}^2 f(t \beta + (1-t) \beta_g )dt\right\|_2 -  \|\nabla_\beta f(\beta_g) \|_2  \\
    & \ge \|\beta-\beta_g\|_2 \alpha -  \|\nabla_\beta f(\beta_g) \|_2.
\end{align}

We consider this bound at $\beta_f$. Recall we assume that $\|\nabla_\beta f(\beta_g) \|_2 \le c$. And $\|\nabla_\beta f(\beta_f)\|_2=0$ since $f$ is twice differentiable. Therefore, we have that $0 \ge \|\beta_f -\beta_g\|_2\alpha - c$, and the result follows.

%
%
%

\subsection{Proof of  \Cref{cor:not_consistent}}\label{sec:proof_not_consistent}



Our approach is to show that 
\[
\tilde \mu_N \stackrel{p}{\rightarrow} \Sigma_\beta U_* (U_*^\top \Sigma_\beta U_*)\inv U_*^\top \beta.  \label{eq:approx-mu-convergence}
\]
We then appeal to the following result, which we prove in \Cref{sec:min-norm-proof}:

\begin{lemma}\label{lemm:min-norm}
$\tilde \mu := \Sigma_\beta U (U^\top \Sigma_\beta U)\inv U^\top \beta$ is the vector of minimum $\Sigma_\beta\inv $-norm satisfying $U^\top \tilde \mu = U^\top \beta $.
\end{lemma}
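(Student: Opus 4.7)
The plan is to recognize this as a convex quadratic program with linear equality constraints. Explicitly, I want to show that
\[
\tilde \mu \;=\; \argmin_{\mu \in \R^D} \; \mu^\top \Sigma_\beta\inv \mu
\quad \text{subject to} \quad U^\top \mu = U^\top \beta,
\]
and that this minimizer is unique. Since $\Sigma_\beta \succ 0$ and $U$ has full column rank (orthonormal columns), the objective is strictly convex in $\mu$ and the constraint is feasible (e.g., $\mu = \beta$ works), so a unique minimizer exists.

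First I would verify that $\tilde \mu$ as defined is feasible: a direct computation gives
\(
U^\top \tilde \mu = (U^\top \Sigma_\beta U)(U^\top \Sigma_\beta U)\inv U^\top \beta = U^\top \beta,
\)
where the inverse is well-defined because $U^\top \Sigma_\beta U \succ 0$ (composition of a positive definite matrix with a full-column-rank $U$).

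For optimality, my preferred route is a Pythagorean argument in the $\Sigma_\beta\inv$-inner product $\langle a,b\rangle_{\Sigma_\beta\inv} := a^\top \Sigma_\beta\inv b$. For any feasible $\mu$, write $\mu = \tilde \mu + \delta$ with $\delta := \mu - \tilde \mu$; by feasibility of both $\mu$ and $\tilde \mu$, $\delta \in \ker U^\top$. The key computation is the cross term
\[
\tilde \mu^\top \Sigma_\beta\inv \delta
\;=\; \beta^\top U (U^\top \Sigma_\beta U)\inv U^\top \Sigma_\beta \, \Sigma_\beta\inv \delta
\;=\; \beta^\top U (U^\top \Sigma_\beta U)\inv (U^\top \delta)
\;=\; 0.
\]
Consequently,
\[
\mu^\top \Sigma_\beta\inv \mu
\;=\; \tilde \mu^\top \Sigma_\beta\inv \tilde \mu + \delta^\top \Sigma_\beta\inv \delta
\;\ge\; \tilde \mu^\top \Sigma_\beta\inv \tilde \mu,
\]
with equality iff $\delta = 0$, i.e., $\mu = \tilde \mu$. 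This both identifies $\tilde \mu$ as the minimizer and establishes uniqueness.

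As a sanity check, one could alternatively derive the closed form via Lagrange multipliers: the Lagrangian $L(\mu,\nu) = \mu^\top \Sigma_\beta\inv \mu - 2\nu^\top(U^\top \mu - U^\top \beta)$ has stationarity condition $\Sigma_\beta\inv \mu = U\nu$, giving $\mu = \Sigma_\beta U \nu$; plugging into the constraint yields $\nu = (U^\top \Sigma_\beta U)\inv U^\top \beta$, recovering the stated form. There is no serious obstacle here; the only thing to be careful about is articulating why $U^\top \Sigma_\beta U$ is invertible (hence why the formula is well-defined), which is immediate from $\Sigma_\beta \succ 0$ and $U$ having orthonormal columns.
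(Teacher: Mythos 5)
Your proof is correct and rests on the same idea as the paper's: orthogonality in the inner product $\langle a,b\rangle = a^\top \Sigma_\beta\inv b$. The paper invokes the projection theorem to deduce that the minimizer lies in the span of $\{\Sigma_\beta U[:,i]\}$ and then solves for the coefficients, whereas you verify the given candidate directly via the Pythagorean identity (and, as a bonus, make the uniqueness explicit); these are the same argument packaged as derivation versus verification.
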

Finally, for any closed $S\subset \R^D$, $\tilde \mu= \argmin_{v\in S} \|v\|_{\Sigma_\beta\inv } = \argmax_{v \in S} -\frac{1}{2} v^\top \Sigma_\beta\inv v=\argmax_{v \in S} \mathcal{N}(0, \Sigma_\beta)$. Therefore, the $\tilde \mu$ in \Cref{lemm:min-norm} is the maximum a priori vector satisfying the constraint in \Cref{lemm:min-norm}.

We first turn to proving \cref{eq:approx-mu-convergence}. 
Let $U_N\diag( \lambda^{(N)})V_N^\top$ denote the $M$-truncated SVD of the design matrix consisting of $N$ samples $X=(x_1, x_2, \dots, x_N)$ where $x_i \overset{\mathrm{i.i.d.}}{\sim} p_*$.   When the low rank approximation is defined by this SVD, from \Cref{eqn:mean_and_cov_general} we have that $\tilde \mu_N = \tau \tilde \Sigma_N U_NU_N^\top X^\top Y$. Noting that $Y=X\beta + \frac{1}{\tau}\epsilon$ for some $\epsilon \in \R^N$ with $\epsilon_i \distiid \mathcal{N}(0, 1)$, we may expand this out and write:
\begin{align}
    \nonumber
  \tilde \mu_N &= \tau (\Sigma_\beta\inv + U_NU_N^\top X^\top \tau X U_NU_N^\top )\inv U_NU_N^\top X^\top (X\beta + \frac{1}{\tau}\epsilon) \\
    \nonumber
    &= \tau \left\{ \Sigma_\beta\inv + U_N\left[\tau \diag(\lambda^{(N)}\odot \lambda^{(N)})\right] U_N^\top \right\}\inv U_N\diag(\lambda^{(N)}) V_N^\top \left[ V_N\diag(\lambda^{(N)}) U_N^\top \beta + \frac{1}{\tau}\epsilon\right] \\
    \nonumber
    &= \left\{ \Sigma_\beta\inv + U_N\left[\tau \diag(\lambda^{(N)}\odot \lambda^{(N)})\right]U_N^\top \right\}\inv U_N\left[\tau \diag(\lambda^{(N)}\odot \lambda^{(N)})\right] \left[U_N^\top \beta + \diag(\lambda^{(N)})\inv V_N^\top \frac{1}{\tau}\epsilon\right] \\
    \nonumber
    &= \Sigma_\beta U_N \left[U_N^\top \Sigma_\beta U_N +\tau\inv \diag(\lambda^{(N)})^{-2} \right]\inv \left[U_N^\top \beta + \diag(\lambda^{(N)})\inv V_N^\top \frac{1}{\tau}\epsilon\right] \\
    \nonumber
  &\convP \Sigma_\beta U_*(U_*^\top \Sigma_\beta U_*)\inv U_*^\top \beta,
\end{align}
where in the fourth line we use the matrix identity, $(R\inv  + W^\top QW)\inv W^\top Q = RW^\top (WRW^\top + Q\inv )\inv $ \citep{petersen2008matrix}. 
Convergence in probability in the last line follows since $\diag({\lambda^{(N)}}^{-2}) \convP 0$ \citep{vershynin2012close} and $U_N \convP U$.

\subsection{Proof of \cref{lemm:min-norm}}\label{sec:min-norm-proof}

 We show that $\beta_*=\Sigma_\beta U(U^\top \Sigma_\beta U)\inv U^\top \beta$ is the vector of minimum norm satisfying the above constraints in the Hilbert space $\R^D$ with inner product $\langle v_1, v_2 \rangle = v_1^\top \Sigma_\beta\inv v_2 $ for vectors $v_1, v_2 \in \R^D$.

Define $\beta_*$ as
\begin{align}\label{eqn:min_norm_problem}
  \beta_* = \argmin_{v\in \R^D} \| v\|_{\Sigma_\beta\inv } \,\,\, \mathrm{subject \, to}\,\,\, U^\top v = U^\top \beta
\end{align}
First note that the condition $U^\top \beta_*=U^\top \beta$ may be expressed as a set the $M$ linear constraints
\begin{align}\label{eqn:lin_constraints}
  \langle \Sigma_\beta  U[:, i], \beta_* \rangle = U[:, i]^\top \beta
\end{align}
  for $i=1,2,\dots, M$. We thereby see that the constraint restricts $\beta_*$ to the linear variety $\beta + \big[ \big\{ \Sigma_\beta U[:, i] \big\}_{i=1}^M \big]^\perp$, where $\big[A\big]$ denotes the subspace generated by the vectors of the set $A$ and $\big[ A\big]^\perp$ denotes the set of all vectors orthogonal to $\big[A\big]$ (i.e. the orthogonal complement of $\big[ A\big]$). 
  By the projection theorem \citep{luenberger1997optimization}, $\beta_*$ is orthogonal to $\big[ \big\{ \Sigma_\beta U[:, i] \big\}_{i=1}^M \big]^\perp$, or $\beta_* \in \big[ \big\{ \Sigma_\beta U[:, i] \big\}_{i=1}^M \big]^{\perp \perp}=\big[ \big\{ \Sigma_\beta U[:, i] \big\}_{i=1}^M\big]$. We can therefore write $\beta_*$ as a linear combination of the vectors $\big\{ \Sigma_\beta U[:, i] \big\}_{i=1}^M$; that is, for some $c$ in $\R^M$
\begin{align}\label{eqn:solution_form}
  \beta_* = \Sigma_\beta Uc.
\end{align}
  Our constraints in \Cref{eqn:lin_constraints} then demand that $\langle \Sigma_\beta U[:, i], \Sigma_\beta Uc \rangle = U[:, i]^\top \beta$ for each $i$, or equivalently that $U^\top \Sigma_\beta \Sigma_\beta\inv \Sigma_\beta U c = U^\top \beta$. This implies that $c=(U^\top \Sigma_\beta U)\inv U^\top \beta$. Plugging this into \Cref{eqn:solution_form} yields $\beta_* = \Sigma_\beta U (U^\top \Sigma_\beta U)\inv U^\top \beta$, as desired.

\subsection{Proof of \Cref{cor:conservative}}
Recall that we wish to show that, for conjugate Bayesian regression, under $\tilde p$ the uncertainty (i.e., posterior variance) for any linear combination of parameters, $\mathrm{Var}_{\tilde p}[v^\top \beta]$, is no smaller than the exact posterior variance.
First, we note that this statement is formally equivalent to stating that $v^\top \tilde \Sigma_N v \ge v^\top \Sigma_N v$, or that $E:=\tilde \Sigma_N-\Sigma_N \succeq 0$ (where $\succeq$ denotes positive definiteness).
By \Cref{thm:bayes_lin_reg_approx_quality}, $\Sigma_N\inv-\tilde \Sigma_N\inv =\bar U \mathrm{diag}({\bar \lambda}^2) \bar U^\top \succeq 0$.
Since this implies that the inverse of the difference of these matrices is positive definite, we can then see that $(\Sigma_N\inv - \tilde \Sigma_N\inv )\inv =\tilde \Sigma_N(\tilde \Sigma_N - \Sigma_N)\inv \Sigma_N\succeq 0$.
Because, as valid covariance matrices, $\Sigma_N$ and $\tilde \Sigma_N$ are both positive definite, and because inverses and product of positive definite matrices are positive definite, this implies that $\tilde \Sigma_N\inv \tilde \Sigma_N(\tilde \Sigma_N - \Sigma_N)\inv \Sigma_N\Sigma_N\inv =(\tilde \Sigma_N - \Sigma_N)\inv \succeq 0$.
Finally, this implies that $\tilde \Sigma_N - \Sigma_N \succeq 0$ as desired.

\subsection{Information loss due the \methodname approximation}\label{sec:information_gain}
We see similar behavior to that demonstrated in \Cref{cor:conservative} in the following corollary, which shows that our approximate posterior never has lower entropy than the exact posterior.
Concretely, we look at the reduction of entropy in the approximate posterior relative to the exact posterior \citep{mackay2003information}, where entropy is defined as:
$$H\big[ p(\beta)\big]:= \E_p[-\log_2 p(\beta)]$$
\begin{corollary}\label{cor:information_gain}
The entropy $H\big[ \tilde p(\beta|X, Y)\big]$ is no less than $H\big[p(\beta|X, Y)\big]$. Furthermore, when using an isotropic Gaussian prior $\Sigma_\beta=\sigma_\beta^2 I$, the information loss relative to the exact posterior (in nats) is upper bounded as $H\big[\tilde p(\beta|X, Y) \big] - H\big[ p(\beta|X, Y) \big] \le \frac{\tau\sigma_\beta^2}{2} \sum_{i=1}^{D-M} \bar \lambda_{i}^2$.
\end{corollary}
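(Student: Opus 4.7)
\textbf{Proof Plan for \Cref{cor:information_gain}.} Since both $p(\beta \mid X,Y)$ and $\tilde p(\beta \mid X,Y)$ are Gaussians on $\R^D$, their differential entropies take the closed form $H[\mathcal{N}(\mu,\Sigma)] = \tfrac{1}{2} \ln\big( (2\pi e)^D |\Sigma|\big)$ (converting the base-$2$ definition given in the paper to nats). Hence the entropy gap collapses to a log-determinant ratio:
\[
H[\tilde p] - H[p] = \tfrac{1}{2}\ln \frac{|\tilde \Sigma_N|}{|\Sigma_N|} = \tfrac{1}{2}\ln \frac{|\Sigma_N\inv|}{|\tilde \Sigma_N\inv|}.
\]
The whole argument reduces to controlling the ratio of determinants of the two precision matrices, for which \Cref{thm:bayes_lin_reg_approx_quality} gives the clean expression $\Sigma_N\inv - \tilde\Sigma_N\inv = \tau \bar U \diag(\bar\lambda \odot \bar\lambda)\bar U^\top$.

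For the first (general prior) claim, the plan is to invoke \Cref{cor:conservative}, which gives $\tilde \Sigma_N \succeq \Sigma_N$, and use monotonicity of the determinant on the positive-definite cone: $A \succeq B \succ 0$ implies $|A| \ge |B|$. This immediately yields $|\tilde \Sigma_N| \ge |\Sigma_N|$ and hence $H[\tilde p] \ge H[p]$, as desired.

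For the quantitative bound under an isotropic Gaussian prior $\Sigma_\beta = \sigma_\beta^2 I$, the key step is a simultaneous diagonalization. Writing the full SVD basis $Q := [U,\bar U] \in \R^{D \times D}$, which is orthogonal, note that $X^\top X = U\diag(\lambda \odot \lambda) U^\top + \bar U \diag(\bar \lambda \odot \bar \lambda)\bar U^\top$, while $UU^\top X^\top X UU^\top = U\diag(\lambda\odot\lambda)U^\top$. Conjugating by $Q$ gives
\[
Q^\top \Sigma_N\inv Q = \sigma_\beta^{-2} I_D + \tau\,\diag\bigl(\lambda\odot\lambda,\; \bar\lambda\odot\bar\lambda\bigr),
\qquad
Q^\top \tilde \Sigma_N\inv Q = \sigma_\beta^{-2} I_D + \tau\,\diag\bigl(\lambda\odot\lambda,\; 0\bigr).
\]
Both conjugated precisions are diagonal, so their determinants factor and the top $M$ entries cancel, leaving
\[
\frac{|\Sigma_N\inv|}{|\tilde \Sigma_N\inv|} = \prod_{i=1}^{D-M} \frac{\sigma_\beta^{-2} + \tau \bar\lambda_i^2}{\sigma_\beta^{-2}} = \prod_{i=1}^{D-M} \bigl(1 + \tau\sigma_\beta^2 \bar\lambda_i^2\bigr).
\]
Taking the logarithm, applying $\ln(1+x)\le x$ for $x\ge 0$ termwise, and multiplying by $\tfrac{1}{2}$ delivers the stated bound $\tfrac{\tau\sigma_\beta^2}{2}\sum_{i=1}^{D-M} \bar\lambda_i^2$.

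The proof is mostly bookkeeping: no step is genuinely hard, but the one to be careful about is making sure the change of basis really does diagonalize both precisions simultaneously, which relies crucially on the isotropy of the prior (otherwise $Q^\top \Sigma_\beta\inv Q$ would not remain diagonal and the clean factorization would fail). For a general $\Sigma_\beta$, the first (qualitative) statement still holds via \Cref{cor:conservative}, but a sharp quantitative bound would require a different argument.
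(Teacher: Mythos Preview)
Your proposal is correct and, for the quantitative bound, follows essentially the same route as the paper: diagonalize both precisions in the SVD basis (which works because the isotropic prior commutes with the rotation), reduce the entropy gap to $\tfrac{1}{2}\sum_{i=1}^{D-M}\log(1+\tau\sigma_\beta^2\bar\lambda_i^2)$, and apply $\log(1+x)\le x$ termwise.

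The one genuine difference is your treatment of the qualitative claim $H[\tilde p]\ge H[p]$. The paper obtains this only as a byproduct of the isotropic computation (noting each summand $\log(1+\tau\sigma_\beta^2\bar\lambda_i^2)$ is nonnegative), so its proof of the first sentence implicitly assumes $\Sigma_\beta=\sigma_\beta^2 I$. Your route via \Cref{cor:conservative} and determinant monotonicity on the PSD cone is cleaner and covers arbitrary Gaussian priors, matching the generality in which the first sentence is stated. This buys you a proof of the full statement rather than just the isotropic special case; the cost is an appeal to \Cref{cor:conservative}, but that result is already established and the determinant-monotonicity fact is standard.
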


This result formalizes the intuition that the \methodname approximation reduces the information about the parameter that we are able to extract from the data.
Additionally, the upper bound tells us that when $U$ is obtained via an $M$ truncated SVD, at most $\tau\sigma_\beta^2 \bar \lambda_{1}^2/2$ additional nats of information would have been provided by using the $M+1$-truncated SVD.

\begin{proof}
The entropy of the exact and approximate posteriors are given as:
\(
    H(p) = -\frac{1}{2}\log | 2\pi e \Sigma_N\inv | = -\frac{1}{2}\big[D \log 2\pi e + \sum_{i=1}^D\log (\sigma_\beta^{-2}+\tau \lambda_{i}^2)\big]
\)
and
\(
    H(\tilde p) = -\frac{1}{2}\log | 2\pi e \tilde \Sigma_N\inv | =-\frac{1}{2}\big[D \log 2\pi e + \sum_{i=1}^M \log (\sigma_\beta^{-2}+\tau \lambda_{i}^2) - \sum_{i=M+1}^D \log \sigma_\beta^{-2}\big].
\)
Therefore, we conclude that 
\(
    H\big[\tilde p(\beta|X) \big] - H\big[ p(\beta|X) \big] &=  - \frac{1}{2}\sum_{i=1}^{D-M} \log \sigma_\beta^{-2} + \frac{1}{2}\sum_{i=1}^{D-M}\log (\sigma_\beta^{-2}+\tau\bar \lambda_{i}^2)\\
    &=\frac{1}{2}\sum_{i=1}^{D-M} \log \frac{\sigma_\beta^{-2} + \tau\bar\lambda_i^2}{\sigma_\beta^{-2}}\\
  &=\frac{1}{2}\sum_{i=1}^{D-M} \log (1+\frac{\tau}{\sigma_\beta^{-2}}\bar \lambda_{i}^2 )\\
    &\le \frac{1}{2}\sum_{i=1}^{D-M} \frac{\tau}{\sigma_\beta^{-2}} \bar \lambda_{i}^2 
    = \frac{\tau \sigma_\beta^2}{2} \sum_{i=1}^{D-M} \bar \lambda_{i}^2 .
\)
That $H\big[\tilde p(\beta|X) \big] - H\big[ p(\beta|X) \big]>0$ follows from the monotonicity of $\log$, that $\log (1)=0$, and that ${\tau}\sigma_\beta^{2} \bar \lambda_i^2 >0$ 
for $i=1,\dots,D-M$. 
\end{proof}

\section{Proofs and further results for \methodnamelap in non-conjugate models}

In the main text we introduced LR-Laplace as a method which takes advantage of low-rank approximations to provide computational gains when computing a Laplace approximation to the Bayesian posterior.  In what follows we verify the theoretical justifications for this approach.
\Cref{sec:fl_complexity_proof} provides a derivation of \Cref{alg:fast_laplace} and demonstrates the time complexities of each step, serving as a proof of \Cref{thm:laplace_time}.
The remainder of the section is devoted to the proofs and discussion of the theoretical properties of LR-Laplace.

\subsection{Proof of \Cref{thm:laplace_time}}\label{sec:fl_complexity_proof}

\begin{proof}[{Proof of  \Cref{thm:laplace_time}}]
The \methodnamelap approximation is defined by mean and covariance parameters, $\hat \mu$ and $\hat \Sigma$.
We prove \Cref{thm:laplace_time} in two parts.
First, we show that $\hat \mu$ and $\hat \Sigma$ do in fact define the Laplace approximation of $\tilde p(\beta|X, Y)$, i.e. the construction of $\hat \mu$ in \Cref{line:hat_mu} satisfies $\hat \mu = \argmax_\beta \tilde p(\beta | X, Y)$ and that $\hat \Sigma= \big(-\nabla_\beta^2 \log \tilde p(\beta|X, Y) |_{\beta=\hat\mu}\big)\inv$.
Second, we show that each step of \Cref{alg:fast_laplace} may be computed in $O(NDM)$ time with $O(DM+NM)$ storage.

\textbf{Correctness of $\hat \mu$ and $\hat \Sigma$}

In \Cref{line:gamma}, the definition of $\gamma_*$ implies that $\gamma_*=\argmax_{\gamma\in \R^M} \tilde p_{U^\top \beta|X,Y}(\gamma|X,Y)$ since 
\(
  \log \tilde p_{U^\top \beta|X,Y}(\gamma|X,Y) &= \log \tilde p_{U^\top \beta}(\gamma)+ \log \tilde p_{Y|X, U^\top \beta}(Y|X, \gamma)  + C\\
  &= \log p_{U^\top \beta}(\gamma)+ \log p_{Y|X, \beta}(Y|X, U\gamma) +C \\
  &= \log \mathcal{N}(\gamma|U^\top \mu_\beta, U^\top \Sigma_\beta U)+ \sum_{i=1}^N \log p_{y|x,\beta}(y_i|x_i, U \gamma) + C\\
  &= -\frac{1}{2}\gamma^\top U^\top \Sigma_\beta U\gamma + \sum_{i=1}^N \phi(y_i,x_i^\top U \gamma) + C^\prime,
\)
  where line 1 uses Bayes' rule, line 2 uses the definition of $\tilde p$ in \Cref{eqn:approximation}, line 3 uses the normality the prior, and the assumed conditional independence of the responses given $\beta$, and line 4 follows from the definition of $\phi(\cdot,\cdot)$ and the assumption that $\mu_\beta={0}$.
$C$ and $C^\prime$ are constants which do not depend on $\gamma$.
This together with the following result (proved in \Cref{sec:full-dim-beta-laplace-proof}) implies that as defined in \Cref{line:hat_mu} of \Cref{alg:fast_laplace}, $\hat \mu=\argmax_\beta \tilde p(\beta|X, Y)$.
  \begin{lemma}\label{lemm:full-dim-beta-laplace}
  Suppose a Gaussian prior $p(\beta)=\mathcal{N}(\mu_\beta, \Sigma_\beta)$, and let $\gamma_* := \argmax_{\gamma \in \R^M} \log \tilde p_{U^\top \beta|X,Y}(\gamma | X, Y)$. Then $\hat \mu := \argmax_{\beta \in \R^D} \log \tilde p(\beta | X, Y)$ may be written as $\hat \mu = U\gamma_* + \bar U \bar U^\top \Sigma_\beta U(U^\top\Sigma_\beta U)^{-1} \gamma_*$.
\end{lemma}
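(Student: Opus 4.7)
The plan is to exploit the structure that under the \methodname approximation, $\tilde p(\beta\mid X, Y) \propto p(\beta)\, p(Y\mid XUU^\top\beta)$ depends on $\beta$ through the likelihood factor only via $U^\top\beta$. Since $[U,\bar U]$ is an orthonormal basis of $\R^D$, I would reparameterize $\beta = U\gamma + \bar U \delta$ with $\gamma = U^\top\beta \in \R^M$ and $\delta = \bar U^\top \beta \in \R^{D-M}$. Taking $\mu_\beta = 0$ (the setting of \Cref{alg:fast_laplace}), the log posterior decomposes as
\(
\log \tilde p(\beta\mid X, Y) = \sum_{n=1}^N \phi(y_n, x_n^\top U\gamma) - \tfrac{1}{2}(U\gamma + \bar U\delta)^\top \Sigma_\beta\inv (U\gamma+\bar U\delta) + C,
\)
so the only coupling between $\gamma$ and $\delta$ occurs inside the Gaussian prior quadratic.

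Next I would verify that the $U^\top$-component of the joint MAP equals $\gamma_*$. Because the likelihood does not depend on $\delta$, I can profile it out: at fixed $\gamma$, minimizing the prior quadratic in $\delta$ is a standard Gaussian computation. Viewing the precision matrix in the rotated basis, $[U,\bar U]^\top \Sigma_\beta\inv [U,\bar U]$, whose inverse is $[U,\bar U]^\top \Sigma_\beta [U,\bar U]$ since $[U,\bar U]$ is orthogonal, a Schur-complement argument gives $\max_\delta \log p(U\gamma + \bar U\delta) = -\tfrac{1}{2}\gamma^\top (U^\top\Sigma_\beta U)\inv \gamma + \mathrm{const}$. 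This matches $\log \mathcal{N}(\gamma\mid 0, U^\top\Sigma_\beta U) = \log p_{U^\top\beta}(\gamma)$, so by Bayes' rule $\max_\delta \log \tilde p(\gamma,\delta\mid X,Y)$ agrees with $\log\tilde p_{U^\top\beta\mid X,Y}(\gamma\mid X,Y)$ up to a $\gamma$-independent constant. Therefore the outer maximization in $\gamma$ returns exactly $\gamma_*$.

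It remains to solve for $\hat\delta$ given $\gamma=\gamma_*$ and massage the answer into the stated form. Setting the $\delta$-gradient of the log posterior to zero gives
\(
\bar U^\top \Sigma_\beta\inv U\gamma_* + \bar U^\top \Sigma_\beta\inv \bar U\,\hat\delta = 0,
\)
so $\hat\delta = -(\bar U^\top\Sigma_\beta\inv \bar U)\inv \bar U^\top \Sigma_\beta\inv U\gamma_*$. The main obstacle is showing this equals $\bar U^\top \Sigma_\beta U(U^\top\Sigma_\beta U)\inv\gamma_*$. For this I again invoke that $[U,\bar U]^\top \Sigma_\beta\inv [U,\bar U]$ and $[U,\bar U]^\top \Sigma_\beta [U,\bar U]$ are mutual inverses; equating the $(2,1)$ block of their product to zero yields
\(
\bar U^\top \Sigma_\beta\inv U \cdot U^\top\Sigma_\beta U + \bar U^\top \Sigma_\beta\inv\bar U \cdot \bar U^\top\Sigma_\beta U = 0,
\)
which rearranges to exactly the desired identity. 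Assembling $\hat\mu = U\gamma_* + \bar U\hat\delta$ and using $\bar U^\top\bar U = I_{D-M}$ then recovers the claimed expression.
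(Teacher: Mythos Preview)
Your proof is correct and follows essentially the same route as the paper: both arguments rotate to $(U^\top\beta,\bar U^\top\beta)$ coordinates, use that the approximate likelihood depends only on $U^\top\beta$, profile out $\bar U^\top\beta$, and identify the result with the Gaussian conditional mean. The only cosmetic difference is that the paper quotes the Gaussian conditional mean formula $\mathbb{E}[\bar U^\top\beta\mid U^\top\beta]=\bar U^\top\Sigma_\beta U(U^\top\Sigma_\beta U)^{-1}U^\top\beta$ directly, whereas you derive it by setting the $\delta$-gradient to zero and then converting the precision-form answer to covariance form via the block-inverse identity; these are two phrasings of the same computation.
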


We now show that as defined in \Cref{line:hat_sigma} of \Cref{alg:fast_laplace}, $\hat \Sigma$ is inverse of the Hessian of the negative log posterior, $H$. We see this by writing
\(
  H :&= \nabla_\beta^2 -\log \tilde p(\beta|X, Y)|_{\beta = \hat \mu} \\
    &= \nabla_\beta^2 -\log \mathcal{N}(\beta|\mu_\beta, \Sigma_\beta)|_{\beta = \hat \mu} + \nabla_\beta^2 \sum_{i=1}^N -\phi(y_i, x_i^\top UU^\top \beta)|_{\beta = \hat \mu} \\
    &= \Sigma_\beta\inv+ \sum_{i=1}^N -\phi^{\prime\prime}(y_i, x_i^\top UU^\top \hat \mu) x_iUU^\top x_i^\top \\
    &= \Sigma_\beta\inv+ UU^\top X^\top \mathrm{diag}\big(-\phipp(Y, XUU^\top \hat \mu) \big)X U U^\top ,
\)

where $\phipp$ is the second derivative of $\phi$.
The Woodbury matrix lemma then provides that we may compute $\hat \Sigma_N:=H\inv $ as 
\(
    \hat \Sigma_N = \Sigma_\beta - \Sigma_\beta U\left(U^\top \Sigma_\beta U-\left\{ U^\top X^\top \mathrm{diag}\left[\phipp(Y, XUU^\top \hat \mu) \right] X U \right\}\inv \right)\inv U^\top \Sigma_\beta ,
\)
    which we have written as $\hat \Sigma := \Sigma_\beta - \Sigma_\beta U W U^\top \Sigma_\beta$ in \Cref{line:hat_sigma} with $W\inv=U^\top \Sigma_\beta U - \left\{ U^\top X^\top \diag\left[ \phipp(Y, XUU^\top \hat \mu)\right]XU\right\}\inv$.

\textbf{Time complexity of \Cref{alg:fast_laplace}}

We now prove the asserted time and memory complexities for each line of \Cref{alg:fast_laplace}.

\Cref{alg:fast_laplace} begins with the computation of the $M$-truncated SVD of $X^\top \approx U \diag(\mathbf{\lambda})V$.
As discussed in \Cref{sec:conj_low_rank}, $U$ may be found in $O(ND\log M)$ time.
At the end of this step we must store the projected data $XU \in \R^{N,M}$ and the left singular vectors, $U\in \R^{D,M}$.
Which demands $O(NM+DM)$ memory, and the matrix multiply for $XU$ requires $O(NDM)$ time and is the bottleneck step of the algorithm. The matrix $V$ need not be explicitly computed or stored.

The next stage of the algorithm is solving for $\hat \mu = \argmax_\beta \log \tilde p(\beta | X, Y)$.
This is done in two stages: in \Cref{line:gamma} find $\gamma_* = \argmax_{\gamma \in \R^M} \log \tilde p_{U^\top \beta|X,Y}(\gamma | X, Y)$ as the solution to a convex optimization problem, and in \Cref{line:hat_mu} find $\hat \mu$ as $\hat \mu = U \gamma_* + \bar U \bar U^\top \Sigma_\beta U(U^\top\Sigma_\beta U)^{-1} \gamma_*$.
Beginning with \Cref{line:gamma}, we note that the function $\log \tilde p(U^\top \beta|X,Y) = \log p(\beta) + \log \tilde p(Y|X,\beta)+c\stackrel{c}{=} \log \mathcal{N}(U^\top \beta|U^\top \mu_\beta, U^\top \Sigma_\beta U) + \sum_{i=1}^N \log p(y_i|x_i^\top UU^\top \beta)$ is a finite sum of functions concave in $\beta$ and therefore also in $U^\top \beta$.
$\gamma_*$ may therefore be solved to a fixed precision in $O(NM)$ time under the assumptions of our theorem using stochastic optimization algorithms such as stochastic average gradient \cite{schmidt2017minimizing}.
In our experiments we use more standard batch convex optimization algorithm (L-BFGS-B \cite{zhu1997algorithm}) which takes at most $O(N^2M)$ time.
This latter upper bound on complexity may be seen from observing each gradient evaluation takes $O(NM)$ time
(the cost for the likelihood evaluation, since computing the log prior and its gradient is $O(M^2)$ after computing $U^\top \Sigma_\beta U$ once, which takes $O(DM^2)$ time by assumption)
and the number of iterations required can grow up to linearly in the maximum eigenvalue of Hessian, which in turn grows linearly in $N$ \cite{boyd2004convex}.

  The second step is computing $\hat \mu = U \gamma_* + \bar U \bar U^\top \Sigma_\beta U(U^\top\Sigma_\beta U)^{-1} \gamma_*$. Given $\gamma_*$, this may be computed in $O(DM)$ time, which one may see by noting that $\bar U \bar U^\top $ (which we never explicitly compute) may be written as $\bar U \bar U^\top =(I-UU^\top )$, and finding $\hat \mu$ as $\hat \mu = U \gamma_* + \Sigma_\beta U(U^\top\Sigma_\beta U)^{-1} \gamma_*- UU^\top \Sigma_\beta U(U^\top\Sigma_\beta U)^{-1} \gamma_*$. By assumption, the structure of $\Sigma_\beta$ allows us to compute $U^\top \Sigma_\beta U$ in $O(DM^2)$ time and matrix vector products with $\Sigma_\beta$ in $O(D)$ time.

  We now turn to the third stage of the algorithm, solving for the posterior covariance $\hat \Sigma$, which is represented as an expression of $U$, $\Sigma_\beta$ and $W$, defined in \Cref{line:W}.
Computing $W$ requires $O(DM)$ and $O(NM^2)$ matrix multiplications (since we have precomputed $XU$), and two $O(M^3)$ matrix inversions which comes to $O(NM^2+DM)$ time.
The memory complexity of this step is $O(NM)$ since it involves handling $XU$.
Once $W$ has been computed we may use the representation $\hat \Sigma=\Sigma_\beta - \Sigma_\beta U W U^\top \Sigma_\beta$ as presented in \Cref{line:hat_sigma}.
This representation does not entail performing any additional computation (which is why we have written $O(0)$), but as this expression includes $U$, storing $\hat \Sigma$ requires $O(DM)$ memory.

  Lastly, we may immediately see that computing posterior variances and covariances takes only $O(M^2)$ time as it involves only indexing into $\Sigma_\beta$ and $U$ and $O(M^2)$ matrix-vector multiplies.
\end{proof}

\subsection{Proof of \Cref{lemm:full-dim-beta-laplace}} \label{sec:full-dim-beta-laplace-proof}

  We prove the lemma by constructing a rotation of the parameter space by the matrix of singular vectors $[U, \bar U]$, in which we have the prior
$$p\bigg(\begin{bmatrix}U^\top \beta \\ \bar U^\top \beta \end{bmatrix}\bigg)=\mathcal{N}\Big(
  \begin{bmatrix}U^\top \beta \\ \bar U^\top \beta \end{bmatrix} \Big| 
  \begin{bmatrix}U^\top \mu_\beta \\ \bar U^\top \mu_\beta \end{bmatrix}, 
  \begin{bmatrix}U^\top \Sigma_\beta U, \,\, U^\top \Sigma_\beta \bar U \\ 
  \bar U^\top \Sigma_\beta U ,\,\, \bar U^\top \Sigma_\beta \bar U \end{bmatrix}
\Big).
$$
We have that
\(
  \hat \mu :&= \argmax_{\beta \in \R^D} \log \tilde p(\beta | X, Y)\\
  &= [U \, \bar U] \argmax_{U^\top \beta \in \R^M, \bar U^\top \beta \in \R^{D-M}} \log \tilde p\big(\begin{bmatrix}U^\top \beta \\ \bar U^\top \beta \end{bmatrix} | X, Y)\\
    &= U\argmax_{U^\top \beta \in \R^M}\big( \log \tilde p(U^\top \beta | X, Y) + \bar U\argmax_{\bar U^\top \beta \in \R^{D-M}} \log \tilde p(\bar U^\top \beta | U^\top \beta, X,Y) \big)\\
    &= U\argmax_{U^\top \beta \in \R^M} \log \tilde p(U^\top \beta | X, Y) + \\
  &\phantom{=~} \bar U\argmax_{\bar U^\top \beta \in \R^{D-M}} \log \mathcal{N}\big(\bar U^\top \beta | \bar U^\top \Sigma_\beta U (U^\top \Sigma_\beta U)^{-1}U^\top \beta, \bar U \Sigma_\beta \bar U - \bar U \Sigma_\beta U ( U^\top \Sigma_\beta U)U^\top \Sigma_\beta \bar U \big)\\
    &= U\gamma_*+ \bar U\bar U^\top \Sigma_\beta U (U^\top \Sigma_\beta U)^{-1}\gamma_* .
\)
In the second line we simply move to the rotated parameter space. In the third line, we use the chain rule of probability to separate out two terms.
To produce the fourth line, we note that since $\tilde p(Y|X, \beta) = p(Y|XUU^\top \beta)=\tilde p(Y|X, U^\top \beta)$, that $Y$ and $\bar U^\top \beta$ are conditionally independent given $U^\top \beta$.
We next note that though $\argmax_{\bar U^\top \beta \in \R^{D-M}} \log p(\bar U^\top \beta | U^\top \beta)$ depends on $U^\top \beta$, $\max_{\bar U^\top \beta \in \R^{D-M}} \log p(\bar U^\top \beta | U^\top \beta)$ does not depend $U^\top \beta$. This allows us to use the definition of $\gamma_*$ to arrive at the fifth line, as desired.

    In the special case that $\Sigma_\beta$ is diagonal, this expression reduces to $U\gamma_*$.  This can be seen by recognizing that $\bar U^\top \Sigma_\beta U$ is then $\diag(\mathbf{0})$.


\subsection{Proof of \Cref{thm:glm_mean_bound}}
Our approach to proving \Cref{thm:glm_mean_bound} follows a similar approach to that taken to prove \Cref{thm:bayes_lin_reg_approx_quality}.  In particular, we begin by upper bounding the norm of the error of the gradients at the approximate MAP.  Noting that the strong log concavity of the exact posterior, which having been assumed to hold globally, must then also hold on $\{t\hat\mu + (1-t)\bar \mu |  t \in [0,1]\}$, we obtain an upper-bound on $\| \hat \mu -\bar \mu\|_2$ by again applying \Cref{lemm:strong-convexity}.

To begin, we first recall that the exact and \methodname posteriors may be written as

$$\log p(\beta|X,Y) = \log p(\beta) + \sum_{n=1}^N \phi(y_n|x_n^\top \beta) -\log Z$$ \ and \ $$\log \tilde p(\beta|X,Y) = \log p(\beta) + \sum_{n=1}^N \phi(y_n,x_n^\top UU^\top \beta) -\log \tilde Z$$
where $\phi(\cdot, \cdot)$ is such that $\phi(y, a)=\log p(y|x^\top \beta= a)$, and  $Z$ and $\tilde Z$ are the normalizing constants of the exact and approximate posteriors.  As a result, the gradients of these log densities are given as
$$\nabla_\beta \log p(\beta|X,Y) = \nabla_\beta \log p(\beta) +  X^\top \phip(Y,X\beta)$$
and
$$\nabla_\beta \log \tilde p(\beta|X,Y) = \nabla_\beta \log p(\beta) +  UU^\top X^\top \phip(Y,XUU^\top \beta) ,$$
where $\phip(Y, X\beta)\in \R^N$ is such that for each $n \in [N]$, $\phip(Y, X\beta)_n= \frac{d}{da}\phi(y_n, a)|_{a=x_n^\top \beta}$.

And the difference in the gradients is
\begin{align}
    \nabla_\beta \log p(\beta|X,Y) - \nabla_\beta \log \tilde p(\beta|X,Y) &=  X^\top \phip(Y,X\beta)- UU^\top X^\top \phip(Y,XUU^\top \beta) .
\end{align}

Appealing to Taylor's theorem, we may write for any $\beta$ that
$$\phi^\prime(y_n, x_n^\top UU^\top \beta)=\phi^\prime(y_n, x_n^\top \beta) + (x_n^\top UU^\top \beta - x_n^\top  \beta) \phi^{\prime\prime}(y_n, a_n)$$
    for some $a_n \in [x_n^\top UU^\top \beta, x_n^\top  \beta]$, where $\phi^{\prime\prime}(y, a):= \frac{d^2}{da^2}\phi(y,a)$.

Using this and introducing vectorized notation for $\phi^{\prime \prime}$ to match that used for $\phip$, we may rewrite the difference in the gradients as
\begin{align*}
   \lefteqn{\nabla_\beta \log p(\beta|X,Y) - \nabla_\beta \log \tilde p(\beta|X,Y)} \\
    &=  X^\top \phip(Y,X\beta)- UU^\top X^\top \phip(Y,X\beta)-UU^\top X^\top \big[(X UU^\top\beta - X^\top\beta)\circ \phipp(Y,A)\big]\\
    &=\bar U\bar U^\top X^\top \phip(Y,X\beta)+ UU^\top X^\top \big[(X \bar U\bar U^\top\beta)\circ \phipp(Y,A)\big] ,
\end{align*}
where $A\in \R^N$ is such that for each $n\in [N]$, $A_n \in [x_n^\top UU^\top \beta, x_n^\top \beta]$, and $\circ$ denotes element-wise scalar multiplication.

We can use this to derive an upper bound on the norm of the difference of the gradients as
\begin{align*}
    \|\nabla_\beta \log p(\beta|X,Y) - \nabla_\beta \log \tilde p(\beta|X,Y)\|_2 &= \|\bar U\bar U^\top X^\top \phip+ UU^\top X^\top \big[(X \bar U\bar U^\top\beta)\circ \phipp\big]\|_2 \\
    &\le \|\bar U\bar U^\top X^\top \phip\|_2 + \| UU^\top X^\top \big[(X \bar U\bar U^\top\beta)\circ \phipp\big]\|_2 \\
    &\le \bar \lambda_1 \|\phip\|_2 + \lambda_1 \| (X \bar U\bar U^\top\beta)\circ \phipp\|_2 \\
    &\le \bar \lambda_1 \|\phip\|_2 + \lambda_1 \bar \lambda_1 \| \bar U^\top\beta\|_2 \| \phipp\|_\infty \\
    &= \bar \lambda_1 \big( \|\phip\|_2 + \lambda_1 \| \bar U^\top\beta\|_2 \| \phipp\|_\infty\big) ,
\end{align*}
where we have written $\phip$ and $\phipp$ in place of $\phip(Y, X\beta)$ and $\phipp(Y, A)$, respectively, for brevity despite their dependence on $\beta$.

Next, let $\alpha$ be the strong log-concavity parameter of $p(\beta|X, Y)$.  \Cref{lemm:strong-convexity} then implies that 
$$
\| \hat \mu  - \bar \mu \|_2 \le \frac{\bar \lambda_1 \big( \|\phip(Y, X\hat \mu)\|_2 + \lambda_1 \| \bar U^\top\hat \mu\|_2 \| \phipp(Y, A)\|_\infty\big)}{\alpha}
$$
as desired, where for each $n\in [N]$, $A_n \in [x_n^\top UU^\top \hat \mu, x_n^\top \hat \mu]$.

\subsection{Bounds on derivatives of higher order for the log-likelihood in logistic regression and other GLMs}\label{sec:log_reg}
We here provide some additional support for the claim that in \Cref{rem:common_glm_derivatives} that the higher order derivatives of the log-likelihood function, $\phi$, are well-behaved.
For logistic regression (which we explore in detail below), for any $y$  in $\{-1, 1\}$ and $a$ in $\R$, it holds that $|\frac{\partial}{\partial a}\phi(y, a)|\le 1$ and $|\frac{\partial^{2}}{\partial^{2} a}\phi(y, a)| \le \frac{1}{4}$.  For Poisson regression with $\phi(y,a) = \log \mathrm{Pois}\big(y|\lambda=\log(1+\exp\{a \})\big)$, both $|\frac{\partial}{\partial a}(y,a)|$ and $|\frac{\partial^{2}}{\partial^{2} a}\phi(y, a)|$ are bounded by a small constant factor of $y$.   Additionally, in these cases $|\frac{\partial^3}{\partial a^3} \phi(y, a)|$ is also well behaved, a fact relevant to \Cref{cor:new_glm_w2_bound}.
However, for alternative mapping functions for Poisson regression, e.g. defining $\E[y_i|x_i,\beta ]=\mathrm{exp}\{x_i^\top \beta \}$, these derivatives will grow exponentially quickly with $x_i^\top \beta$, which illustrates that our provided bounds are sensitive to the particular form chosen for the GLM likelihood.

We now move to compute explicit upper bounds on the derivatives of the log likelihood in logistic regression.
This produces the constants mentioned above, and permits easy computation of upper bounds on the bounds on the approximation error of \methodnamelap provided in \Cref{thm:glm_mean_bound} and \Cref{cor:new_glm_w2_bound}.
In particular the logistic regression mapping function \citep{Huggins2017a} is given as
\begin{align}\label{eqn:logistic_mapping_fcn}
  \phi(y_n, x_n^\top\beta) = -\log \big( 1+\mathrm{exp}\{-y_n x_n^\top\beta\}\big)  ,
\end{align}
where each $y_n \in \{-1, 1\}$.

The first three derivatives of this mapping function and bounds on their absolute values are as follows:
\[
    \label{eqn:first_derivative}
    \phi^\prime(y_n, x_n^\top \beta) :=  \frac{d}{da} \phi(y_n,a)\big|_{a=x_n^\top \beta} =y_n\frac{\mathrm{exp}\{-y x_n^\top \beta\}}{1+\mathrm{exp}\{-y_n x_n^\top \beta\}}
\]
Notably, $\forall a \in \R, y \in \{ -1, 1\}$, $|\phi^\prime(y,a)| < 1$ and

\begin{align}\label{eqn:logistic_curvature}
    \phi^{\prime\prime} (y_n, x_n^\top \beta)  :&= \frac{d^2}{da^2} \phi(y,a)\big|_{a=x_n^\top \beta} = -(1+\mathrm{exp}\{ x_n^\top \beta\})\inv(1+\mathrm{exp}\{-x_n^\top \beta\})\inv.
\end{align}

Furthermore, for any $a$ in $\R$ and $y$ in $\{ -1, 1\}$, $-\frac{1}{4}\le \phi^{\prime\prime}(y,a) < 0$. This implies that the Hessian of the negative log likelihood will be positive semi-definite everywhere.  We additionally have

\begin{equation}\label{eqn:logistic_jerk}
\begin{split}
    \frac{d^3}{da^3}\phi(y,a) = \phi^{\prime\prime\prime} (a)= \frac{\big(\mathrm{exp}\{a\}(\mathrm{exp}(-a) - 1)\big)}{(1+\mathrm{exp}\{a\})^3} \\
\end{split}
\end{equation}
which for any $a$ in $\R$ satisfies, $-\frac{1}{6\sqrt{3}} \leq \phi^{\prime\prime\prime}(a)\leq \frac{1}{6\sqrt{3}}$.

\subsection{Asymptotic inconsistency of the approximate posterior mean within the span of the projections}\label{sec:glm_not_consistent}
Consider a Bayesian logistic regression, in which 
\(
x_i &\sim \mathcal{N}\Big(\begin{bmatrix}0\\0\end{bmatrix},
    \begin{bmatrix}1 &0 \\0 &0.99 \end{bmatrix}\Big), &
\beta &= \begin{bmatrix}10\\1000\end{bmatrix}, &
y_i &\sim \textrm{Bern}\big( (1+\exp\{x_i^\top\beta \})\inv\big).
    \)
In this setting, a rank 1 approximation of the design will capture only the first dimension of data (i.e. $U_N \rightarrow U_*=[1,0]$). 
However the second dimension explains almost all of the variance in the responses. 
As such $y_i| U_*^\top x_i, \beta \overset{d}{\approx} \textrm{Bern}(1/2)$ and we will get $U_*^\top \beta| X, Y  = \beta_1 | X, Y\approx 0.0$  under  $\tilde p$.

\subsection{Proof of \Cref{cor:new_glm_w2_bound}}

Our proof proceeds via an upper bound on the $(2, \hat p)$-Fisher distance between $\hat p$ and $\bar p$ \citep{huggins2018nonasymptotic}. 
Specifically, the $(2, \hat p)$-Fisher distance given by
\begin{equation}\label{eqn:fisher_dist}
    d_{2,\hat p}(\hat p, \bar p) = \left( \int \| \nabla_{\beta} \log \hat p(\beta) - \nabla_\beta \log \bar p(\beta) \|_2^2 d p(\beta) \right)^{\frac{1}{2}}. 
\end{equation}

Given the strong log-concavity of $\bar p$, our upper bound on this Fisher distance immediately provides an upper-bound on the $2$-Wasserstein distance \citep{huggins2018nonasymptotic}.

We first recall that $\hat p$ and $\bar p$ are defined by Laplace approximations of $\tilde p(\beta|X, Y)$ and $p(\beta|X,Y)$ respectively.  As such we have that

\begin{align}
  \nonumber
    \log \hat p(\beta)\stackrel{c}{=} -\frac{1}{2}(\beta-\hat \mu)^\top \big(\Sigma_\beta\inv - UU^\top X^\top \diag(\phipp(Y, XUU^\top \hat \mu)) X UU^\top\big) ( \beta - \hat \mu)
\end{align}
    where $\phipp(Y, XUU^\top \hat \mu)$ is defined as in Algorithm 1 such that $\phipp(Y, X \beta)_i=\frac{d^2}{da^2} \log p(y_i|x^\top \beta=a)|_{a=x_i^\top\beta}$, and 
\begin{align}
  \nonumber
    \log \bar p(\beta)\stackrel{c}{=} -\frac{1}{2}(\beta-\bar \mu)^\top \big(\Sigma_\beta\inv - X^\top \diag(\phipp(Y, X\bar \mu)) X \big) ( \beta - \bar \mu).
\end{align}

Accordingly, 
\begin{align}
  \nonumber
    \nabla_\beta \log \hat p(\beta) = -(\beta- \hat \mu)^\top \big(\Sigma_\beta\inv - UU^\top X^\top \diag(\phipp(Y, XUU^\top \hat \mu))XUU^\top\big)
\end{align}
and 
\begin{align}
  \nonumber
    \nabla_\beta \log \bar p(\beta) = -(\beta- \bar \mu)^\top \big[\Sigma_\beta\inv - X^\top \diag(\phipp(Y, X\bar \mu))X\big]
\end{align}

To define an upper bound on $d_{2,\hat p}(\hat p, p) $, we must consider the difference between the gradients,
\begin{align}
  \nonumber
    \nabla_\beta \log \hat p(\beta) - \nabla_\beta \log \bar p(\beta)  =  & -(\beta- \hat \mu)^\top \big\{ \Sigma_\beta\inv - UU^\top X^\top \diag[\phipp(Y, XUU^\top \hat \mu)] XUU^\top \big\} \\
  \nonumber
    &+ (\beta - \bar \mu)^\top \big\{ \Sigma_\beta\inv - X^\top \diag[\phipp(Y, X\bar \mu)]X \big\} \\
  \nonumber
    &=(\hat \mu - \bar \mu)\Sigma_\beta\inv + (\beta -\hat \mu)^\top UU^\top X^\top \diag[\phipp(Y, XUU^\top \hat \mu)]XUU^\top \\
  \nonumber
    &-(\beta - \bar \mu)^\top X^\top \diag[\phipp(Y, X\bar \mu)]X .
\end{align}

Appealing to Taylor's theorem, we can rewrite $\phipp(Y, XUU^\top \hat \mu)$ as 

\begin{align}
  \nonumber
    \phipp(Y, XUU^\top \hat \mu) &= \phipp(Y, X \bar \mu) + (XUU^\top \hat \mu - X\bar \mu) \circ \phippp(Y, A)\\
  \nonumber
    &= \phipp(Y, X\bar \mu)+(XUU^\top \hat \mu-X\hat \mu + X(\hat \mu -\bar \mu)) \circ \phippp(Y, A) \\
  \nonumber
    &= \phipp(Y, X\bar \mu) - X \bar U \bar U^\top \circ \phippp(Y, A) + X(\hat \mu - \bar \mu) \circ \phippp(Y, A)\\
  \nonumber
    &= \phipp(Y, X\bar \mu) + R ,
\end{align} 

where the first line follows from Taylor's theorem by appropriately choosing each $A_i \in [x_i^\top UU^\top \hat \mu, x_i^\top \bar \mu]$, and in the fourth line we substitute in $R:=-X \bar U \bar U^\top \circ \phippp(Y, A) + X(\hat \mu - \bar \mu) \circ \phippp(Y, A)$.

    We now can rewrite the difference in the gradients as
\begin{align}
  \nonumber
    \nabla_\beta \log \hat p(\beta) - \nabla_\beta \log \bar p(\beta) &= (\hat \mu -\bar \mu)\Sigma_\beta\inv \\
  \nonumber
    &+ (\beta -\hat \mu)^\top UU^\top X^\top\diag[\phipp(Y, X\bar \mu)]XUU^\top \\
  \nonumber
    &+ (\beta - \hat \mu)^\top UU^\top X^\top \diag(R) XUU^\top \\
  \nonumber
    &- (\beta - \bar \mu)^\top X^\top \diag(\phipp(Y, X\bar \mu))X\\
  \nonumber
    &= (\hat \mu -\bar \mu)^\top(\Sigma_\beta\inv  - UU^\top X^\top \diag[\phipp(Y, X\bar \mu)]XUU^\top) \\
  \nonumber
    &+(\beta - \hat \mu)^\top UU^\top X^\top \diag(R) XUU^\top \\
  \nonumber
    &- (\beta - \bar \mu)^\top UU^\top X^\top  \diag(\phipp(Y, X\bar \mu)) X\bar U \bar U^\top \\
  \nonumber
    &- (\beta - \bar \mu)^\top \bar U\bar U^\top X^\top  \diag(\phipp(Y, X\bar \mu)) X U U^\top \\
  \nonumber
    &- (\beta - \bar \mu)^\top \bar U\bar U^\top X^\top  \diag(\phipp(Y, X\bar \mu)) X\bar U \bar U^\top  .
\end{align}

Which is obtained by first writing $ X^\top \diag[\phipp(Y, X\bar \mu)]X$ in the fourth line as $ (UU^\top X^\top + \bar U \bar U^\top X^\top) \diag[\phipp(Y, X\bar \mu)](XUU^\top  + X \bar U \bar U^\top)$, multiplying through and rearranging the resulting terms.

Given this form of the difference in the gradients, we may upper bound its norm as
\begin{align}
  \nonumber
    \|\nabla_\beta \log \hat p(\beta) - \nabla_\beta \log \bar p(\beta)\|_2 &\le \|\hat \mu - \bar \mu\|_2 \|\Sigma_\beta\inv - UU^\top X^\top \diag[\phipp(Y, X\bar \mu)]XUU^\top\|_2 \\
  \nonumber
    &\phantom{blocked out}+ \| \beta- \hat \mu \|_2\|UU^\top X^\top \diag(R)XUU^\top\|_2 \\
  \nonumber
    &\phantom{blocked out}+ \|\beta - \bar \mu \|_2 \|UU^\top X^\top \diag[\phipp(Y, X\bar \mu)]X \bar U\bar U^\top + \\  
  \nonumber
    &\phantom{blocked out more} \bar U\bar U^\top X^\top\diag[\phipp(Y, X\bar \mu)] X UU^\top +  \bar U\bar U^\top X^\top\diag[\phipp(Y, X\bar \mu)] X\bar U\bar U^\top  \|_2\\
  \nonumber
    &\le \|\hat \mu - \bar \mu\|_2 \| \Sigma_\beta\inv-UU^\top X^\top \diag[\phipp(Y, X\bar \mu)]XUU^\top\|_2 \\
  \nonumber
    &\phantom{blocked out}+ \| \beta- \hat \mu \|_2\|UU^\top X^\top \diag(R)XUU^\top\|_2 \\
  \nonumber
    &\phantom{blocked out}+ \|\beta - \bar \mu \|_2 \big\{\|\bar U\bar U^\top X^\top\diag[\phipp(Y, X\bar \mu)] X \bar U\bar U^\top\|_2 +  \\
  \nonumber
    &\phantom{blocked out more}2\|\bar U\bar U^\top X^\top\diag[\phipp(Y, X\bar \mu)]X UU^\top\|_2\big\}\\
   \nonumber
   &\textrm{by the triangle inequality.} \\
  \nonumber
    &\le \|\hat \mu - \bar \mu\|_2 \left\{\| \Sigma_\beta\inv\|_2 +\| U\diag(\lambda) V^\top\|_2  \| \diag[\phipp(Y, X\bar \mu)]\|_2 \| V \diag(\lambda) U^\top\|_2\right\} \\
  \nonumber
    &\phantom{blocked out}+ \| \beta- \hat \mu \|_2\|U \diag(\lambda) V^\top \|_2 \| \diag(R)\|_2 \|V \diag(\lambda) U^\top\|_2 \\
  \nonumber
    &\phantom{blocked out}+ \|\beta - \bar \mu \|_2 \big\{\|\bar U  \diag(\bar \lambda) \bar V^\top\|_2 \|\diag[\phipp(Y, X\bar \mu)] \|_2 \| \bar V \diag(\bar \lambda) \bar U^\top\|_2 +  \\
  \nonumber
    &\phantom{blocked out more}2\|\bar U^\top \diag(\bar \lambda) \bar V^\top \|_2 \| \diag[\phipp(Y, X\bar \mu)]\|_2 \| V \diag(\lambda) U^\top\|_2\big\}\\
  \nonumber
    &\textrm{by again using the triangle inequality, and decomposing $X^\top$ into $U\diag(\lambda) V^\top + \bar U \diag(\bar \lambda) \bar V^\top$.} \\
  \nonumber
    &\le \|\hat \mu - \bar \mu\|_2 \big(\| \Sigma_\beta\inv\|_2+\lambda_1^2 \|\phipp\|_\infty\big)
    + \lambda_1^2\| \beta- \hat \mu \|_2 \|R\|_\infty
    + (\bar \lambda_1^2 + 2 \lambda_1 \bar \lambda_1) \|\beta - \bar \mu \|_2 \|\phipp\|_2 ,
\end{align}
where in the last line we have shortened $\phipp(Y, X\bar \mu)$ to $\phipp$ for convenience.

Next noting that $\| \bar \mu - \hat \mu\|_2 \le \bar \lambda_1 c$ for $c:= \frac{\|\phip(Y, X\hat \mu)\|_2 + \lambda_1 \| \bar U^\top\hat \mu\|_2 \| \phipp(Y, A)\|_\infty}{\alpha}$, where $\alpha$ is the strong log concavity parameter of $p(\beta|X, Y)$ (which follows from \Cref{thm:glm_mean_bound}), we can see that $\| R \|_\infty \le \bar \lambda_1 r$ where $r:= (\| U^\top \hat \mu\|_\infty \|\phippp(Y,A)\|_\infty + \lambda_1 c\|\phippp(Y, A)\|_\infty)$.  That $r$ is bounded follows from the assumption that $\log p(y|x, \beta)$ has bounded third derivatives, an equivalent to a Lipschitz condition on $\phi^{\prime \prime}$.   We can next simplify this upper bound to
\begin{align}
  \nonumber
    \|\nabla_\beta \log \hat p(\beta) - \nabla_\beta \log \bar p(\beta)\|_2 &\le
    \bar\lambda_1 c [\|\Sigma_\beta\inv\|_2 + \lambda_1^2\|\phipp\|_\infty] + 
    \lambda_1^2 \bar\lambda_1 r \| \beta - \hat \mu \|_2 + 
    \bar \lambda_1 (\bar \lambda_1 +2 \lambda_1)\|\beta - \bar \mu \|_2 \|\phipp\|_\infty \\
  \nonumber
    &= \bar\lambda_1 \big[ 
    c(\|\Sigma_\beta\inv\|_2 + \lambda_1^2\|\phipp\|_\infty) + 
    \lambda_1^2 r \| \beta - \hat \mu \|_2 + 
    (\bar \lambda_1 +2 \lambda_1)\|\beta - \bar \mu \|_2 \|\phipp\|_\infty \big]\\
  \nonumber
    &\le \bar\lambda_1 \big[ 
    c(\|\Sigma_\beta\inv\|_2 + \lambda_1^2\|\phipp\|_\infty) + 
    \lambda_1^2 r \| \beta - \hat \mu \|_2 + 
    (\bar \lambda_1 +2 \lambda_1)(\|\hat \mu - \bar \mu\|_2 + \|\beta - \hat \mu \|_2) \|\phipp\|_\infty \big]\\
   \nonumber
   &\textrm{by the triangle inequality.} \\
  \nonumber
    &\le \bar\lambda_1 \big[ 
    c(\|\Sigma_\beta\inv\|_2 + \lambda_1^2\|\phipp\|_\infty) + 
    \lambda_1^2 r \| \beta - \hat \mu \|_2 + 
    (\bar \lambda_1 +2 \lambda_1)(\bar \lambda_1 c + \|\beta - \hat \mu \|_2) \|\phipp\|_\infty \big]\\
  \nonumber
    &= \bar\lambda_1 \big[ 
    c(\|\Sigma_\beta\inv\|_2 + \lambda_1^2\|\phipp\|_\infty) +
     c  (\bar \lambda_1^2 +2 \lambda_1\bar\lambda_1)\|\phipp\|_\infty 
    + (\lambda_1^2 r +  (\bar \lambda_1 +2 \lambda_1)\|\phipp\|_\infty) \|\beta - \hat \mu \|_2 \big]\\
  \nonumber
    &= \bar\lambda_1 \big[ 
    c (\|\Sigma_\beta\inv\|_2 + (\lambda_1 + \bar \lambda_1)^2 \|\phipp\|_\infty) +
    (\lambda_1^2 r +  (\bar \lambda_1 +2 \lambda_1)\|\phipp\|_\infty) \|\beta - \hat \mu \|_2 \big] .
\end{align}

Thus, taking the expectation of this upper bound on the norm squared over $\beta$ with respect to $\hat p$ we get
\begin{align}
  \nonumber
    d_{2,\hat p}^2(\hat p, p) &\le \E_{\hat p(\beta)}\left( 
    \bar\lambda_1^2 \left\{ 
    c \left[ \|\Sigma_\beta\inv\|_2 + (\lambda_1 + \bar \lambda_1)^2 \|\phipp\|_\infty\right] +
    \left[\lambda_1^2 r +  (\bar \lambda_1 +2 \lambda_1)\|\phipp\|_\infty\right] \|\beta - \hat \mu \|_2 \right\}^2
    \right)\\
  \nonumber
    &\le 2 \bar \lambda_1^2 \E_{\hat p(\beta)}\left\{ 
    c^2 \left[\|\Sigma_\beta\inv\|_2 + (\lambda_1 + \bar \lambda_1)^2 \|\phipp\|_\infty\right]^2
    + \left[ \lambda_1^2 r +  (\bar \lambda_1 +2 \lambda_1)\|\phipp\|_\infty\right]^2 \|\beta - \hat \mu \|_2^2
    \right\}\\
   \nonumber
    &\textrm{since $\forall a, b \in \mathbb{R}, (a+b)^2 \le 2(a^2+b^2)$} \\
  \nonumber
    &= 2 \bar \lambda_1^2 \left\{ 
    c^2 \left[\|\Sigma_\beta\inv\|_2 + (\lambda_1 + \bar \lambda_1)^2 \|\phipp\|_\infty\right]^2
    + \left[\lambda_1^2 r +  (\bar \lambda_1 +2 \lambda_1)\|\phipp\|_\infty\right]^2 \E_{\hat p(\beta)}[\|\beta - \hat \mu \|_2^2]
    \right\}\\
  \nonumber
    &= 2 \bar \lambda_1^2 \left\{ 
    c^2 \left[\|\Sigma_\beta\inv\|_2 + (\lambda_1 + \bar \lambda_1)^2 \|\phipp\|_\infty\right]^2
    + \left[\lambda_1^2 r +  (\bar \lambda_1 +2 \lambda_1)\|\phipp\|_\infty\right]^2 \trace(\hat \Sigma)
    \right\} .
\end{align}

Next noting that $\bar p$ is strongly $\|\bar \Sigma\|_2\inv$ log-concave, we may apply \Cref{lemm:fisher_W2_bound}, stated below,
to obtain that 
\begin{align}
  \nonumber
    W_2(\hat p, \bar p) &\le \| \bar \Sigma \|_2 \sqrt{
        2 \bar \lambda_1^2 \left\{
        c^2 \left[\|\Sigma_\beta\inv\|_2 + (\lambda_1 + \bar \lambda_1)^2 \|\phipp\|_\infty\right]^2
    + \left[\lambda_1^2 r +  (\bar \lambda_1 +2 \lambda_1)\|\phipp\|_\infty\right]^2 \trace(\hat \Sigma)
    \right\} }\\
  \nonumber
    &\le \sqrt{2} \bar \lambda_1 \| \bar \Sigma \|_2 
    \left\{ 
    c \left[\|\Sigma_\beta\inv\|_2 + (\lambda_1 + \bar \lambda_1)^2 \|\phipp\|_\infty\right]
    + \left[\lambda_1^2 r +  (\bar \lambda_1 +2 \lambda_1)\|\phipp\|_\infty\right] \sqrt{\trace(\hat \Sigma)}
    \right\} ,
\end{align}
which is our desired upper bound.


\begin{theorem}\label{lemm:fisher_W2_bound}
Suppose that $p(\beta)$ and $q(\beta)$ are twice continuously differentiable and that $q$ is $\alpha$-strongly log concave.  Then
$$W_2(p, q) \le \alpha \inv d_{2, p}(p, q),$$
where $W_2$ denotes the $2$-Wasserstein distance between $p$ and $q$.
\end{theorem}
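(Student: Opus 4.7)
The plan is to combine two classical functional inequalities that any $\alpha$-strongly log-concave measure is known to satisfy: the logarithmic Sobolev inequality (LSI) and Talagrand's quadratic transportation-cost inequality ($T_2$). The route passes through relative entropy as an intermediate quantity.

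First, I would rewrite $d_{2,p}(p,q)^2$ as the relative Fisher information
\begin{equation*}
I(p \| q) := \int \left\| \nabla \log \frac{p(\beta)}{q(\beta)} \right\|_2^2 p(\beta) \, d\beta,
\end{equation*}
which is immediate from the definition in \Cref{eqn:fisher_dist} since $\nabla \log(p/q) = \nabla \log p - \nabla \log q$.

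Next, I would invoke the Bakry--\'{E}mery criterion: $\alpha$-strong log-concavity of $q$ implies that $q$ satisfies LSI with the same constant, i.e.\ $H(p\|q) \le \frac{1}{2\alpha} I(p\|q)$, where $H(p\|q) = \int p \log(p/q) \, d\beta$ is the relative entropy. Then I would apply Talagrand's $T_2$ inequality (which, by the Otto--Villani theorem, is a consequence of LSI, and is also directly implied by $\alpha$-strong log-concavity): $W_2(p,q)^2 \le \frac{2}{\alpha} H(p\|q)$. Chaining the two bounds yields $W_2(p,q)^2 \le \alpha^{-2} I(p\|q) = \alpha^{-2} d_{2,p}(p,q)^2$, and taking square roots gives the claim; the case $H(p\|q) = +\infty$ is handled by a standard truncation argument and makes the bound trivially true.

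The main obstacle is that both LSI and $T_2$ are themselves nontrivial classical theorems, so this route trades a clean direct proof for well-known machinery. A self-contained alternative I would keep in reserve is a synchronous-coupling argument along the Langevin flow with invariant measure $q$: $\alpha$-strong log-concavity produces exponential $W_2$-contraction at rate $\alpha$, and differentiating $W_2(p_t, q)$ along the Fokker--Planck evolution lets one bound its time derivative by a constant multiple of $\sqrt{I(p_t \| q)}$; integrating from $0$ to $\infty$ then recovers the inequality without passing through entropy. Either approach requires some care to justify the requisite regularity and differentiation under the integral, which is where the technical bookkeeping lives.
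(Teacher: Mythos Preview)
Your proposal is correct and matches the paper's approach: the paper's own proof is nothing more than a citation to Huggins et al.\ (2018, Theorem~5.2) and Bolley et al.\ (2012, Lemma~3.3 and Proposition~3.10), and your two routes---chaining the Bakry--\'Emery log-Sobolev inequality with Talagrand's $T_2$ inequality, and the synchronous-coupling/Fokker--Planck contraction argument---are precisely the standard arguments those references contain. Both routes give the sharp constant $\alpha^{-1}$, and the regularity bookkeeping you flag is handled in the cited works.
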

\begin{proof}
    This follows from \citet{huggins2018nonasymptotic} Theorem 5.2, or similarly from \citet{bolley2012convergence} Lemma 3.3 and Proposition 3.10.
\end{proof}

\subsection{Proof of bounded asymptotic error}\label{sec:glm_bounded_asymptotic_error}
We here provide a formal statement and proof of \Cref{thm:asymptotic_glm_mean_bound_concise}, detailing the required regularity conditions.

\begin{theorem}[Asymptotic]\label{thm:asymptotic_glm_mean_bound}
Assume $x_i \overset{\mathrm{i.i.d.}}{\sim} p_*$ for some distribution $p_*$ such that $\E_{p_*}[x_i x_i^\top]$ exists and is non-singular with diagonalization 
$\E_{p_*}[x_i x_i^\top]= U_*^\top \diag(\lambda) U_* + \bar U_*^\top \diag(\bar \lambda) \bar U_*$ such that 
    $\mathrm{min}(\lambda)>\mathrm{max}(\bar \lambda)$.  Additionally, for a strictly concave (in its second argument), twice differentiable log-likelihood function $\phi$ with bounded second derivatives (in both arguments) and some $\beta\in \R^D$, let $y_i | x_i \sim \exp\{\phi(y_i, x_i^T\beta) \}$. 
Also, suppose that $\E \|y_i\|_2^2 <\infty$.
Then if $p(\beta)$ is log-concave and positive on $\mathbb{R}^D$, the asymptotic error (in $N$) of the exact relative to approximate maximum a posteriori parameters,
$\hat \mu = \lim_{N\rightarrow \infty} \hat \mu_N$ and $\bar \mu = \lim_{N\rightarrow \infty} \bar \mu_N$ is finite
(where $\hat \mu_N$ and $\bar \mu_N$ are the approximate and exact MAP estimates, respectively, after $N$ data-points),
i.e., $\lim_{n\rightarrow \infty}\|\hat \mu_N -\bar \mu_N\|$ exists and is finite.
\end{theorem}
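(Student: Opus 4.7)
The plan is to establish that both $\hat\mu_N$ and $\bar\mu_N$ converge almost surely to deterministic finite limits $\hat\mu$ and $\bar\mu$; continuity of the norm then yields $\lim_N \|\hat\mu_N - \bar\mu_N\|_2 = \|\hat\mu - \bar\mu\|_2 < \infty$. Both limits will be obtained as unique maximizers of population M-estimation criteria, and the LR-GLM case additionally requires controlling the convergence of the random projection $U_N$ to the population projection $U_*$.

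For the exact MAP, I would analyze $F_N(\beta) := N^{-1}\bigl[\log p(\beta) + \sum_{n=1}^N \phi(y_n, x_n^\top \beta)\bigr]$. The normalized prior contribution vanishes, and under the stated regularity (bounded $\phi''$ providing local Lipschitz envelopes, plus $\E\|y\|_2^2 < \infty$) the strong law of large numbers together with a standard uniform law (exploiting continuity and the compactly supported derivatives of $\phi$) gives $F_N \to F_\infty(\beta) := \E[\phi(y, x^\top \beta)]$ uniformly on compacta almost surely. Strict concavity of $\phi(y,\cdot)$ combined with the non-singularity of $\E[xx^\top]$ makes $F_\infty$ strictly concave and coercive, hence admitting a unique global maximizer $\bar\mu \in \R^D$. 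A standard $\argmax$-continuous-mapping argument (e.g.\ Theorem 5.7 of van der Vaart's \emph{Asymptotic Statistics}) then yields $\bar\mu_N \convas \bar\mu$.

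For $\hat\mu_N$ I would proceed in two stages. First, $N^{-1} X^\top X \convas \E[xx^\top]$ by the SLLN, and the spectral gap $\min(\lambda) > \max(\bar\lambda)$ combined with the Davis--Kahan $\sin\Theta$ theorem gives $U_N U_N^\top \convas U_* U_*^\top$ (the matrix $U_N$ itself is only determined up to rotation within the top-$M$ subspace, but every downstream quantity depends only on $U_N U_N^\top$ or on inner products $U_N^\top \Sigma_\beta U_N$, which converge unambiguously). Second, decompose $\beta = U_N \gamma + \bar U_N \bar\gamma$ as in \Cref{lemm:full-dim-beta-laplace}: the LR-GLM data term depends only on $\gamma$, while the prior couples the two components. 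Profile out $\bar\gamma$ via $\bar\gamma^*_N(\gamma) := \argmax_{\bar\gamma} \log p(U_N \gamma + \bar U_N \bar\gamma)$; this subproblem is $N$-independent and, under log-concavity and positivity of the prior (with uniqueness holding in the strictly log-concave regime targeted by the paper), its solution depends continuously on $(U_N, \gamma)$. The normalized profile objective in $\gamma$ then converges uniformly on compacta to $G(\gamma) := \E[\phi(y, x^\top U_* \gamma)]$, which is strictly concave on $\R^M$ since $U_*^\top \E[xx^\top]\, U_* = \diag(\lambda) \succ 0$. Hence $\gamma^*_N \convas \gamma^* := \argmax G$, and assembling, $\hat\mu_N = U_N \gamma^*_N + \bar U_N \bar\gamma^*_N(\gamma^*_N) \convas U_* \gamma^* + \bar U_* \bar\gamma^*_\infty(\gamma^*) =: \hat\mu$, a finite vector.

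The main obstacle is that the LR-GLM criterion is degenerate in the $\bar U_N$ directions once normalized by $N$: the prior, which is the sole source of identifiability along those directions, has vanishing weight in the normalized objective, so a naive argmax-continuity argument at the full $D$-dimensional level breaks down. I plan to sidestep this by performing the $\bar\gamma$-profile step at finite $N$ before normalizing, reducing to a classical M-estimation in the $M$-dimensional coordinate $\gamma$ whose population criterion is strictly concave. A secondary technical issue is the rotation/sign ambiguity in the singular vectors $U_N$; phrasing all convergences in terms of projection matrices eliminates this ambiguity. With both ingredients in place, finiteness of $\|\hat\mu - \bar\mu\|_2$ and convergence of $\|\hat\mu_N - \bar\mu_N\|_2$ follow immediately.
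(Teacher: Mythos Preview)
Your proposal is correct and follows essentially the same route as the paper: establish almost-sure convergence of each MAP sequence separately, with the approximate MAP handled by (i) Davis--Kahan to get $U_N U_N^\top \to U_* U_*^\top$, (ii) a uniform law of large numbers / Glivenko--Cantelli argument for the $M$-dimensional projected criterion, and (iii) reading off the $\bar U$-component from the prior alone. The only notable difference is that the paper invokes Doob's consistency theorem for $\bar\mu_N$, whereas you run an M-estimation/argmax-continuity argument directly; your version is more self-contained but needs coercivity of $\E[\phi(y,x^\top\beta)]$, which you should check follows from the assumptions (strict concavity plus non-singularity of $\E[xx^\top]$ gives it on rays, but a line is worth adding). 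Your treatment of the $U_N$-versus-$U_*$ ambiguity via projection matrices is if anything cleaner than the paper's, which silently switches between the two bases; and your remark that uniqueness of the $\bar\gamma$-profile requires strict (not merely weak) log-concavity of the prior correctly flags a point the paper also glosses over.
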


\begin{proof}
Before beginning, let $\pr$ denote a Borel probability measure on the sample space on which our random variables, $\{x_i\}$ and $\{y_i\}$, are defined such that these random variables are distributed as assumed according to $\pr$.
In what follows we demonstrate the asymptotic error is finite $\pr$-almost surely.
To this end, it suffices to show that $\hat \mu_N \overset{a.s.}{\rightarrow} \hat \mu$ and   $\bar \mu_N \overset{a.s.}{\rightarrow} \bar \mu$ for some $\hat \mu, \bar \mu$ in $\mathbb{R}^D$.

\subsection*{Strong convergence of the exact MAP ($\bar \mu_N \stackrel{a.s.}{\rightarrow} \bar \mu$)}
This follows from Doob's consistency theorem \citep[Theorem 10.10]{van2000asymptotic}. The only nuance required in the application of this theorem here is that we must accommodate the regression setting.
However by constructing a single measure $\pr$ governing both the covariates and responses, this simply becomes a special case of the usual theorem for unconditional models.

\subsection*{Strong convergence of the approximate MAP ($\hat \mu_N \stackrel{a.s.}{\rightarrow} \hat \mu$)}
In contrast to the strong consistency of $\bar \mu_N$, showing convergence of $\hat\mu_N$ requires more work.
This is because we cannot rely on standard results such as Bernstein--Von Mises or Doob's consistency theorem, which require correct model specification.  
Since we have introduced the likelihood approximation $\tilde p(y|x,\beta) \neq p(y|x,\beta)$, the vector $\hat \mu_N$ is the MAP estimate under a misspecified model.

We demonstrate almost sure convergence  in two steps; first we show that $U_*^\top \hat \mu_N$ converges almost surely to some $\gamma^* \in \R^M$; 
then we show that $\hat \mu_N = U_* U_*^\top \hat \mu_N + \bar U_N \bar U_N^\top \hat \mu_N$ must converge as a result. 
Since $U_N U_N^\top \overset{a.s.}{\rightarrow} U_* U_*^\top$ (as follows from entry-wise almost sure convergence of 
$\frac{1}{N} X^\top X \rightarrow \E_{p_*}[x_i x_i^\top]$ and the Davis--Kahan Theorem \citep{davis1970rotation}), 
this guarantees strong convergence of $\hat \mu_N =U_N U_N^\top \hat \mu_N + \bar U_N \bar U_N^\top \hat \mu_N$.

\textbf{Part I: strong convergence of the projected approximate MAP, $U_*\hat \mu_N \stackrel{a.s.}{\rightarrow} \gamma^*$}

Let $U_*\in \R^{D,M}$ be the top $M$ eigenvectors of $\E_{p_*}[x_ix_i^\top]$, and recall that by assumption for any $y, \phi(y, x_i^T\beta)$ is a strictly concave function of $x_i^\top \beta$, 
    in the sense that for any $y$ and any $b,b^\prime$ in $\R$ and $t$ in $(0,1)$ with $b \ne b^\prime$, $\phi(y,tb+(1-t)b^\prime) > t\phi(y,b)+(1-t)\phi(y,b^\prime)$.  
    Then by \Cref{lemma:unique_max} we have that there is a unique maximizer $\gamma^* = \argmax_{\gamma \in R^M} \E[\phi(y, x^\top U_*\gamma)]$

We next note that the Hessian of the expected approximate negative log likelihood with respect to $\gamma$ is positive definite everywhere,
$$
\nabla_\gamma^2 -\E_{y\sim p(y|x,\beta),x\sim p_*}[\phi(y, x^\top U_*\gamma)]=-\E[\big( \nabla_\gamma \phi^\prime(y, x^\top U_*\gamma) \big)x^\top U_*]=-U_*^\top \E[x\phi^{\prime\prime}(y, x^\top U_*\gamma)x^\top ]U_*\succ 0
$$
    since the strict log concavity and twice differentiability of $\phi$ ensure that $-\E[x\phi^{\prime\prime}(y,x^\top U_*\gamma)x^\top] \succ 0$.

Now consider any compact neighborhood $K\subset \R^M$ containing $\gamma^*$ as an interior point.  
Then, by \Cref{lemma:P_GC_gamma}  the set $\mathcal{F} =\{f_\gamma: X \times Y \rightarrow \R,  (x, y)\mapsto \phi(y, x^\top U_* \gamma) | \gamma \in K \}$ is $\pr$-Glivenko--Cantelli.  
As such $\sup_{f_\gamma \in \mathcal{F}} | \frac{1}{N} \sum_{i=1}^N f_\gamma(x_i, y_i) - \E[f_\gamma(x_i, y_i)]| \overset{a.s.}{\rightarrow} 0$, 
that is to say, the empirical average log-likelihood converges uniformly to its expectation across all $\gamma \in K$. 
    As a result, we have that for $\gamma_N := \argmax_{\gamma \in K} \log \tilde p (U_*\beta = \gamma|X, Y) = \argmax_{\gamma \in K} \frac{1}{N} \big[\log p(U_*^T \beta=\gamma) + \sum_{i=1}^N \phi(y_i, x_i^\top U_*\gamma)\big]$, $\gamma_N\stackrel{a.s.}{\rightarrow} \gamma^*$.

It remains in this part only to show that convergence of the approximate MAP parameter within this subset $K$ implies convergence of $U_*^\top \hat \mu$,
the approximate MAP parameter (across all of $\R^M$).  However, this follows immediately from the strict log concavity of the posterior; 
because $\gamma^* \in K^\circ$, for $N$ large enough each $\gamma_N \in K^\circ$ and we may construct a sub-level set such that $\gamma_N \in C_N \subset K$
such that $\forall \gamma \notin C_N, \log p(U_*^\top\beta =\gamma) + \sum_{i=1}^N \phi(y_i, x_i^\top U_*\gamma) < \log p(U_*^\top\beta =\gamma_N) + \sum_{i=1}^N \phi(y_i, x_i^\top U_*\gamma_N)$.

\textbf{Part II: convergence of $\bar U_* \bar U_*^\top \hat \mu_N + U_*\gamma$}

Using the result of Part I,  we can write that $\hat \mu_N = U_*U_*^\top \hat \mu_N + \bar U_* \bar U_*^\top \hat \mu_N \rightarrow U_* \gamma^* +  \bar U_* \bar U_*^\top \hat \mu_N$. 
However, since $\bar U_* \bar U_*^\top \beta \perp X,Y | U_*^\top \beta$ under $\pr$, convergence of $U_*^\top \hat \mu_N \rightarrow \gamma^*$ 
implies convergence of $\argmax_{\bar U_*^\top \beta} \tilde p( \bar U_*^\top \beta | U_*^\top \beta = U_*^\top \hat \mu_N, X, Y) = \argmax_{\bar U_*^\top \beta} \tilde p( \bar U_*^\top \beta | U_*^\top \beta=U_*^\top)$ 
    to some $\bar U_*^\top \hat \mu_N$ since continuity of $p(\beta)$ and $\tilde p(Y|X, \beta)$ imply continuity of the arg-max.  
    Thus both $\hat \mu_N$ and $\bar \mu_N$ converge, guaranteeing convergence of the asymptotic error.
\end{proof}

\begin{lemma}\label{lemma:unique_max}
    For any $\phi(\cdot, \cdot)$ which is strictly concave in its second argument, if there is a global maximizer $\beta^*=\argmax_{\beta\in\R^D} V(\beta)=\E_{x\sim p_*, y\sim p(y|x,\beta)}[\phi(y, x^\top \beta)]$, then there is a unique global maximizer, 
    $$\gamma^*=\argmax_{\gamma\in\R^M} V(U_*\gamma)$$
\end{lemma}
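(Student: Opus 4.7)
The plan is to reduce the claim to a strict concavity statement for the projected objective $\tilde V(\gamma) := V(U_*\gamma) = \E[\phi(y, x^\top U_*\gamma)]$, from which uniqueness is immediate and existence follows from a compact--superlevel-set argument bootstrapped off the assumed maximizer $\beta^*$.

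First I would show that $\tilde V$ is strictly concave on $\R^M$. Fix $\gamma_1 \neq \gamma_2$ in $\R^M$ and $t \in (0,1)$. Because the columns of $U_*$ are orthonormal and thus linearly independent, $U_*(\gamma_1 - \gamma_2) \neq 0$. By the earlier hypothesis that $\E_{p_*}[xx^\top]$ is nonsingular, the distribution of $x$ cannot be supported on the hyperplane $\{x : x^\top U_*(\gamma_1-\gamma_2) = 0\}$, so
\[
\pr\!\left(x^\top U_*\gamma_1 \neq x^\top U_*\gamma_2\right) > 0.
\]
On this event, strict concavity of $\phi(y,\cdot)$ in its second argument yields the strict pointwise inequality $\phi(y, x^\top U_*(t\gamma_1 + (1-t)\gamma_2)) > t\phi(y, x^\top U_*\gamma_1) + (1-t)\phi(y, x^\top U_*\gamma_2)$; off this event we have weak concavity. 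Integrating preserves the strict inequality since the strict portion has positive measure, giving $\tilde V(t\gamma_1 + (1-t)\gamma_2) > t\tilde V(\gamma_1) + (1-t)\tilde V(\gamma_2)$. Uniqueness of any maximizer is then immediate: two distinct maximizers would contradict strict concavity at their midpoint.

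For existence of $\gamma^*$, I would argue that $V$ itself is strictly concave on $\R^D$ by exactly the same argument (with the full $\beta$-space in place of the subspace), and combine this with the assumption that $V$ attains its max at $\beta^*$. Strict concavity plus an attained max forces compact superlevel sets: if $\{V \ge c\}$ were unbounded for some $c < V(\beta^*)$, convexity would yield a ray $\beta^* + td$ ($t \ge 0$) entirely in that set; concavity along the ray together with the upper bound $V(\beta^*)$ forces $V(\beta^* + td)$ to be non-decreasing and sandwiched between $V(\beta^*)$ and itself, so constant, contradicting strict concavity along the direction $d$. Hence $\{V \ge c\}$ is compact. Its preimage under the injective linear map $\gamma \mapsto U_*\gamma$ is then a bounded closed subset of $\R^M$, so $\tilde V$ attains its supremum by continuity on this compact sublevel set.

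The main obstacle is really just the first step: setting up strict concavity of $\tilde V$ from strict concavity of $\phi$ in its scalar argument, since the composition collapses the $D$-dimensional parameter onto a scalar $x^\top U_*\gamma$ and one needs the nonsingularity of $\E[xx^\top]$ to rule out degenerate directions. Once that is in hand, uniqueness is a one-line consequence and existence is a standard convex-analysis tidying-up.
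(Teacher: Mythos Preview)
Your approach is the same as the paper's: show the projected objective $W(\gamma) := V(U_*\gamma)$ is strictly concave with bounded superlevel sets, hence has a unique maximizer. You are in fact more thorough than the paper in justifying strict concavity of $W$ (and of $V$), explicitly invoking the ambient nonsingularity of $\E[xx^\top]$ to ensure that distinct $\gamma$ yield distinct $x^\top U_*\gamma$ with positive probability; the paper simply asserts that $W$ is strictly concave without comment.

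There is one slip in your existence step. You claim that along a ray $\beta^* + td$ contained in an unbounded superlevel set $\{V \ge c\}$, concavity together with the upper bound $V(\beta^*)$ forces $t \mapsto V(\beta^* + td)$ to be ``non-decreasing and sandwiched between $V(\beta^*)$ and itself, so constant.'' This is backwards: since $\beta^*$ is the global maximizer, the restriction is strictly \emph{decreasing} for $t>0$, and being bounded below by $c$ does not by itself yield constancy. The correct contradiction is that for a strictly concave one-variable function with its maximum at $t=0$, the secant slopes $\big(V(\beta^*+td)-V(\beta^*)\big)/t$ are negative and strictly decreasing in $t$, hence bounded away from $0$, so $V(\beta^*+td) \to -\infty$, violating $V \ge c$. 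The paper sidesteps this entirely with a one-line containment: $U_*\, W^{-1}([a,\infty)) \subset V^{-1}([a,\infty))$, so once $V$ has bounded superlevel sets (which it asserts from concavity plus the attained maximum), boundedness transfers to $W$ because $U_*$ is an isometry onto its range. That route is cleaner and avoids the ray argument altogether.
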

\begin{proof}
    We first note that $V(\cdot)$ must have bounded sub-level sets.  Thus $W(\cdot):=V(U_*\cdot)$ must also have bounded sub-level sets 
    since $V^{-1}([a,\infty])=\{\beta | V(\beta)\ge a\}\supset \{ \beta|\exists \gamma \in \R^M\  s.t.\  \beta = U_*\gamma\  and \  V(U_*\gamma)\ge a \}=U_* W^{-1}([a, \infty])$.  
    Thus, since $W$ is strictly concave and has bounded sub-level sets, it has a unique maximizer.
\end{proof}

\begin{lemma}\label{lemma:P_GC_gamma}
    Let $K\subset \R^M$ be compact and denote by $X$ and $Y$ the domains of the covariates and responses, respectively. Then under the assumptions of \Cref{thm:asymptotic_glm_mean_bound}, the set $\mathcal{F} =\{f_\gamma: X \times Y \rightarrow \R,  (x, y)\mapsto \phi(y, x^\top U \gamma) | \gamma \in K \}$ is $\pr$-Glivenko--Cantelli.
\end{lemma}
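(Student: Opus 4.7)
The plan is to verify that $\mathcal{F}$ satisfies the hypotheses of the standard Glivenko--Cantelli theorem for classes with continuous parameter dependence: (i) the index set $K$ is compact; (ii) for each $(x,y)$, the map $\gamma \mapsto f_\gamma(x,y) = \phi(y, x^\top U\gamma)$ is continuous; and (iii) there exists an envelope $F$ with $|f_\gamma(x,y)| \le F(x,y)$ for all $\gamma \in K$, and $\E F < \infty$.  Granted these, one constructs $L^1(\pr)$-brackets of arbitrarily small width by centering on the points of a finite $\delta$-net $\{\gamma_j\}\subset K$ (available by compactness) and taking widths given by the oscillation $\Delta_j(x,y) := \sup_{\|\gamma-\gamma_j\|\le \delta} |f_\gamma(x,y) - f_{\gamma_j}(x,y)|$, which is dominated by $2F$ and tends pointwise to zero as $\delta \to 0$ by (ii).  Dominated convergence then gives $\E \Delta_j \to 0$, so $\mathcal{F}$ has finite $L^1(\pr)$-bracketing numbers at every scale $\epsilon > 0$, which suffices for $\pr$-Glivenko--Cantelli.

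Part (i) is given and (ii) is immediate, since $\phi(y,\cdot)$ is twice differentiable (hence continuous) and $\gamma \mapsto x^\top U\gamma$ is linear.  For (iii), I would use the boundedness of the second derivatives of $\phi$ to obtain quadratic growth: a second-order Taylor expansion around $(0,0)$ gives $|\phi(y,a)| \le c_0 + c_1(\|y\| + |a|) + c_2(\|y\|^2 + a^2)$, where the $c_i$ depend only on $\phi(0,0)$, $\nabla\phi(0,0)$, and the uniform Hessian bound.  Substituting $a = x^\top U\gamma$ and bounding $|a| \le R\|x\|$ with $R := \sup_{\gamma \in K}\|\gamma\|_2 < \infty$ (by compactness of $K$) yields an envelope of the form $F(x,y) = C(1 + \|x\|^2 + \|y\|^2)$.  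Integrability of $F$ follows from $\E\|y_i\|^2 < \infty$ (assumed) together with $\E\|x_i\|^2 < \infty$, which is in turn implied by the assumed existence of $\E_{p_*}[x_i x_i^\top]$.

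The main obstacle is item (iii): tying the bounded-second-derivative hypothesis on $\phi$ to the moment assumptions on $(x_i, y_i)$ in order to produce an integrable envelope of the right (quadratic) growth.  The remaining steps---continuity in $\gamma$, compactness of $K$, and the bracketing construction via a $\delta$-net---fit the standard compactness-plus-continuity template and are routine once the envelope is secured.
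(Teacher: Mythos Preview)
Your proposal is correct and follows a standard ``compact parametrization'' template: compactness of $K$, pointwise continuity of $\gamma\mapsto f_\gamma(x,y)$, and an integrable envelope $F$ together yield finite $L^1(\pr)$-bracketing numbers (via dominated convergence on the local oscillations), hence $\pr$-Glivenko--Cantelli. The envelope you build from the quadratic growth of $\phi$ under its bounded-Hessian hypothesis, combined with $\E\|x\|_2^2<\infty$ and $\E\|y\|_2^2<\infty$, is legitimate.

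The paper takes a closely related but more quantitative route, following Example~19.7 rather than the continuity-plus-envelope argument: it establishes a \emph{Lipschitz} inequality $|f_\gamma(x,y)-f_{\gamma'}(x,y)|\le m(x,y)\,\|\gamma-\gamma'\|_2$ with an explicit integrable slope $m(x,y)=C(\|x^\top U\|_2^2+\|x^\top U\|_2\|y\|_2)$, obtained by expanding $\phi'$ via the fundamental theorem of calculus and invoking the bound on $\phi''$. This immediately controls the bracketing numbers by the covering numbers of $K$. What the paper's approach buys is an explicit rate on $N_{[]}(\epsilon,\mathcal F,L^1(\pr))$ (polynomial in $1/\epsilon$), which would matter for Donsker-type conclusions, whereas your dominated-convergence argument gives only the qualitative $\E[\omega_\delta]\to 0$. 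For the Glivenko--Cantelli conclusion at hand, both are adequate; your version is arguably cleaner because it avoids computing the Lipschitz constant and needs only continuity in $\gamma$, while the paper's version is sharper and stays closer to the cited reference.
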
 
\begin{proof} 
This result follows from Theorem $19.4$ in \cite{van2000asymptotic}, and builds from example $19.7$ of the same reference;
in particular, the condition of bounded second derivatives of $\phi$ implies that for any $f_\gamma, f_{\gamma^\prime}$ in $\mathcal{F}$ and 
$x$ in $X$, $y$ in $Y$, we have $|f_\gamma(x,y) - f_{\gamma^\prime}(x,y)| \le C \|x\|_2^2$.  
The previous condition is sufficient to ensure finite bracketing numbers, and the result follows. 
Notably, in keeping with example $19.7$ we have that for all $x, y$ and for all $\gamma$ and $\gamma^\prime$ in $K$, 
\begin{equation}
\begin{split}
    | f_\gamma(x, y) - f_{\gamma^\prime}(x,y)| &= | \int_{x^\top U\gamma^\prime}^{x^\top U\gamma} \phi^\prime(y, a)da| \\
    &= | \int_{x^\top U\gamma^\prime}^{x^\top U\gamma} \phi^\prime(y, x^\top U\gamma^\prime)  + 
       \int_{x^\top U\gamma^\prime}^{a} \phi^{\prime\prime}(y, b)db\ da| \\
    &\le \| x^\top U(\gamma -\gamma^\prime) \phi^{\prime}(y, x^\top U \gamma^\prime) \|_2 +
       \frac{1}{2} \|x^\top U(\gamma -\gamma^\prime)\|_2^2 \sup_{a\in \R}\phi^{\prime\prime}(y,a)\\
    &\le \| x^\top U\|_2 (\|y\|_2 + \|x^\top U\|_2 \| \gamma^\prime \|_2 ) \phi^{\prime \prime}_{\mathrm{max}} \|\gamma -\gamma^\prime\|_2 +
       \frac{1}{2} \|x^\top U\|_2^2 \|\gamma -\gamma^\prime\|_2^2 \phi^{\prime \prime}_{\mathrm{max}}\\
    &\le \big[ \frac{3}{2} \| x^\top U\|_2^2 \mathrm{diam}(K) \phi^{\prime \prime}_{\mathrm{max}} + 
       \|x^\top U\|_2 \|y\|_2 \phi^{\prime \prime}_{\mathrm{max}} \big] \|\gamma -\gamma^\prime\|_2\\
    &\le C ( \| x^\top U\|_2^2 + \|x^\top U\|_2 \|y\|_2 ) \|\gamma -\gamma^\prime\|_2,
\end{split}
\end{equation}
    where in the first and second lines we use the fundamental theorem of calculus, and in the fourth and fifth lines we rely on the boundedness of the second derivatives 
    of $\phi$ and that the compactness subsets of $\R^M$ implies boundedness.  In the final line $C$ is an absolute constant.

    Finally, we note that $\E_\pr \|x^\top U\|_2^2 <\infty$ since $\E_\pr \|x^\top U\|_2^2 = \E_\pr x^\top U U^\top x  <\E_\pr x^\top x =\mathrm{Tr}(\E_{p_*} x x^\top) < \infty$,
    and by Cauchy Schwartz, $\E_\pr \|x^\top U\|_2 \|y\|_2 \le \sqrt{\E_\pr \|x^\top U\|_2^2 \E_\pr\|y\|_2^2}\le \infty$. 
    This confirms (as in example $19.7$ \cite{van2000asymptotic}) that for all $\epsilon >0$, the $\epsilon$-bracketing number of $\mathcal{F}$ is finite.
    By Theorem 19.4 of \cite{van2000asymptotic}, this proves that $\mathcal{F}$ is $\pr$-Glivenko-Cantelli.
\end{proof} 

\subsection{Factorized Laplace approximations underestimate marginal variances}\label{sec:bad_marginals}
We here illustrate that the factorized Laplace approximation underestimates marginal variances. Consider for simplicity the case of a bivariate Gaussian with 
\(
\Sigma=
 \begin{bmatrix}
  a  & b \\
  b & c
 \end{bmatrix},\ \ \ 
\)
for which the Hessian evaluated anywhere is
\(
 \Sigma\inv =
 \frac{1}{ac-b^2} \begin{bmatrix}
  c & -b \\
  -b & a
 \end{bmatrix}.
\)
Ignoring off diagonal terms and inverting to approximate $\Sigma_N$, as is done by a diagonal Laplace approximation, yields:
\(
 \tilde \Sigma=
 \begin{bmatrix}
     a -\frac{b^2}{c} & 0 \\
     0 & c-\frac{b^2}{a}
 \end{bmatrix} .
\)
This approximation reports marginal variances which are lower than the exact marginal variances.

That this approximation underestimates marginal variances in the more general $D>2$ dimensional case may be easily seen from considering the block matrix inversion of $\Sigma$, with blocks of dimension $1\times1$,\ $(D-1)\times1$,\ $1 \times (D-1)$\ and\ $(D-1) \times (D-1)$, and noting that the Schur complement of a positive definite covariance matrix will always be positive definite.

\section{LR-MCMC} \label{sec:lr_mcmc}

We provide the LR-MCMC algorithm for performing fast MCMC in generalized linear models with low-rank data approximations.

\begin{algorithm*}[!ht]
  \caption{LR-MCMC for Bayesian inference in GLMs with low-rank data approximations.}\label{alg:fast_laplace_mcmc}
\begin{algorithmic}[1]
  \InitThreeCols
  \Phase {{\bfseries Input:} prior $p(\beta)$, data $X \in \R^{N,D}$, rank $M\ll D$, GLM mapping $\phi$, MCMC transition kernel $q(\cdot, \cdot)$, number of MCMC iterations $T$. Time and memory complexities that are not included depend on the specific choice of MCMC transition kernel.}
  \ThreeHeads{Pseudo-Code}{Time Complexity}{Memory Complexity}
  \Phase{Data preprocessing --- $M$-Truncated SVD }
  \LeftMidRight{\State $U, \diag(\mathbf{\lambda}), V := \operatorname{truncated-SVD}(X^\top , M)$}{$O(NDM)$}{$O(NM+ DM)$}
  \LeftMidRight{\State $X_U = X U$}{$O(NM)$}{$O(NDM)$}
  \vspace{.2cm}
  \Phase{Propose $\beta^{(t)} \in \mathbb{R}^D$, compute likelihood}
    \LeftMidRight{\State $\beta^{(t)} \sim q(\beta^{(t)}, \beta^{(t-1)})$}{---}{---} \label{line:transition}
    \LeftMidRight{\State $\mathcal{L}_t := \sum_{i=1}^N \phi(y_i, x_i^\top UU^\top \beta^{(t)}) + \log p(\beta^{(t)})$}{$O(1)$}{$O(NM + MD)$}
  \vspace{.2cm}
  \Phase{Accept or Reject}
  \LeftMidRight{\State Acceptance probability $p_A := \min \left(1, \frac{\mathcal{L}_t}{\mathcal{L}_{t-1}}\right)$}{$O(1)$}{$O(1)$}
  \LeftMidRight{\State Accept $\beta^{(t)}$ with probability $p_A$ }{$O(1)$}{$O(1)$}  
  \vspace{.2cm}
  \Phase{Repeat steps 3-6 for $T$ iterations}
 \end{algorithmic}
\end{algorithm*}
The transition in \Cref{line:transition} may additionally benefit from the \methodname approximation.
In particular, widely used algorithms  such as Hamiltonian Monte Carlo and the No-U-Turn Sampler rely on many $O(ND)$-time likelihood and gradient evaluations, the cost of which can be reduced to $O(NM + DM)$ with \methodname.
An implementation of this approximation is given in the \texttt{Stan} model in \Cref{sec:stan_code} with performance results in \Cref{fig:comp_trade_off_hmc,fig:means_and_variances_HMC}.

\section{LR-Laplace with non-Gaussian priors}\label{sec:lr_laplace_general}
As discussed in the main text, we can maintain computational advantages of \methodname even when we have non-Gaussian priors.  This admits the procedure provided in \Cref{alg:fast_laplace_other_priors}.  
\begin{algorithm*}[!ht]
    \caption{LR-Laplace for Bayesian inference in GLMs with low-rank data approximations and twice differentiable prior. Time and memory complexities which are not included depend on the choice of prior and optimisation method, which can be problem specific.}\label{alg:fast_laplace_other_priors}
\begin{algorithmic}[1]
    \InitThreeCols

    \Phase {{\bfseries Input:} twice differentiable prior $p(\beta)$, data $X \in \R^{N,D}$, rank $M\ll D$, GLM mapping $\phi$ with $\phi^{\prime\prime}$ (see \Cref{eqn:mapping_fcn,sec:fast_laplace_approximations})}
    \ThreeHeads{Pseudo-Code}{Time Complexity}{Memory Complexity}
    \Phase{Data preprocessing --- $M$-Truncated SVD }
    \LeftMidRight{\State $U, \diag(\mathbf{\lambda}), V := \operatorname{truncated-SVD}(X^T, M)$}{$O(NDM)$}{$O(NM+ DM)$}
    \vspace{.2cm}
    \Phase{Optimize to find approximate MAP estimate (in $D$-dimensional space)}
    \LeftMidRight{\State $\hat \mu := \argmax_{\mu \in \R^D} \sum_{i=1}^N \phi(y_i, x_i U U^\top \mu)+\log p(\beta=\mu)$}{---}{---} \label{line:find_MAP}
    \vspace{.2cm}
    \Phase{Compute approximate posterior covariance\footnotemark}
    \LeftMidRight{\State $\hat \Sigma\inv := -\nabla_\beta^2 \log p_\beta(\hat \mu) - U U^\top  X^\top  \diag( \phipp_{\hat \mu})XUU^\top$}{---}{---}
    \LeftMidRight{\State $K := [\nabla_\beta^2 \log p(\beta)|_{\beta=\hat \mu}]\inv$}{---}{---} \label{line:K}
    \LeftMidRight{\State $\hat \Sigma := -K + KU 
        \big(
            [ U^\top  X^\top  \diag( \phipp_{\hat \mu})XU ]\inv + U^\top K U
        \big)\inv U^\top K
    $}{---}{---}
    \vspace{.2cm}
    \Phase{Compute variances and covariances of parameters}
    \LeftMidRight{\State $\mathrm{Var}_{\hat p}(\beta_i)=e_i^\top  \hat \Sigma e_i$}{---}{---}
    \LeftMidRight{\State $\mathrm{Cov}_{\hat p}(\beta_i, \beta_j)=e_i^\top  \hat \Sigma e_j$}{---}{---}
  \end{algorithmic}
\end{algorithm*}
\footnotetext{To keep notation concise we use $\phipp_{\hat \mu}$ to denote $\phipp(Y, XUU^\top\hat \mu)$}

In order for this more general LR-Laplace algorithm to be computationally efficient, we still require that the prior have some properties which can accommodate efficiency.  In particular \Cref{line:K} demands that the Hessian of the prior is computed and inverted, as will true even in the high-dimensional setting when, for example, the prior factorizes across dimensions.  Additionally, properties of the prior such as log concavity will facilitate efficient optimisation in \Cref{line:find_MAP}.

}

\end{document}